 \definecolor{navy}{RGB}{0,0,128}
 \definecolor{mygreen}{RGB}{1, 90, 32}
  \definecolor{altblue}{RGB}{0,73,114}
    \definecolor{myred}{rgb}{.75,0,0}
     \definecolor{myblue}{rgb}{0.05,0.35,0.9}
\appto\appendix{\addtocontents{toc}{\protect\setcounter{tocdepth}{1}}}
\newcommand{\RN}[1]{  \textup{\uppercase\expandafter{\romannumeral#1}}}
\newtheoremstyle{break}
  {\topsep}{\topsep}%
  {\itshape}{}%
  {\bfseries}{}%
  {\newline}{}%
\theoremstyle{break}
\newtheorem{claim}{Claim}
\newtheorem{definition}{Definition}
\newtheorem*{proposition*}{Proposition}
\newtheorem{theorem}{Theorem}
\newtheorem{observation}{Observation}
\newtheorem*{theorem*}{Theorem}
\newtheorem{lemma}{Lemma}
\newtheorem{assumption}{Assumption}
\theoremstyle{remark}
\theoremstyle{remark}
\newcounter{example}[section]
\newenvironment{example}[1][]{\refstepcounter{example}\par\medskip
   \noindent \textbf{Example~\theexample. #1} \rmfamily}{\medskip $\square$}
\begin{document}
\pagenumbering{gobble}
\title{Robust Performance Evaluation of\\ Independent and Identical Agents\footnote{ An earlier draft appears as the first chapter of my dissertation at the University of Pennsylvania and an abstract appears in \textit{EC '21: Proceedings of the 22nd ACM Conference on Economics and Computation}. I thank my dissertation committee  --- George Mailath, Aislinn Bohren, Steven Matthews, and Juuso Toikka --- for their time, guidance, and encouragement, and Gabriel Carroll for thorough review of an earlier draft. I thank Nageeb Ali, Gorkem Bostanci, Nima Haghpanah, Jan Knoepfle, Rohit Lamba, Natalia Lazzati, Sherwin Lott, Guillermo Ordoñez, Andrew Postlewaite, Doron Ravid, Ilya Segal, Carlos Segura-Rodriguez, Ron Siegel, Ina Taneva, Naomi Utgoff, Rakesh Vohra, Lucy White, Kyle Woodward, and Huseyin Yildirim for helpful comments. Finally, I thank numerous seminar audiences and participants at Seminars in Economic Theory, the 2021 North American Summer Meeting of the Econometric Society, and the 2021 European Economic Association/Econometric Society Meeting.} }

\author{Ashwin Kambhampati\footnote{
	Department of Economics, United States Naval Academy; \url{kambhamp@usna.edu}.}}
\date{\today}
\maketitle
\thispagestyle{empty}

\begin{abstract}

A principal provides nondiscriminatory incentives for independent and identical agents. The principal cannot observe the agents' actions, nor does she know the entire set of actions available to them. It is shown, very generally, that any worst-case optimal contract is nonaffine in performances. In addition, each agent's pay must depend on the performance of another. In the case of two agents and binary output, existence of a worst-case optimal contract is established and it is proven that any such contract exhibits \textit{joint performance evaluation} --- each agent's pay is strictly increasing in the performance of the other. The analysis identifies a fundamentally new channel leading to the optimality of nonlinear team-based incentive pay.   
\end{abstract}

\newpage
\pagenumbering{gobble}
\tableofcontents

\pagenumbering{gobble}

\newpage 
\pagenumbering{arabic}

\begin{quote}

\textit{``The incentive compensation scheme that is ``correct" in one situation will not in general be correct in another. In principle, there could be a different incentive structure for each set of environmental variables. Such a contract would obviously be prohibitively expensive to set up; but more to the point, many of the relevant environmental variables are not costlessly observable to all parties to the contract. Thus, a single incentive structure must do in a variety of circumstances. The lack of flexibility of the piece rate system is widely viewed to be its critical shortcoming: the process of adapting the piece rate is costly and contentious."} \\ --- \cite{NalebuffStiglitz_1983}
\end{quote}

\section{Introduction}
In the canonical moral hazard in teams model, a principal chooses a contract to incentivize a group of agents. Individual actions are unobservable, but stochastically affect observable individual performance. The optimal (Bayesian) contract thus exploits the statistical relationship between actions and performance indicators. 

The canonical model has generated numerous economic insights relevant for policy analysis and management practice (\cite{holmstrom2017pay}). However, it also has some well-known drawbacks. First, in practice, a few common forms of contracts, such as linear contracts and nonlinear bonus contracts, are used in a wide range of scenarios in which there is no compelling reason for there to be a common statistical justification. Second, managers may not possess well-defined prior beliefs over their agents' production environment, limiting the applicability of the model's practical guidance. 

In response to these issues, an emerging literature in contract theory takes a non-Bayesian approach to the moral hazard in teams problem. In this literature, it is assumed that, while the principal may know about some actions her agents can take, she is uncertain about other actions that might be available to them. Hence, she chooses a contract that yields her maximal worst-case expected profits when considering \textit{all} possible unknown actions available to the agents. A general finding is that contracts that are linear in performance indicators provide the best possible profit guarantees (see, for instance, \cite{Carroll_2015}, \cite{DaiToikka_2018}, and \cite{WaltonCaroll_2022}). 

One response to this striking and influential result is that the set of environments the principal considers is too ``large" relative to the Bayesian literature. In practice, a manager may desire her contract to be robust, but also rule out certain production technologies as implausible. For instance, she may \textit{know} her agents are independent and identical, but still want her contract to be robust to all possible technologies the agents may exploit within this class. The independent and identical agents setting is of particular interest not only because it is a benchmark setting in the Bayesian literature, but because existing arguments for linear contracts in the robust contracting literature rely on the ability of an agent to directly influence the performance of others.

Are robust contracts linear if a principal knows her agents are independent and identical? Do they link one agent's pay to the performance of another? This paper shows, very generally, that any nondiscriminatory, worst-case optimal contract is \textit{nonaffine} (and hence, nonlinear) and each agent's pay depends on the performance of another. In the case of two agents and binary output, existence of a worst-case optimal contract  is established and it is proven that any such contract exhibits \textit{joint performance evaluation}, i.e., one agent's pay increases in the performance of the other. This result provides novel foundations for nonlinear team-based incentive pay in the context of numerous existing results in the literature and identifies a channel leading to joint performance evaluation of potential relevance in practice. 

A simple example illustrates the framework and key economic intuition.

\begin{example}\label{simple_ex}
There is a risk-neutral residual claimant (manager) and two identical, risk-neutral agents that perform independent tasks. That is, it is common knowledge that their successes or failures are statistically independent, conditional on the actions they take, and that they cannot influence each other's productivity. Successful completion of a task yields the manager a profit of one and failure yields her a profit of zero. 

The manager knows that each agent can take one of two actions, ``work" or ``shirk". She knows that ``work" results in successful task completion with probability $p_0>0$ at effort cost $c_0 \in (0,p_0)$.  On the other hand, she is uncertain about the effort cost of shirking, $c^* \in \mathbb{R}_+$, and the productivity of shirking, i.e., the probability $p^*<p_0$ with which it results in successful task completion.\begin{footnote}{To be clear, in this example, the principal ``knows" that there is precisely one unknown action. In the baseline model, this hypothesis will be relaxed. In addition, it will no longer be assumed that unknown actions are less productive than known actions.}\end{footnote}

The manager contemplates using one of two contracts, each of which is nondiscriminatory and respects agent limited liability:

\begin{enumerate}

\item Independent Performance Evaluation (IPE): \\Pay each agent $w \in (c_0,1)$ for individual success and $0$ for failure.

\item Nonaffine Joint Performance Evaluation (JPE):\\ Pay each agent a wage $w_0 \in [0,w)$ for individual success and a team bonus \[b = \frac{w-w_{0}}{p_0}\] for joint success. Pay each agent $0$ for failure. Any such contract is calibrated to the contract-action pair $(w, \text{work})$ in the following sense: If an agent succeeds at her task, then her expected wage payment remains $w$ \textit{conditional on the other agent working}. That is,
\[ w_0+ b p_0 =w. \]
\end{enumerate} 

The manager evaluates any contract according to the same criterion. First, for each value of $p^*$, she computes her expected payoff in her preferred Nash equilibrium in the game induced by the contract she offers. Second, she computes the infimum value of her expected payoff over all values of $c^*$ and $p^*$. The resulting payoff is called her \textit{worst-case payoff}.

\begin{figure}
\begin{center}
\renewcommand{\gamestretch}{1.8}
\begin{game}{2}{2}
      & work     & shirk\\
work  & $p_0 w-c_0$ , $p_0 w-c_0$   & $p_0 w- c_0$ , $p^* w-c^*$  \\
shirk  & $p^* w-c^*$ , $p_0 w-c_0$    & $p^* w-c^*$ , $p^*w-c^*$
\end{game}
\end{center}
\caption{Game induced by IPE $w$ given $p^*$.} \label{simple_IPE}
\end{figure}

 Can JPE yield the manager a higher worst-case payoff than IPE? The IPE contract $w$, together with an actual value of $p^*$, induces the game between the agents depicted in Figure~\ref{simple_IPE}. A naïve intuition is that the worst-case scenario for the principal occurs when $p^*=0$; if agents take a shirking action with this success probability, then the principal obtains an expected payoff of zero. But, this logic ignores incentives, as pointed out by \cite{Carroll_2015}. In particular, each agent has a strict incentive to shirk if and only if she obtains a higher expected utility from doing so. Hence, $(\text{work},\text{work})$ is a Nash equilibrium whenever
\[ p^*w-c^* \leq p_0 w-c_0 \iff p^* \leq p_0 -\frac{(c_0-c^*)}{w}, \]  yielding the principal a payoff per agent of \[p_0(1-w).\] The principal's worst-case payoff is instead obtained when $c^*=0$ and as $p^*$ approaches $p_0-\frac{c_0}{w}$ from above. Along this sequence, $(\text{shirk}, \text{shirk})$ is the unique Nash equilibrium and the principal's payoff per agent becomes arbitrarily close to
\[ V_{IPE}(w)= (p_0 -\frac{c_0}{w}) (1-w). \]

\def\stackedpayoffs#1#2{%
\begin{array}{rl} & #1,\\& \hspace{12mm} #2\end{array}
}
\begin{figure}
\begin{center}
\renewcommand{\gamestretch}{1.8}
\begin{game}{2}{2}
      & work     & shirk\\
work  &  $p_0 w-c_0$ , $p_0 w-c_0$   & $p_0 ( w_0+ b p^* )-c_0$, $p^* w-c^*$  \\
shirk  & $p^* w-c^*$, $p_0 ( w_0+ b p^* )-c_0$   & $p^* ( w_0+b p^* )-c^*$, $p^* ( w_0+b p^* )-c^*$
\end{game}
\end{center}
\caption{Game induced by JPE $(w_0, b)$ given $p^*$.} \label{simple_JPE}
\end{figure}

Now, consider the calibrated JPE contract $(w_0, b)$. The game between the agents for a given value of $p^*$ is depicted in Figure \ref{simple_JPE}. Observe that, as under the IPE contract $w$, $(\text{work},\text{work})$ is a Nash equilibrium whenever
\[ p^* \leq p_0 -\frac{c_0}{w}.\]
And, again, the principal's worst-case payoff is obtained when $c^*=0$ and as $p^*$ approaches $p_0-\frac{c_0}{w}$ from above. (Along this sequence, $(\text{shirk}, \text{shirk})$ is the unique Nash equilibrium.) However, a simple calculation shows that the principal obtains a strictly higher worst-case payoff under the calibrated JPE:
\[(p_0 -\frac{c_0}{w}) (1-(w_0+ b p^*))>(p_0 -\frac{c_0}{w}) (1-w)=V_{IPE}(w), \] 
  where the inequality follows from $w_0+ b p^* <w_0+ b p_0 =w$. The intuition is simple. Calibration ensures that worst-case productivity is no lower under the JPE contract than under the IPE contract. But, under the JPE contract, the principal pays agents less in expectation. Each is punished for the shirking of the other. See Figure \ref{JPE_Picture} for an illustration. \begin{figure}
\centering
\includegraphics[scale=0.45]{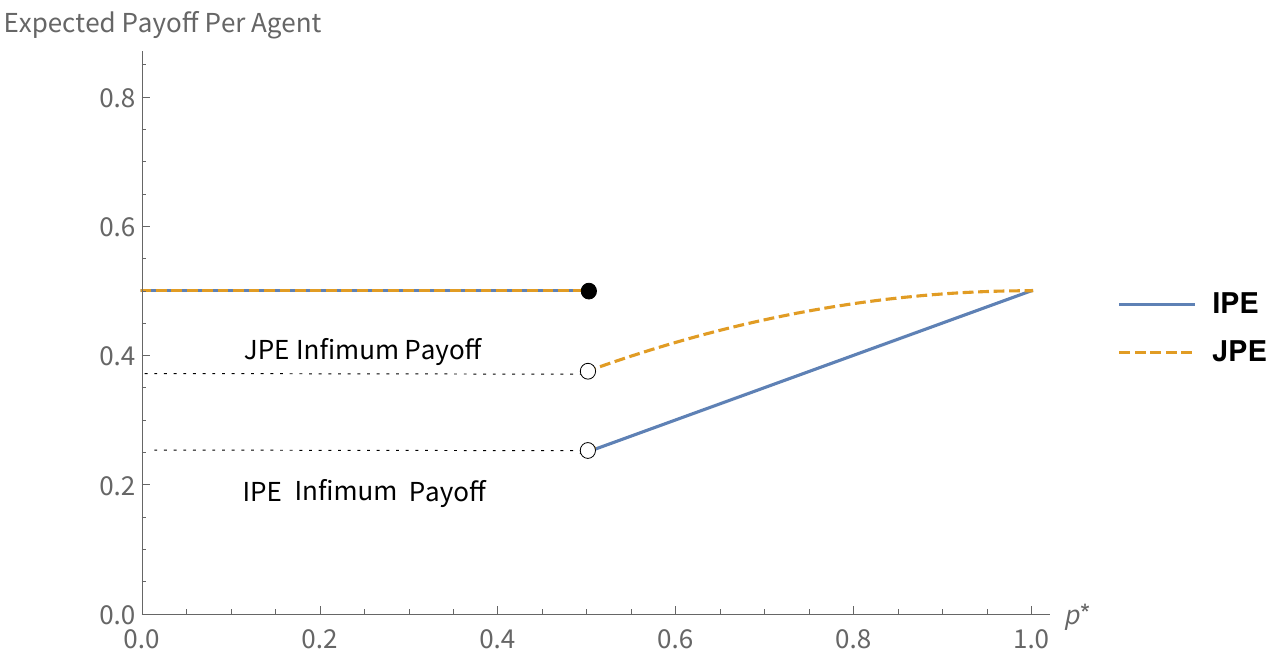}
\caption{Manager's expected payoff per agent as a function of $p^*$ when $c^*=0$. Parameters: $p_0=1$, $c_0=1/4$, $w=1/2$, $w_0=0$, and $b=1/2$.}\label{JPE_Picture}
\end{figure} \end{example}

  Example \ref{simple_ex} identifies a fundamentally new advantage of team-based incentive pay.  Economists have traditionally justified such schemes by highlighting their role in encouraging cooperation (\cite{Itoh_1991}) or discouraging sabotage (\cite{Lazear_1989}). These channels are explicitly ruled out in the example and in the model studied in the paper. Instead, the advantage of team-based incentive pay is that it allows a profit-maximizing principal to flexibly extract rent. Specifically, the principal reduces expected wage payments when agents take less productive actions than those ``targeted" by her contract --- the situation that matters under worst-case evaluation of contracts.\begin{footnote}{The assumption that agents are identical, i.e., possess a common action set, does introduce a type of perfect correlation in the environment. However, this is a different type of correlation than what has been studied in the literature. Specifically, if performances are perfectly correlated conditional on agents' actions, e.g., there is a common demand shock, then relative performance evaluation, rather than joint performance evaluation, is optimal. Moreover, Section \ref{common} shows that the result in Example \ref{simple_ex} holds even when the common action set assumption is dropped.}\end{footnote} 
   
The baseline model of Section \ref{model} generalizes the example to the setting in which the agents have any number of known and unknown actions. This extension is of particular interest because, with many unknown actions, joint performance evaluation is vulnerable to a pernicious free-riding problem. It is shown that the worst-case payoff for the principal is obtained in the limit of an $n$-sequence of dominance solvable games with $n$ unknown actions. In each game in this sequence, agents ``undercut" each other as dominated strategies are eliminated, taking progressively less costly and less productive actions. Nevertheless, Theorem \ref{JPE_opt} establishes that any worst-case optimal contract is a nonaffine JPE contract.

The proof, outlined in Section \ref{opt_contract}, proceeds by successively improving upon suboptimal contracts. First, it is shown that any contract rewarding failure with strictly positive wages can be improved upon by a contract that does not reward failure (Lemma \ref{sub_linear} and Lemma \ref{zero}). The remaining contracts exhibit either relative performance evaluation (RPE), independent performance evaluation (IPE), or nonaffine joint performance evaluation (JPE). Second, a ranking among optimal contracts within each class is established. It is shown that there does not exist an RPE contract that yields the principal a strictly larger payoff than the optimal IPE contract (Lemma \ref{RPEvIPE}), but that there \textit{does} exist such a JPE contract (Lemma \ref{JPEvIPE}). Establishing this result involves identifying and solving a differential equation characterizing worst-case best-response dynamics over all possible supermodular games induced by the contract (Lemma \ref{JPE_worst}), a technical result that may be of independent interest. 

While I focus on a simple model to isolate the main ideas, I show in Sections \ref{nodiscrim}, \ref{common}, \ref{preferred}, and \ref{otherextensions} that the techniques in the paper can be used to establish principles of worst-case optimal contracts in a variety of extensions. Notably, in a general model with $n \geq 2$ agents and a compact set of output levels, I prove that any worst-case optimal contract is nonaffine and cannot be an IPE. In fact, there always exists a JPE that strictly outperforms the best IPE (Theorem \ref{multiple_thm}). Finally, in Section \ref{finalremarks}, I discuss two applications. First, I observe that the result provides a micro-foundation for empirical evidence documenting firm preferences for nonlinear joint performance evaluation, such as team bonuses, in settings in which production technologies are independent (\cite{Herriesetal_2003}). Second, I observe that, under an appropriate interpretation of the model, it offers an argument for financial diversification that is conceptually distinct from the seminal work of \cite{Diamond_1984}.\begin{footnote}{I thank Lucy White and Huseyin Yildirm for bringing this literature to my attention.}\end{footnote}

 \subsection{Related Literature}\label{relatedlit}
 This paper makes three main contributions to the literature. First, it establishes a fundamentally new justification for joint performance evaluation. In the Bayesian contracting paradigm, the Informativeness Principle prescribes independent performance evaluation whenever one agent's performance is statistically uninformative of another's action. Hence, if the set of actions available to a team of agents is common knowledge, then it is impossible to improve upon independent performance evaluation. To justify incentive schemes commonly used in practice,  such as relative performance evaluation and joint performance evaluation, the literature has instead introduced productive and/or informational linkages between agents.\footnote{In the absence of productive interaction, joint performance evaluation may be optimal if agents are affected by a common, negatively correlated productivity shock (\cite{Fleckinger_2012}). In the absence of a common shock, joint performance evaluation may be optimal if efforts are complements in production (\cite{AlchianDemsetz_1972}), if it induces help between agents (\cite{Itoh_1991}) or, alternatively, if it discourages sabotage (\cite{Lazear_1989}). Finally, joint performance evaluation may be optimal if agents are engaged in repeated production and it allows for more effective peer sanctioning (\cite{CheYoo_2001}).} Specifically, one agent's action either has a direct effect on another's performance and/or there is correlation in performances conditional on an action profile. The model studied in this paper explicitly rules out these channels. It is worth pointing out, however, the literature has not systematically studied the problem of optimal (incomplete) output-contingent contracts under Bayesian uncertainty about the agents' production technology. I consider it in detail in Section \ref{discuss} and Online Appendix \ref{app_incompletebayesian}. There, I provide an alternative, Bayesian rationalization of joint performance evaluation.

Second, it is the first to conduct a formal analysis of a principal-many agents model in which the principal has bounded, non-quantifiable uncertainty about the agents' production technology.\footnote{Related work not discussed here include the papers of \cite{HurwiczShapiro_BJE1978}, \cite{Garrett_2014}, and \cite{Frankel_2014}, and \cite{Rosenthal_2020}, who consider contracting with unknown preferences; \cite{MarkuOcampoDiaz_2019}, who consider a robust common agency problem; and \cite{Chassang_2013}, who studies the robust performance guarantees of a different class of calibrated contracts than those considered here in a dynamic agency problem. At the intersection of computer science and economics, see also \cite{Dutting_2020}, who study near-optimal contracts in principal-agent relationships and \cite{Babaioffetal_2012} who study how the agents' production technology affects whom the principal contracts with, as well as the principal's loss of profits due to moral hazard.} The pioneering work of \cite{Carroll_2015} considers a principal-single agent model in which the principal has non-quantifiable uncertainty about the actions available to the agent. His main result is that there exists a worst-case optimal contract that is linear in individual output. The model and analysis in this paper enrich that of \cite{Carroll_2015} by introducing a seemingly irrelevant agent and showing that multiple agents lead to the optimality of joint incentive schemes.\begin{footnote}{Building upon \cite{Carroll_2015}'s single-agent model, \cite{Antic_2014} imposes bounds on the principal's uncertainty over the productivity of unknown actions (see also Section 3.1 of \cite{Carroll_2015}, which studies lower bounds on costs). In contrast, the model studied here places no restrictions on the technology available to each agent in isolation beyond those of \cite{Carroll_2015}. Instead, the restrictions concern the relationship between the agents. }\end{footnote} 

 \cite{DaiToikka_2018} extend the analysis of \cite{Carroll_2015} to multi-agent settings, but consider a model in which the principal deems \textit{any} game the agents might be playing plausible. In this setting, they find that linear contracts are worst-case optimal. This result is driven by the finding that any contract that induces competition between agents is non-robust to prisoners' dilemma-type games in which one agent's action can directly influence the productivity of another, leading the principal to a worst-case payoff of zero. In contrast to \cite{DaiToikka_2018}, I consider a setting in which the principal \textit{knows} that success is independently distributed across agents. This has the immediate effect of ruling out such games and ensuring that linear contracts are suboptimal. It also necessitates new techniques to analyze the principal's worst-case payoffs.\footnote{For instance, the worst-case payoff of the principal at the optimal contract is achieved by a sequence of games in which the number of actions grows to infinity, rather than one additional action for each agent as in \cite{DaiToikka_2018}.} Despite these differences, the results of this paper complement  \cite{DaiToikka_2018} in terms of their management implications. Agents in  \cite{DaiToikka_2018}'s model are a ``real team" in the sense that they work together to produce value for the principal, while agents in the model of this paper are best thought of as ``co-actors" given the assumption of technological independence (\cite{Hackman_2002}). Yet, in either case, joint performance evaluation is optimal. What changes is the particular form of the optimal joint performance evaluation contract --- in the case of a real team, optimal compensation is linear in the value the team generates for the principal, while in the case of co-acting agents it involves nonlinear bonus payments that reward agents when all succeed.

Third, this paper contributes to the literature on supermodular implementation (\cite{Chen_2002}, \cite{Mathevet_2010}, \cite{MathevetHealy_2012}) by presenting an environment in which a supermodular mechanism (a mechanism inducing a supermodular game between agents) emerges as optimal due to robustness considerations instead of restrictions on the set of feasible mechanisms. Equilibria of supermodular games possess desirable theoretical properties: they can be found by iterated elimination of strictly dominated strategies, and are also the limit points of adaptive and sophisticated learning dynamics (\cite{MilgromRoberts1990}, \cite{MilgromRoberts_1991}). In addition, a collection of experimental papers have shown that laboratory subjects converge to equilibrium faster in supermodular games than in other classes of games (see, for instance, \cite{ChenGazzale_2004}, \cite{Healy_2006}, and \cite{VanEssen_2012}). These benefits of joint performance evaluation are not captured formally in the model of this paper, but might further justify their use in practice.

\needspace{2 \baselineskip}

\section{Model}\label{model}

\subsection{Environment}

A risk-neutral principal writes a contract for two risk-neutral agents, indexed by $i=1,2$. Each agent $i$ chooses an unobservable action, $a_i$, from a common, finite set $A \subset \mathbb{R}_+ \times [0,1]$ to produce individual output $y_i \in \{0,1\}$, where $y_i=1$ indicates ``success" and $y_i=0$ indicates ``failure". Each action $a_i$ is identified by its cost, $c(a_i) \in \mathbb{R}_+$, and the probability with which it results in success, $p(a_i) \in [0,1]$.  There are no informational linkages across agents:
\[ Pr(y_i, y_j |a_i, a_j)= Pr(y_i| a_i, a_j) Pr(y_j| a_i, a_j). \]
There are no productive linkages across agents:
\[ Pr(y_i| a_i, a_j)= Pr(y_i| a_i)= \begin{cases} p(a_i) & \text{if $y_i=1$} \\ 1-p(a_i) & \text{if $y_i=0$} \end{cases}. \] 

\subsection{Contracts}
A \textbf{contract} is a quadruple of non-negative wages, $w:=(w_{11},w_{10},w_{01}, w_{00}) \in \mathbb{R}^4_+$,
where the first index of each wage indicates an agent's own success or failure and the second indicates the success or failure of the other agent.\begin{footnote}{ I impose the assumption that contracts are symmetric, i.e., nondiscriminatory, throughout, postponing a discussion of asymmetric contracts to Section \ref{otherextensions} and Online Appendix \ref{discrimination}. I also discuss a precise sense in which contracts are incomplete in Section \ref{discuss}.}\end{footnote} It will be useful to classify contracts according to the following typology of \cite{CheYoo_2001}.\begin{footnote}{While this typology is non-exhaustive (for instance, when $w_{11}>w_{10}$ and $w_{01}<w_{00}$ there is JPE ``at the top" and RPE ``at the bottom"), I will show later that it is without loss of generality to consider contracts for which $w_{01}=w_{00}=0$ (Lemma \ref{zero}). Within this class of contracts, it \textit{is} exhaustive. }\end{footnote}

\begin{definition}[Performance Evaluations]
A contract $w$ is
\begin{itemize}
\item an \textbf{independent performance evaluation (IPE)} if $(w_{11},w_{01})=(w_{10}, w_{00})$;
\item a \textbf{relative performance evaluation (RPE)} if $(w_{11},w_{01})<(w_{10}, w_{00})$;
\item and a \textbf{joint performance evaluation (JPE)} if $(w_{11},w_{01})>(w_{10}, w_{00})$,
\end{itemize}
where $>$ and $<$ indicate strict inequality in at least one component and weak in both.
\end{definition}

\noindent It will also be useful to delineate which contracts are affine.

\begin{definition}
A contract is \textbf{affine} if \[w_{y_i y_j}= \alpha_0+ \alpha_i y_i+ \alpha_j y_j \quad \text{for $\alpha_0, \alpha_i, \alpha_j \geq 0$,}\] and \textbf{nonaffine} otherwise. 
\end{definition}
 \noindent Two remarks are in order. First, notice that an IPE contract is an affine contract with $\alpha_j=0$. Second, notice that a \textbf{linear} contract in the sense of \cite{DaiToikka_2018} is an affine JPE contract with $\alpha_0=0$ and $\alpha_i=\alpha_j$. 

Agent $i$'s ex post payoff given a contract $w$, action profile $(a_i, a_j)$, and realization $(y_i,y_j)$ is \[w_{y_i y_j}-c(a_i),\] while her expected payoff is
\begin{equation}
U_i(a_i, a_j; w) :=  \sum_{y_i } \sum_{y_j} Pr(y_i, y_j | a_i, a_j) w_{y_i y_j}-c(a_i).
\notag
\end{equation} 
Let $\Gamma(w,A)$ denote the normal form game induced by the contract $w$ and $\mathcal{E}(w,A)$ denote its (non-empty) set of mixed strategy Nash equilibria.

\subsection{Principal's Problem}\label{principal_problem}

The principal's ex post payoff given a contract $w$ and realization $(y_1,y_2)$ is \[y_1+y_2-w_{y_1 y_2}-w_{y_2 y_1},\] 
while her expected payoff is
\[ V(w,A) := \underset{ \sigma \in \mathcal{E}(w,A)}{\max} E_{\sigma}[ y_1+y_2-w_{y_1 y_2}-w_{y_2 y_1} ].\] Notice that the principal can select her preferred Nash equilibrium in case of multiplicity. This minimizes the distance between the model studied here and the literature discussed in Section \ref{relatedlit}. However, many results persist under weaker selection assumptions as discussed in Section \ref{otherextensions}.

When the principal writes a contract for the agents, she has limited knowledge about the game the agents play. In particular, she knows only a non-empty subset of actions available to them $A^0 \subseteq A$. In the face of her uncertainty, the principal evaluates each contract on the basis of its performance across all finite supersets of her knowledge. The \textbf{worst-case payoff} she receives from a contract $w$ is thus given by
\[ V(w):= \underset{A \supseteq A^0}{\inf}~V(w,A). \] 
The principal's problem is to identify a contract $w^*$ for which \[V(w^*)= \underset{w}{\sup}~V(w).\] Call such a contract a \textbf{worst-case optimal contract}.

\section{Analysis}\label{opt_contract}

To rule out uninteresting cases, I make the following assumption about $A^0$ in the subsequent analysis.

\begin{assumption}\label{assumption}
The known action set $A^0$ has the following properties:
\begin{enumerate}
\item (\textit{Non-Triviality}) There exists an action $a_0 \in A^0$ such that \[p(a_0)-c(a_0)>0.\]
\item (\textit{Known Actions are Costly}) If $a_0 \in A^0$, then $c(a_0)>0$. 
\end{enumerate}

\end{assumption}

The first assumption ensures that the principal can possibly obtain a strictly positive worst-case payoff from contracting with the agents. The second ensures that the principal's supremum payoff is never approached by a sequence of contracts converging to the contract that always pays zero.\footnote{While the first assumption is necessary for the main result, the second is not. In particular, as long as the principal does not ``target" any zero-cost action, the result goes through. I maintain this assumption due to its ease of interpretation and because it eliminates some nuisance cases in the proof. }

\subsection{Main Result}

The main result follows below.

\begin{theorem}\label{JPE_opt}
Any worst-case optimal contract is a nonaffine JPE. There exists a worst-case optimal contract.
\end{theorem}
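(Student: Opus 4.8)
The plan is to prove the two halves of Theorem~\ref{JPE_opt} separately: first rule out all contract classes other than nonaffine JPE, then establish existence by exhibiting a maximizer. For the first half I would proceed by the sequence of improvement lemmas advertised in Section~\ref{opt_contract}. Step one: show that paying strictly positive wages on failure is wasteful. Given any contract $w$ with $w_{01}>0$ or $w_{00}>0$, construct a modified contract that shifts this mass onto the success wages in a way that (i) preserves every agent's incentive comparisons across actions for every finite superset $A\supseteq A^0$ (so the induced games have the ``same'' equilibrium structure in the relevant sense) and (ii) weakly lowers the principal's expected wage bill in the worst case. This is exactly the content of Lemma~\ref{sub_linear} and Lemma~\ref{zero}, so I would invoke them: it is without loss to take $w_{01}=w_{00}=0$. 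Within this reduced class the CheYoo typology is exhaustive, so every remaining candidate is an RPE, an IPE, or a nonaffine JPE.

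Step two: rank the surviving classes. I would first dispose of affine contracts. An affine contract with $w_{01}=w_{00}=0$ forces $\alpha_0=0$ and $\alpha_j=0$, i.e.\ it is an IPE; and an IPE is precisely the two-agent embedding of Carroll's single-agent linear contract, which each agent faces in isolation because of technological and informational independence. So it suffices to beat the best IPE. Lemma~\ref{RPEvIPE} says no RPE does strictly better than the optimal IPE (intuitively, making an agent's pay decrease in the other's success hands the ``adversary'' a coordination device that drives the worst-case productivity down, with no compensating saving). Lemma~\ref{JPEvIPE} says some nonaffine JPE \emph{does} strictly beat the optimal IPE --- this is the calibration-plus-rent-extraction mechanism of Example~\ref{simple_ex}, pushed through the full model with $n$ unknown actions, and it rests on the differential-equation characterization of worst-case best-response dynamics in Lemma~\ref{JPE_worst}. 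Chaining these: worst-case value of (best RPE) $\le$ worst-case value of (best IPE) $<$ worst-case value of (some JPE) $\le$ $\sup_w V(w)$; hence no RPE, no IPE, and no affine contract can be worst-case optimal, and any worst-case optimal contract must be a nonaffine JPE. One still has to check that the dominating JPE is itself nonaffine (a linear/affine JPE would have to do at least as well as the best IPE but no better, by the same RPE/IPE-style comparison run on the affine family), closing the ``nonaffine'' claim.

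Step three: existence. The natural route is a compactness argument on a well-chosen reduced parameter space. Restrict attention to contracts with $w_{01}=w_{00}=0$; further argue (via calibration to the best known action $a_0$, as in the lemmas) that it is without loss to bound $w_{11}$ and $w_{10}$ in a compact box --- wages too large make even working unprofitable for the principal, giving worst-case payoff below that of some fixed benchmark JPE, so a maximizing sequence stays in the box. Then show $w\mapsto V(w)=\inf_{A\supseteq A^0}V(w,A)$ is upper semicontinuous on this box: $V(\cdot,A)$ is itself upper semicontinuous in $w$ for each fixed finite $A$ (the principal takes a max over the equilibrium correspondence, which is upper hemicontinuous, and her payoff is continuous), and an infimum of upper semicontinuous functions is upper semicontinuous. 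An upper semicontinuous function on a compact set attains its supremum, yielding a worst-case optimal contract; by the first half it is a nonaffine JPE.

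\textbf{Main obstacle.} The delicate point is Step two's Lemma~\ref{JPE_worst}: identifying, for a given JPE contract, exactly which sequence of finite action supersets drives $V(w,A)$ to its infimum. Unlike the single-unknown-action case of the example, with many unknown actions the adversary can make the agents iteratively undercut one another through a long chain of dominated-strategy eliminations in a supermodular game, and the worst case is approached only in the limit of such chains. Pinning down this limit --- setting up and solving the differential equation that the worst-case best-response frontier satisfies, and verifying that calibrated JPE still extracts rent along it despite the free-riding that many unknown actions invite --- is where the real work lies; the improvement lemmas and the existence argument are comparatively routine once that characterization is in hand.
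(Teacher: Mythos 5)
Your decomposition matches the paper's: Lemma~\ref{sub_linear} and Lemma~\ref{zero} to reduce to $w_{01}=w_{00}=0$, Lemma~\ref{RPEvIPE} to dispose of RPE, Lemma~\ref{JPE_worst} to characterize JPE worst-case payoffs through a differential equation for best-response dynamics, Lemma~\ref{JPEvIPE} for the calibration-and-perturbation step beating the best IPE, and a compactness argument for existence. Two points of departure, one of which is a genuine gap.

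On existence, your route is different from the paper's but valid. You propose showing $V(\cdot, A)$ is upper semicontinuous for each finite $A$ (the equilibrium correspondence is compact-valued and upper hemicontinuous, so the principal's maximum over it is upper semicontinuous), noting that a pointwise infimum of upper semicontinuous functions is upper semicontinuous, and concluding that $V$ attains its supremum on the compact box $\{0\le w_{10}\le w_{11}\le 1,\ w_{01}=w_{00}=0\}$. The paper instead reads off \emph{continuity} of the restricted objective directly from the closed-form characterization in Lemma~\ref{JPE_worst} --- the ODE solution $\bar p(w_{11},w_{10})$ depends continuously on the wages, so the whole $\min$-expression is continuous --- and then applies Weierstrass. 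Yours is cleaner in that it would go through without the explicit characterization; the paper's is more concrete and reuses machinery already established.

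The substantive gap is in the exclusivity claim. Your chain ``best RPE $\le$ best IPE $<$ some JPE'' establishes that, \emph{among contracts with $w_{01}=w_{00}=0$}, only a nonaffine JPE can be optimal. But the theorem asserts that \emph{every} worst-case optimal contract is a nonaffine JPE, including potential optima with $w_{00}>0$ or $w_{01}>0$. Lemma~\ref{zero} as stated gives only \emph{weak} domination of such contracts by some $w'$ with $w'_{01}=w'_{00}=0$ (or weak domination by $V^*_{IPE}$). A contract outside the Che--Yoo typology --- say $w_{11}>w_{10}$ but $w_{01}<w_{00}$, which is neither JPE, RPE, IPE, nor (with $\alpha_j\ge 0$ required) affine --- might tie its weakly dominating $w'$ and thereby be optimal while failing to be a JPE; your argument does not rule this out. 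The paper closes exactly this hole in Appendix~\ref{details_laststeps}: it observes that the $w'$ constructed in Lemma~\ref{zero} is an IPE or RPE in all but one residual configuration, so that $w$ is \emph{strictly} suboptimal by transitivity through Lemma~\ref{RPEvIPE} and Lemma~\ref{JPEvIPE}, and then handles the residual case $w_{11}>0$, $w_{00}>0$, $w_{10}=w_{01}=0$ by a separate perturbation argument on $w_{11}$. Your proposal invokes Lemma~\ref{zero} but does not engage with this weak-versus-strict distinction, so as written it establishes existence of a nonaffine-JPE optimum but not the full ``any worst-case optimal contract is a nonaffine JPE'' statement.
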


The key intuition behind the result is that by judiciously calibrating a JPE to a benchmark IPE, any efficiency losses such contracts generate can be made approximately the same as those of the benchmark contract. Thus, the reduction in expected wage payments the principal obtains when agents take less productive actions causes JPE to outperform the benchmark contract. Of course, to show that only nonaffine JPE can be worst-case optimal, I must also prove strict suboptimality of contracts other than IPE, including those that exhibit RPE.

The proof has five steps. First, I show that no affine contract can outperform the best IPE (Lemma \ref{sub_linear}) and that, more generally, any contract rewarding failure with strictly positive wages can be improved by a contract $w$ for which $w_{01}=w_{00}=0$ or which yields a worst-case payoff smaller than that of the best IPE  (Lemma \ref{zero}). Consequently, to identify a worst-case optimal contract, it suffices to consider those which are either RPE ($w_{11}<w_{10}$), IPE ($w_{11}=w_{10}$), or JPE ($w_{11}>w_{10}$). Second, I show that there does not exist an RPE that yields the principal a strictly larger payoff than the best IPE (Lemma \ref{RPEvIPE}). Third, I compute the principal's worst-case payoff given any JPE (Lemma \ref{JPE_worst}). Fourth, I show that there exists a (calibrated) JPE that yields a strictly higher payoff than the best IPE (Lemma \ref{JPEvIPE}). Fifth, I establish existence of a worst-case optimal nonaffine JPE and that no other class of contracts can be optimal. The remainder of this section outlines these steps.

\subsection{Preliminaries: Supermodular Games}\label{superprelim}

The proof will utilize some results from the theory of supermodular games, which I review now. Equip any action set $A$ with the total order $\succeq$: $a_i \succeq a_j$ if either $p(a_i) > p(a_j)$, or $p(a_i)=p(a_j)$ and $c(a_i) \leq c(a_j)$. In words, $a_i$ is higher than $a_j$ if $a_i$ results in success with a higher probability or if it results in success with the same probability, but at a lower cost. Then, $(A,\succeq)$ is a complete lattice. A supermodular game may thus be defined as follows.

\begin{definition}[Supermodular Games]\label{supermodular} The game $\Gamma(w,A)$ is \textbf{supermodular} if $U_i$ exhibits increasing differences: $a'_i \succeq a_i$ and $a'_j \succeq a_j$ implies
\[ U_i(a'_i, a'_j; w)-U_i(a_i,a'_j; w) \geq U_i(a'_i, a_j; w)-U_i(a_i,a_j; w).\] It is \textbf{submodular} if $U_i$ exhibits decreasing differences: $a'_i \succeq a_i$ and $a'_j \succeq a_j$ implies
\[ U_i(a'_i, a'_j; w)-U_i(a_i,a'_j; w) \leq U_i(a'_i, a_j; w)-U_i(a_i,a_j; w).\]
\end{definition}

The important property of supermodular games that I exploit is that best-response dynamics converge to their maximal and minimal equilibria. In particular, let $a_{\max}$ and $a_{\min}$ denote the maximal and minimal elements of $A$, and $\overline{BR}: A \rightarrow A$ and $\underline{BR}: A \rightarrow A$ denote the maximal and minimal best-response functions for the agents. Then, the following Lemma holds.

\begin{lemma}[\cite{Vives_1990}, \cite{MilgromRoberts1990}]\label{lemma_super}
Suppose $\bar{a}$ ($\underline{a}$) is the limit found by iterating $\overline{BR}$ ($\underline{BR}$) starting from $a_{\max}$ ($a_{\min}$). If $\Gamma(w,A)$ is supermodular, then it has a maximal Nash equilibrium $(\bar{a}, \bar{a})$ and a minimal Nash equilibrium $(\underline{a},\underline{a})$; any other equilibrium $(a_i,a_j)$ must satisfy $\bar{a} \succeq a_i \succeq \underline{a}$ and $\bar{a} \succeq a_j \succeq \underline{a}$.
\end{lemma}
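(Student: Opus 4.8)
The plan is to derive Lemma~\ref{lemma_super} from Topkis's monotone-selection theorem together with a finite-lattice iteration argument. The first step is to record that $\overline{BR}$ and $\underline{BR}$ are well-defined, nondecreasing functions on $(A,\succeq)$. Because $\succeq$ is a \emph{total} order, the own-action domain is a chain, so $U_i(\cdot,a_j;w)$ is automatically supermodular in $a_i$; by hypothesis $U_i$ has increasing differences in $(a_i,a_j)$; and $A$ is finite, so $\argmax_{a_i}U_i(a_i,a_j;w)$ is a nonempty subchain of $A$. Topkis's theorem (equivalently, the Milgrom--Shannon monotone-selection theorem, which is the content attributed to \cite{Vives_1990} and \cite{MilgromRoberts1990}) then gives that this argmax set is nondecreasing in $a_j$ in the strong set order, whence its largest and smallest elements, $\overline{BR}(a_j)$ and $\underline{BR}(a_j)$, are nondecreasing in $a_j$.

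Second, I would run the downward iteration from the top. Put $a^0=a_{\max}$ and $a^{k+1}=\overline{BR}(a^k)$. Since $a_{\max}$ is the lattice top, $a^1=\overline{BR}(a_{\max})\preceq a^0$; applying monotonicity of $\overline{BR}$ inductively yields $a^{k+1}\preceq a^k$ for every $k$, so $(a^k)$ is a nonincreasing sequence in the finite poset $A$, hence eventually constant at its limit $\bar a$, which therefore satisfies $\bar a=\overline{BR}(\bar a)$. Thus $\bar a$ is a best response to $\bar a$, so $(\bar a,\bar a)\in\mathcal E(w,A)$. The mirror argument --- $\underline{BR}$ iterated from $a_{\min}$, producing a nondecreasing sequence --- yields $\underline a=\underline{BR}(\underline a)$ and the equilibrium $(\underline a,\underline a)$.

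Third, I would sandwich an arbitrary equilibrium. If $(a_i,a_j)$ is a (pure) Nash equilibrium, then $a_i$ is a best response to $a_j$, so $a_i\preceq \overline{BR}(a_j)\preceq \overline{BR}(a_{\max})=a^1$, and likewise $a_j\preceq a^1$; and if $a_i,a_j\preceq a^k$, then $a_i\preceq\overline{BR}(a_j)\preceq\overline{BR}(a^k)=a^{k+1}$ and similarly for $a_j$, so by induction $a_i,a_j\preceq\bar a$. Running the same chain with $\underline{BR}$ and $a_{\min}$ gives $a_i,a_j\succeq\underline a$. Hence every equilibrium lies componentwise below $(\bar a,\bar a)$ and above $(\underline a,\underline a)$, which makes these the maximal and minimal equilibria. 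I do not anticipate a genuine obstacle here, since this is a classical fact: the only non-elementary ingredient is Topkis's theorem, and everything after it is finite-poset bookkeeping --- the single point needing care is that a general equilibrium need not be symmetric, so the iteration bound must be applied to each coordinate separately rather than to the symmetric profile directly.
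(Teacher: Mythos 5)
Your proposal is correct, but there is nothing in the paper to compare it against: Lemma~\ref{lemma_super} is stated as an imported result, credited to \cite{Vives_1990} and \cite{MilgromRoberts1990}, and the paper gives no proof of it. Your argument is the standard one and it goes through in this setting: since $\succeq$ is a total order on a finite $A$, own-action supermodularity is automatic, increasing differences gives the Topkis/Milgrom--Shannon monotone-selection conclusion, the extremal selections $\overline{BR}$ and $\underline{BR}$ are therefore nondecreasing, the monotone iterations from $a_{\max}$ and $a_{\min}$ terminate at fixed points $\bar a$ and $\underline a$ in finitely many steps (so the symmetric profiles are Nash equilibria of the symmetric game), and the coordinatewise induction correctly sandwiches any pure equilibrium, with your closing remark about asymmetric equilibria being exactly the right point of care. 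The one caveat worth flagging is scope: the paper defines $\mathcal{E}(w,A)$ to include mixed equilibria, and some later uses of the lemma (e.g., inferring uniqueness of the equilibrium when $\bar a=\underline a$ in the proof of Lemma~\ref{zero}) implicitly rely on the stronger Milgrom--Roberts statement that $\bar a$ and $\underline a$ bound all serially undominated strategies, hence the supports of all mixed equilibria as well. Your best-response sandwich covers only pure profiles $(a_i,a_j)$ --- which is all the statement literally asserts --- but to recover the stronger version you would replace ``best response to the opponent's equilibrium action'' with the iterated-strict-dominance bound: any strategy surviving round $k$ is a best response to something below $a^k$, so it lies below $a^{k+1}$, and symmetrically from below.
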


A similar property holds for two-player submodular games. Define the mapping
\begin{equation}
\begin{aligned}
\widetilde{BR}: A \times A &\rightarrow A \times A\\
(a_i, a_j) &\mapsto (\overline{BR}(a_j), \underline{BR}(a_i)).
\end{aligned} \notag
\end{equation}
Then, the following Lemma holds.

\begin{lemma}[\cite{Vives_1990}, \cite{MilgromRoberts1990}]\label{lemma_sub}
Suppose $(\bar{a}, \underline{a})$ is the limit found by iterating $\widetilde{BR}$ starting from the action profile $(a_{\max}, a_{\min})$. If $\Gamma(w,A)$ is submodular, then both $(\bar{a}, \underline{a})$ and $(\underline{a},\bar{a})$ are Nash equilibria and any other Nash equilibrium action must be smaller than $\bar{a}$ and larger than $\underline{a}$. 
\end{lemma}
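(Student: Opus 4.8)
The plan is to reduce the submodular case to the supermodular case already recorded in Lemma~\ref{lemma_super} by reversing the order on one agent's action set. Keep the order $\succeq$ on agent~$i$'s copy of $A$, but equip agent~$j$'s copy with the reversed order $\succeq'$, where $a \succeq' a'$ iff $a' \succeq a$; since $A$ is finite and totally ordered, $(A,\succeq')$ is again a complete lattice, with $\succeq'$-maximum $a_{\min}$ and $\succeq'$-minimum $a_{\max}$. Call the resulting game $\Gamma'$. The first step is to check that $\Gamma'$ is supermodular in the sense of Definition~\ref{supermodular}: reversing the order on the \emph{second} argument of $U_i$ turns its decreasing differences (submodularity of $\Gamma(w,A)$) into increasing differences, and, because the contract is nondiscriminatory so that $U_j(\cdot,\cdot)=U_i(\cdot,\cdot)$, reversing the order on the \emph{first} argument of $U_j$ likewise converts decreasing differences into increasing differences. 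Hence $U_i$ and $U_j$ exhibit increasing differences for the orders $(\succeq,\succeq')$ and $(\succeq',\succeq)$ respectively.

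Next I would translate the best-response maps. Agent~$i$'s payoff and feasible set are unchanged, so the $\succeq$-maximal and $\succeq$-minimal best responses of agent~$i$ in $\Gamma'$ are still $\overline{BR}$ and $\underline{BR}$; agent~$j$'s set of best responses is also unchanged, but its $\succeq'$-maximal element is the $\succeq$-minimal one and vice versa, so agent~$j$'s maximal best response in $\Gamma'$ is $\underline{BR}$ and its minimal best response is $\overline{BR}$. Consequently the joint maximal best-response map of $\Gamma'$ is exactly $(a_i,a_j)\mapsto(\overline{BR}(a_j),\underline{BR}(a_i))=\widetilde{BR}(a_i,a_j)$, and the top of the product lattice $(A,\succeq)\times(A,\succeq')$ is $(a_{\max},a_{\min})$. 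Applying Lemma~\ref{lemma_super} to $\Gamma'$ then gives: iterating $\widetilde{BR}$ from $(a_{\max},a_{\min})$ produces a monotone sequence whose limit $(\bar a,\underline a)$ is the \emph{maximal} Nash equilibrium of $\Gamma'$. Symmetrically, the minimal best-response map of $\Gamma'$ is $(a_i,a_j)\mapsto(\underline{BR}(a_j),\overline{BR}(a_i))$ and the bottom of the product lattice is $(a_{\min},a_{\max})$; a one-line induction shows that iterating this map from $(a_{\min},a_{\max})$ yields precisely the coordinate-swapped $\widetilde{BR}$-iterates, so its limit is $(\underline a,\bar a)$, the minimal Nash equilibrium of $\Gamma'$.

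It remains to read the conclusions back into $\Gamma(w,A)$. Reordering action sets changes neither payoffs nor the definition of a Nash equilibrium, so a profile is a Nash equilibrium of $\Gamma'$ iff it is one of $\Gamma(w,A)$; in particular $(\bar a,\underline a)$ and $(\underline a,\bar a)$ are both Nash equilibria of $\Gamma(w,A)$. Moreover Lemma~\ref{lemma_super} bounds every equilibrium of $\Gamma'$ between its minimal and maximal equilibria in the product order, i.e. $(\bar a,\underline a)\succeq_{\Gamma'}(a_i,a_j)\succeq_{\Gamma'}(\underline a,\bar a)$; unpacking the $i$-coordinate (order $\succeq$) and the $j$-coordinate (order $\succeq'$, hence $\succeq$ reversed) gives $\bar a\succeq a_i\succeq\underline a$ and $\bar a\succeq a_j\succeq\underline a$. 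So every Nash equilibrium action of $\Gamma(w,A)$ lies $\succeq$-between $\underline a$ and $\bar a$, as claimed. If the bound is wanted for mixed equilibria as well, it follows from the Milgrom--Roberts serial-undominance bound applied to the supermodular game $\Gamma'$, since any action played with positive probability in any equilibrium survives iterated deletion of strictly dominated strategies and hence lies in $[\underline a,\bar a]$.

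The only delicate point is the order-reversal bookkeeping in the first two steps — getting the directions of the differences right and correctly matching $\overline{BR}$ and $\underline{BR}$ to the maximal/minimal best responses of the reordered game — after which everything is a direct appeal to Lemma~\ref{lemma_super}. I would also note in passing that $\overline{BR}$ and $\underline{BR}$ are well defined because $A$ is finite, so each agent's best-response set is a nonempty finite subset of the chain $(A,\succeq)$ and therefore has both a maximum and a minimum.
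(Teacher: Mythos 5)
The paper does not prove Lemma~\ref{lemma_sub} itself; it is cited as a known result from \cite{Vives_1990} and \cite{MilgromRoberts1990}, so there is no in-text proof to compare against. Your order-reversal argument is the standard way to derive the two-player submodular result from the supermodular one, and it is essentially correct: the verification that reversing $\succeq$ on agent~$j$'s coordinate converts both agents' decreasing differences into increasing differences is right, the identification of $\widetilde{BR}$ as the joint maximal best-response map of the reordered game $\Gamma'$ is right, and the swap-induction matching the minimal-map iterates from $(a_{\min},a_{\max})$ to the coordinate-swapped $\widetilde{BR}$-iterates is right, as is the translation of the box bound back through the order reversal.

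One point worth being careful about: Lemma~\ref{lemma_super} as stated in the paper is phrased for the \emph{symmetric} supermodular case, where both agents iterate the same $\overline{BR}$ and the extremal equilibria are diagonal profiles $(\bar a,\bar a)$ and $(\underline a,\underline a)$. After you reverse the order on agent~$j$'s action set, $\Gamma'$ is supermodular but no longer symmetric (the two agents carry different lattice orders and different extremal best-response selections), and the extremal equilibrium you land on, $(\bar a,\underline a)$, is off-diagonal. So invoking Lemma~\ref{lemma_super} verbatim overstates what that particular statement gives you; what you actually need and use is the general Vives/Milgrom--Roberts result that any (possibly asymmetric) supermodular game has extremal pure equilibria reached by iterating the extremal best-response maps from the top and bottom of the product lattice, with all equilibria (including mixed, via serial undominance) sandwiched between them. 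That general fact is true and standard, so the argument goes through; you should just cite the general form rather than the symmetric specialization recorded as Lemma~\ref{lemma_super}.
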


\subsection{Proof of Main Result}

Say that a contract $w$ is \textbf{eligible} if $V(w)>0$.\footnote{This definition implies eligibility in the sense of \cite{Carroll_2015}, who requires that, in addition, $V(w)$ yields a higher worst-case payoff than the contract paying zero wages for all pairs $(y_i, y_j)$. By the assumption of costly known actions, such a contract yields the principal a worst-case payoff of zero.} It is without loss of generality to restrict attention to eligible contracts; \cite{Carroll_2015} already identifies that
\[ V^*_{IPE}:= \underset{\text{$w$: $w$ is an IPE}}{\sup}~ V(w)=2 \underset{w \in [0,1], a_0 \in A^0}{\max}~\left[(p(a_0)-\frac{c(a_0)}{w})(1-w)\right]>0\]  by an argument that generalizes the one sketched in Example \ref{simple_ex}. Hence, any contract $w$ for which $V(w) \leq 0$ cannot be worst-case optimal. 

\subsubsection{Suboptimality of Affine and Related Contracts}

I first provide a simple proof that no affine contract can outperform the best IPE. 

\begin{lemma}\label{sub_linear}
For any affine contract $w$, $V(w) \leq V^*_{IPE}$.
\end{lemma}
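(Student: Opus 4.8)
The plan is to show that any affine contract $w_{y_iy_j}=\alpha_0+\alpha_iy_i+\alpha_jy_j$ (with $\alpha_0,\alpha_i,\alpha_j\ge 0$) is weakly dominated, from the principal's worst-case perspective, by some IPE contract, hence $V(w)\le V^*_{IPE}$. The first move is to dispense with the trivial cases: if $V(w)\le 0$ the inequality is immediate since $V^*_{IPE}>0$, so we may assume $w$ is eligible. Next I would observe that adding a constant wage $\alpha_0>0$ to every payment can only hurt the principal — it raises the wage bill by $2\alpha_0$ in every state while not relaxing any agent's incentive to take a productive action (it shifts every agent payoff by the same $\alpha_0$, leaving the induced game $\Gamma(w,A)$ strategically unchanged up to this additive constant, so the equilibrium set and the agents' chosen success probabilities are unaffected). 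Thus WLOG $\alpha_0=0$, and it suffices to handle affine contracts of the form $w_{y_iy_j}=\alpha_iy_i+\alpha_jy_j$.

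For such a contract, agent $i$'s expected payoff is $U_i(a_i,a_j;w)=\alpha_i\,p(a_i)+\alpha_j\,p(a_j)-c(a_i)$. The key structural point is that the term $\alpha_j\,p(a_j)$ does not depend on $a_i$, so agent $i$'s best response — and hence the entire equilibrium correspondence — is identical to that under the IPE contract $\tilde w$ with $\tilde w_{11}=\tilde w_{10}=\alpha_i$, $\tilde w_{01}=\tilde w_{00}=0$. In particular, for every action set $A\supseteq A^0$, the same action profiles are equilibria under $w$ and under $\tilde w$, and the principal, selecting her preferred equilibrium, selects the same one in both games. Now compare payoffs at that selected profile $(a_i^\star,a_j^\star)$: under $w$ the principal pays, in expectation, $\alpha_i\big(p(a_i^\star)+p(a_j^\star)\big)+\alpha_j\big(p(a_i^\star)+p(a_j^\star)\big)=(\alpha_i+\alpha_j)\big(p(a_i^\star)+p(a_j^\star)\big)$, whereas under $\tilde w$ she pays only $\alpha_i\big(p(a_i^\star)+p(a_j^\star)\big)$. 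Since output $p(a_i^\star)+p(a_j^\star)$ is the same, the principal's expected payoff under $w$ is weakly below that under $\tilde w$ in every $A$ (strictly below whenever $\alpha_j>0$ and the selected actions are productive). Taking infima over $A\supseteq A^0$ gives $V(w)\le V(\tilde w)\le V^*_{IPE}$.

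The only subtlety — and the step I would be most careful about — is the claim that the principal selects the "same" equilibrium under $w$ and $\tilde w$, since the preferred-equilibrium selection is defined by maximizing the principal's own payoff, which differs across the two contracts. The clean way around this is to argue directly with infima: for each $A$, pick any $\sigma\in\mathcal E(w,A)$ achieving $V(w,A)$; since $\mathcal E(w,A)=\mathcal E(\tilde w,A)$, this same $\sigma$ is available under $\tilde w$, and it yields the principal a weakly higher payoff there (by the wage-bill comparison above, which holds pointwise in the realized/expected output), so $V(\tilde w,A)\ge V(w,A)$. Taking $\inf_{A\supseteq A^0}$ then yields $V(w)\le V(\tilde w)\le V^*_{IPE}$, completing the proof. (A handful of routine checks — that mixed strategies cause no trouble because everything is linear in the marginal success probabilities $p(a_i),p(a_j)$, and that $\alpha_0=0$ truly changes nothing strategically — I would relegate to a line or two.)
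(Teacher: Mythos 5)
Your proof is correct and follows essentially the same route as the paper's: remove the constant $\alpha_0$ and the other-agent term $\alpha_j$ to obtain the IPE $\tilde w$ with the same $\alpha_i$, observe that the equilibrium set $\mathcal E(w,A)=\mathcal E(\tilde w,A)$ is unchanged for every $A\supseteq A^0$ (these terms shift each agent's payoff by an amount independent of her own action), note that expected wage payments are pointwise weakly lower under $\tilde w$ while productivity is unchanged, and then take infima over $A$. The extra care you take about equilibrium selection is sound but not strictly necessary here: since the set of equilibria coincides and the principal's payoff from any fixed $\sigma$ is weakly higher under $\tilde w$, the maximum over the common set is weakly higher as well, which is exactly what the paper uses.
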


\begin{proof}
Suppose $w$ is an affine contract with parameters $\alpha_0, \alpha_i, \alpha_j \geq 0$. Consider an IPE contract $w'$ with parameters $\alpha'_0=\alpha'_j=0$. I claim that this contract weakly increases the principal’s worst-case payoff. First, observe that, for any $A \supseteq A_0$, the incentives of the agents are unchanged; a constant shift in an agent’s payoff holding fixed the action of the other does not affect her optimal choice of action. Hence, $\sigma \in \mathcal{E}(w,A)$ if and only if $\sigma \in \mathcal{E}(w',A)$. Second, observe that, for any equilibrium $\sigma \in \mathcal{E}(w,A)=\mathcal{E}(w',A)$, the principal's expected payoff under $w'$ is weakly larger than under $w$; her expected wage payments decrease and each agent's productivity is unchanged. Hence, $V(w',A) \geq V(w,A)$ for any $A \supseteq A^0$. It follows that
\[ V(w)=  \underset{A \supseteq A^0}{\inf}~V(w,A) \leq \underset{A \supseteq A^0}{\inf}~V(w',A)= V(w') \leq V^*_{IPE}.\]\end{proof}

More generally, any eligible contract $w$ with $w_{00}>0$ or $w_{01}>0$ can be improved upon by another contract $w'$ with $w'_{00}=w_{01}=0$ or, alternatively, cannot yield a payoff higher than $V^*_{IPE}$.

\begin{lemma}[Suboptimality of Positive Wages for Failure]\label{zero}
For any eligible contract $w$ with $w_{00}>0$ or $w_{01}>0$, there either exists a contract $w'$ with $w'_{01}=w'_{00}=0$ and $V(w') \geq V(w)$, or $V^*_{IPE} \geq V(w)$.
\end{lemma}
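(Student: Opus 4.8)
The plan is to collapse the four wages to the two ``incentive coefficients'' $d_1:=w_{11}-w_{01}$ and $d_2:=w_{10}-w_{00}$, and to put $\lambda(q):=q\,d_1+(1-q)\,d_2$. A one-line computation shows that an agent who succeeds with probability $p$ against an opponent who succeeds with probability $q$ has expected wage $p\,\lambda(q)+\bigl(q\,w_{01}+(1-q)\,w_{00}\bigr)$; since the bracketed term does not involve the agent's own action, the best-response correspondences, hence the equilibrium set $\mathcal{E}(w,A)$ for every finite $A\supseteq A^0$, depend on $w$ only through $(d_1,d_2)$. Among all nonnegative contracts with a given $(d_1,d_2)$, the principal's wage bill at every action profile is minimized simultaneously by $\widehat{w}:=\bigl(\max(0,d_1),\max(0,d_2),\max(0,-d_1),\max(0,-d_2)\bigr)$; passing from $w$ to $\widehat{w}$ leaves all equilibria in place and weakly shrinks payments at every profile, hence weakly increases $V(\cdot,A)$ for every $A$ and so weakly increases $V(\cdot)$. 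Thus it suffices to prove the conclusion with $w$ replaced by $\widehat{w}$, and I would organize this by the signs of $d_1$ and $d_2$.

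First, if $d_1\ge0$ and $d_2\ge0$ then $\widehat{w}=(d_1,d_2,0,0)$ already satisfies $\widehat{w}_{01}=\widehat{w}_{00}=0$, so the first alternative holds with $w'=\widehat{w}$. Second, if $d_1\le0$ and $d_2\le0$ (this includes $d_1=d_2=0$): when $d_1=d_2=0$ the agents face precisely the incentives of the contract that pays zero (each simply chooses a cheapest action), so $V(w)\le0<V^*_{IPE}$ because the zero contract has worst-case payoff $0$; otherwise $\lambda(q)\le0$ for all $q$ and $\max(|d_1|,|d_2|)>0$, and I would feed in the environments $A^0\cup\{a_\epsilon\}$ with $a_\epsilon:=(0,\epsilon)$. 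Invoking the second part of Assumption \ref{assumption} (so $c_{\min}:=\min_{a_0\in A^0}c(a_0)>0$), for every $\epsilon<c_{\min}/\max(|d_1|,|d_2|)$ the action $a_\epsilon$ strictly dominates each $a_0\in A^0$, making $(a_\epsilon,a_\epsilon)$ the unique Nash equilibrium; then $V(w,A^0\cup\{a_\epsilon\})\to-2w_{00}\le0$ as $\epsilon\to0$, so $V(w)\le0<V^*_{IPE}$ and the second alternative holds.

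The remaining, and genuinely hard, case is $d_1d_2<0$: these are exactly the ``mixed'' contracts either with $w_{11}<w_{01}$ and $w_{10}>w_{00}$ — for which $\widehat{w}=(0,d_2,-d_1,0)$ and the induced games are submodular — or with $w_{11}>w_{01}$ and $w_{10}<w_{00}$ — for which $\widehat{w}=(d_1,0,0,-d_2)$ and the induced games are supermodular. In either subcase $\widehat{w}$ still carries a strictly positive wage for failure, so the first alternative is unreachable and I must prove $V(\widehat{w})\le V^*_{IPE}$. The obstacle is precisely that $\lambda$ changes sign here — an agent wants a low action when her opponent is productive and a high action otherwise — so the ``drive both agents down to $a_\epsilon$'' device of the previous paragraph collapses; and since the principal selects her preferred equilibrium, an adversary who forces an asymmetric (high, low) profile, or a low symmetric profile, is undercut by some better equilibrium she can instead select. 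The plan is to lean on Lemma \ref{lemma_sub} in the submodular subcase and Lemma \ref{lemma_super} in the supermodular subcase: iterating best responses from the extreme action profiles sandwiches every equilibrium success-probability pair $(\pi_1,\pi_2)$, and then — using that mixed strategies, hence outputs, are independent in equilibrium — the principal's best-equilibrium payoff is an explicit quadratic in $(\pi_1,\pi_2)$ whose supremum over adversarial environments reduces to a low-dimensional optimization; this I would compare, term by term, with $V^*_{IPE}=2\max_{w\in[0,1],\,a_0\in A^0}\bigl(p(a_0)-c(a_0)/w\bigr)(1-w)$. A route I would pursue in parallel is to associate to each environment $A$ where $\widehat{w}$ beats the best IPE a comparison environment $A'$ and a calibrated IPE $w^I$ with $V(\widehat{w},A)\le V(w^I,A')$, which would give $V(\widehat{w})\le V(w^I)\le V^*_{IPE}$ directly. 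I expect this last case to absorb essentially all of the real work; the first two cases are short.
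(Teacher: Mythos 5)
Your $(d_1,d_2)$ reduction with $\lambda(q)=qd_1+(1-q)d_2$ and the contract $\widehat{w}$ is a clean repackaging of the paper's case-by-case ``shift'' argument, and your treatment of the two easy sign configurations ($d_1,d_2\ge0$ and $d_1,d_2\le0$) is correct and tracks the paper's first two subcases. The genuine gap is that the case $d_1 d_2<0$, which you yourself flag as ``essentially all of the real work,'' is only sketched, not proved. You name two possible routes (sandwich equilibria via Lemmas~\ref{lemma_sub} and \ref{lemma_super} and then compare a quadratic; or build a calibrated IPE comparison environment) but carry out neither, so the lemma is not established. There is also a structural misreading you should correct before attempting that case: you assert that for $d_1d_2<0$ ``the first alternative is unreachable'' because $\widehat{w}$ still pays for failure. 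That conflates the specific reduced contract $\widehat{w}$ with the existence of \emph{some} improving contract $w'$ with $w'_{01}=w'_{00}=0$. In fact, in the supermodular subcase ($d_1>0>d_2$, so $\widehat{w}=(d_1,0,0,-d_2)$), the paper establishes precisely the first alternative, not the $V^*_{IPE}$ bound: it invokes a monotone comparative-statics result (Theorem~6 of \cite{MilgromRoberts1990}) to show that the maximal equilibrium action of the supermodular game $\Gamma(w,A)$ is decreasing in $w_{00}$, hence setting $w'_{00}=0$ both raises worst-case productivity and lowers the wage bill, giving $V(w')\ge V(\widehat{w})$ with $w'=(d_1,0,0,0)$. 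That route is short and sidesteps the quadratic comparison you anticipate.

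The submodular subcase ($d_1<0<d_2$, $\widehat{w}=(0,d_2,-d_1,0)$) is where the real difficulty lives, and your sketch understates it. Iterating $\widetilde{BR}$ from the extreme profile does produce a sandwich $(\bar a,\underline a)$, but because the principal selects her \emph{preferred} equilibrium, the adversary cannot simply force the symmetric low profile; the paper instead constructs, via Brouwer's fixed-point theorem, a zero-cost action $a^*_\epsilon$ whose success probability is a fixed point of the best-response map $T_\epsilon$, so that $(a^*_\epsilon,a^*_\epsilon)$ is a (worst-case) symmetric Nash equilibrium whose payoff is shown algebraically to lie below $V^*_{IPE}$. Even after that, the proof must still confront exactly the multiplicity concern you raise in one clause --- the principal might pick a beneficial asymmetric or mixed equilibrium --- and it handles this by inserting further auxiliary zero-cost actions into $A_{\epsilon_k}$ to kill or undercut any candidate equilibrium in which some agent succeeds with probability above $w_{10}/(w_{10}+w_{01})$. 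Your ``low-dimensional optimization'' plan does not on its own rule those out. So you have correctly isolated the hard case and the right structural tools, but the argument you would need is substantially more delicate than your outline suggests.
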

\begin{proof}
See Appendix \ref{proof_zero}.
\end{proof}

\noindent Though the result is familiar, the proof is surprisingly nontrivial. Specifically, while the ``shifting" argument used in the proof of Lemma \ref{sub_linear} rules out many contracts, there are two cases that require different arguments. First, when $w_{11}>0$ and $w_{00}>0$ (with $w_{01}=w_{00}=0$), I exploit supermodularity of the payoff function and a comparative statics result of \cite{MilgromRoberts1990} to argue that the probability of success under any equilibrium action decreases in $w_{00}$. Second, when $w_{10}>0$ and $w_{01}>0$ (with $w_{11}=w_{00}=0$), I must rule out asymmetric and mixed equilibria that might be beneficial for the principal.  I therefore encourage the interested reader to review it only upon reading the rest of Section \ref{opt_contract}.

An immediate corollary of Lemma \ref{zero} is that to find a worst-case optimal contract it suffices to consider nonaffine JPE satisfying $w_{11}>w_{10}$, IPE satisfying $w_{11}=w_{10}$, and RPE satisfying $w_{11}<w_{10}$. I next establish a ranking among the classes of JPE, IPE, and RPE contracts, exploiting the following observation.

\begin{observation}\label{obs_1}
If $w$ is an RPE for which $w_{00}=w_{01}=0$ and $A \supseteq A^0$, then $\Gamma(w,A)$ is a submodular game. If $w$ is a JPE for which $w_{00}=w_{01}=0$ and $A \supseteq A^0$, then $\Gamma(w,A)$ is a supermodular game.
\end{observation}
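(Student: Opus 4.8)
The plan is to write agent $i$'s payoff in closed form and then read off the sign of the relevant cross-difference. Using conditional independence of outputs together with the hypothesis $w_{01}=w_{00}=0$, only the realizations $(y_i,y_j)\in\{(1,1),(1,0)\}$ contribute to $U_i$, so
\[ U_i(a_i,a_j;w)=p(a_i)p(a_j)w_{11}+p(a_i)\bigl(1-p(a_j)\bigr)w_{10}-c(a_i)=p(a_i)\bigl[w_{10}+p(a_j)(w_{11}-w_{10})\bigr]-c(a_i). \]
The only term coupling the two agents is the bilinear term $p(a_i)p(a_j)(w_{11}-w_{10})$.

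Next I would evaluate the double difference appearing in Definition~\ref{supermodular}. Fix $a_i'\succeq a_i$ and $a_j'\succeq a_j$. The cost terms and the $w_{10}$ term are additively separable in $a_i$, hence cancel, leaving
\[ \bigl[U_i(a_i',a_j';w)-U_i(a_i,a_j';w)\bigr]-\bigl[U_i(a_i',a_j;w)-U_i(a_i,a_j;w)\bigr]=\bigl(p(a_i')-p(a_i)\bigr)\bigl(p(a_j')-p(a_j)\bigr)(w_{11}-w_{10}). \]
By the definition of the order $\succeq$, $a'\succeq a$ implies $p(a')\ge p(a)$; hence both probability differences on the right are nonnegative, and the double difference has the same sign as $w_{11}-w_{10}$.

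It remains to pin down that sign from the typology. If $w$ is a JPE with $w_{01}=w_{00}=0$, then $(w_{11},w_{01})>(w_{10},w_{00})$ cannot be strict in the second coordinate (both sides equal $0$), so $w_{11}>w_{10}$; thus $U_i$ exhibits increasing differences and $\Gamma(w,A)$ is supermodular. Symmetrically, if $w$ is an RPE with $w_{01}=w_{00}=0$, then $(w_{11},w_{01})<(w_{10},w_{00})$ forces $w_{11}<w_{10}$, so $U_i$ exhibits decreasing differences and $\Gamma(w,A)$ is submodular. I do not expect a genuine obstacle here; the only point requiring care is that increasing/decreasing differences are defined relative to the order $\succeq$ rather than the coordinatewise order on $\mathbb{R}_+\times[0,1]$, so one must invoke the fact that $\succeq$ refines the order on success probabilities — which is exactly how $\succeq$ was defined — to sign the two probability differences above.
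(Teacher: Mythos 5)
Your proof is correct and is essentially the intended argument; the paper states this as an unproved ``Observation,'' and your computation of the double difference $(p(a_i')-p(a_i))(p(a_j')-p(a_j))(w_{11}-w_{10})$, together with the careful note that $\succeq$ is defined so that $a'\succeq a$ implies $p(a')\ge p(a)$, and that the typology with $w_{01}=w_{00}=0$ forces strictness in the first coordinate, fills in exactly the steps the paper takes for granted.
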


\subsubsection{RPE Cannot Outperform IPE}\label{RPEvIPE}

I now establish that no RPE can yield a higher payoff than the best IPE.

\begin{lemma}[IPE Outperforms RPE] \label{RPEvIPE}
No RPE with $w_{01}=w_{00}=0$ can yield the principal a higher worst-case payoff than $V^*_{IPE}$.
\end{lemma}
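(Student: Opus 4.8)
The plan is to fix an arbitrary RPE contract $w$ with $w_{01}=w_{00}=0$ and $w_{11}<w_{10}$, and show that its worst-case payoff is bounded above by $V^*_{IPE}$ by exhibiting, for a well-chosen superset $A$ of $A^0$, a Nash equilibrium in which the principal does no better than she would under a suitable IPE benchmark. The natural benchmark is the IPE contract $\tilde w$ that pays each agent $w_{10}$ for individual success and $0$ otherwise: this is the ``best case'' reading of the RPE, since under RPE an agent is never paid more than $w_{10}$ for her own success. By Observation~\ref{obs_1}, $\Gamma(w,A)$ is submodular for every $A\supseteq A^0$, so Lemma~\ref{lemma_sub} gives us tight control over the equilibrium set: iterating $\widetilde{BR}$ from $(a_{\max},a_{\min})$ produces the two extremal (asymmetric) equilibria $(\bar a,\underline a)$ and $(\underline a,\bar a)$, and every equilibrium action lies between $\underline a$ and $\bar a$ in the order $\succeq$.

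The key step is to choose the added actions so as to drive one agent down to a low-productivity action while the principal selects her preferred equilibrium. Mirroring the construction in Example~\ref{simple_ex} and the IPE analysis of \cite{Carroll_2015}: start from a targeted known action $a_0$ and add a sequence of cheap, progressively-less-productive ``undercutting'' actions. Because the game is submodular rather than supermodular, the undercutting does not spread symmetrically; instead, as dominated strategies are eliminated, the two agents are pushed toward an \emph{asymmetric} profile $(\bar a,\underline a)$, where $\underline a$ has success probability approaching some $p^*$ and cost approaching $0$. The principal, choosing her preferred equilibrium, will in the limit be pinned to a payoff no larger than what the corresponding IPE would deliver: in the asymmetric equilibrium the high agent is paid $w_{11}p(\underline a)+w_{10}(1-p(\underline a)) \le w_{10}$ for success, and the low agent is paid at most $w_{10}$ for her (rare) success as well, so total expected wages are weakly larger than under $\tilde w$ against the same productivity profile, while realized output is the same. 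Hence $V(w,A)\le V(\tilde w,A')$ for the matching IPE construction, and taking infima over supersets gives $V(w)\le V^*_{IPE}$. One should also check the symmetric equilibria (both agents at $\bar a$ or both at $\underline a$) and any interior equilibria do not help the principal more; Lemma~\ref{lemma_sub}'s sandwiching, together with the fact that RPE only lowers the per-success payment relative to $w_{10}$, handles these uniformly.

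I would organize the argument as: (i) define the IPE benchmark $\tilde w$ and record that, pointwise in $(y_i,y_j)$, expected wages under $w$ are at least those under $\tilde w$ whenever the other agent's success probability is positive — actually the cleanest route is to note $w_{y_iy_j}\le \tilde w_{y_iy_j}$ fails, so instead argue in expectation at each equilibrium; (ii) invoke Observation~\ref{obs_1} and Lemma~\ref{lemma_sub} to reduce the equilibrium set of $\Gamma(w,A)$ to the extremal asymmetric profiles plus sandwiched ones; (iii) construct the undercutting sequence of action sets $A_n\supseteq A^0$ adapted to $w$ and compute the limiting worst-case equilibrium, showing it coincides (in the principal's payoff) with the limit of the IPE construction for $\tilde w$; (iv) conclude $V(w)\le V^*_{IPE}$.

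The main obstacle I anticipate is step (ii)--(iii): because the induced games are submodular, the relevant equilibria are asymmetric, and one must argue that the principal's \emph{preferred} equilibrium selection cannot rescue her — i.e., that even the best equilibrium for the principal in the undercutting games is asymptotically no better than $V^*_{IPE}$. This requires carefully tracking which agent is "high" and which is "low" along the $\widetilde{BR}$ iteration and verifying that the per-success wage to the high agent, $w_{11}p(\underline a)+w_{10}(1-p(\underline a))$, stays below $w_{10}$ while the construction simultaneously pushes $p(\underline a)$ down exactly as in the IPE case. A secondary subtlety is ruling out beneficial mixed equilibria, but these are also sandwiched by Lemma~\ref{lemma_sub} between the extremal actions, so a continuity/monotonicity argument in $p$ should close the gap.
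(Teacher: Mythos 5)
Your proposal has a fatal direction error at its core, and a secondary structural misread of what the worst-case equilibrium looks like.

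The direction error: under your benchmark IPE $\tilde w$ (paying $w_{10}$ per individual success), an agent's expected wage per success under the RPE $w$ is $p_j w_{11} + (1-p_j) w_{10}$, which is \emph{at most} $w_{10}$ since $w_{11} < w_{10}$. So against any fixed productivity profile, expected wages are weakly \emph{lower} under the RPE than under $\tilde w$, which makes the principal's payoff weakly \emph{higher} under $w$ than under $\tilde w$. This is exactly the opposite of what you assert ("total expected wages are weakly larger than under $\tilde w$"), and it means the comparison $V(w,A)\le V(\tilde w, A')$ you are aiming for goes the wrong way. Dominating $w$ pointwise by the IPE paying $w_{10}$ cannot work: the lower RPE wage is not a \emph{cost} to the principal; what hurts her is that the RPE destroys the agents' productivity more than a suitably calibrated IPE would. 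You need to quantify the equilibrium deterioration, not the wage bill.

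The structural misread: you take Lemma~\ref{lemma_sub} to mean the worst case is the \emph{asymmetric} extremal profile $(\bar a,\underline a)$. The paper instead constructs an adversarial action set $A_\epsilon = A^0 \cup \{a^*_\epsilon, a_\emptyset\}$ (plus one auxiliary action) in which the \emph{symmetric} profile $(a^*_\epsilon,a^*_\epsilon)$ is a strict Nash equilibrium, and then uses the submodular iteration precisely to show this symmetric equilibrium is the \emph{unique} one, so the principal's equilibrium-selection power is moot. The success probability $p(a^*_\epsilon)$ is defined as a fixed point of the map $T_\epsilon(p)=\max_{a_0\in A^0\cup\{a_\emptyset\}}[\,p(a_0)-c(a_0)/(pw_{11}+(1-p)w_{10})\,]+\epsilon$, whose existence is guaranteed by Brouwer. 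Taking $\epsilon\to 0$, the fixed point $p^*$ defines the \emph{effective piece rate} $\hat\alpha:=p^*w_{11}+(1-p^*)w_{10}$, and the principal's payoff can be rewritten as $2[p(\hat a_0)-c(\hat a_0)/\hat\alpha](1-\hat\alpha)$, which is bounded by $V^*_{IPE}=2\max_{\alpha,a_0}(1-\alpha)(p(a_0)-c(a_0)/\alpha)$. The correct "IPE comparison" is therefore to the \emph{optimal} IPE via the effective rate $\hat\alpha$, not to the specific IPE paying $w_{10}$. Your plan is missing both the fixed-point construction that pins down the worst-case symmetric equilibrium and the change of variables to $\hat\alpha$ that makes the comparison to $V^*_{IPE}$ go through.
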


\begin{proof}
See Appendix \ref{IPE>RPE}.
\end{proof}

I sketch the proof for the case in which there is a single known action, i.e., $A^0:=\{a_0\}$. Suppose each agent has available a single additional zero-cost action $a^*$ that results in success with probability $p(a^*)<p(a_0)$. Then, $a^*$ is a strict best response to $a^*$ if and only if
\[ \underbrace{p(a^*) \left(p(a^*) w_{11}+ (1-p(a^*)) w_{10} \right)}_\text{Payoff $a^*$ against $a^*$}> \underbrace{p(a_0) \left(p(a^*) w_{11}+(1-p(a^*)) w_{10} \right)-c(a_0)}_\text{Payoff $a_0$ against $a^*$}   \]
\begin{equation} \iff p(a^*) > p(a_0)-\frac{c(a_0)}{p(a^*) w_{11}+(1-p(a^*)) w_{10}}. \label{eq_RPE} \notag \end{equation} 
This condition also ensures that $a^*$ is a strictly dominant strategy because any RPE induces a submodular game between the agents. Intuitively, if $a^*$ is a strict best response to $a^*$, which is less productive than $a_0$, then it must also be a strict best response to $a_0$; the marginal benefit of shirking against a more productive action is higher (because $w_{10}>w_{11}$). The principal's payoff as $p(a^*)$ approaches the value at which the incentive constraint binds is therefore
\[ 2 \underbrace{(p(a_0)-\frac{c(a_0)}{p(a^*) w_{11}+(1-p(a^*)) w_{10}})}_\text{Probability Success} \times \underbrace{\left[ 1-(p(a^*) w_{11}+(1-p(a^*)) w_{10}) \right]}_\text{Conditional Expected Surplus}.\]
Letting $\hat{w}:=p(a^*) w_{11}+(1-p(a^*)) (1-w_{10})$, it is immediate that she can do no better than $V^*_{IPE}$: 
\[2 (p(a_0)-\frac{c(a_0)}{\hat{w}})(1- \hat{w}) \leq  2 \underset{w \in [0,1]}{\max}~\left[(p(a_0)-\frac{c(a_0)}{w})(1-w)\right] = V^*_{IPE}.\]
The proof for general known action sets uses a fixed-point theorem to identify the existence of a worst-case equilibrium $(a^*,a^*)$.

\subsubsection{JPE Worst-Case Payoffs}\label{analysis_JPE}
 Within the class of contracts setting $w_{00}=w_{01}=0$, the only contracts left to consider are nonaffine JPE for which $w_{11}>w_{10}$. (Notice that such contracts can be re-written in the form described in Example \ref{simple_ex} by defining $w_0:=w_{10}$ and $b:=w_{11}-w_{10}$.) Lemma \ref{JPE_worst} states the principal's worst-case payoff guarantee from any contract of this form.
 
\begin{lemma}[JPE Worst-Case Payoffs] \label{JPE_worst}
Suppose $w$ is a JPE with $w_{00}=w_{01}=0$ and, for each $a_0 \in A^0$, $\hat{p}(\cdot | a_0): [0,\hat{t}(a_0)] \rightarrow [0,p(a_0)]$ is the unique solution to the initial value problem
\begin{equation}
\begin{aligned}
\hat{p}'(t) =& f(\hat{p}(t)):= - \left[ \hat{p}(t) w_{11} +(1-\hat{p}(t)) w_{10}\right]^{-1}~~\text{with} \\
 \hat{p}(0)=& p(a_0),
\end{aligned}
\label{diffeq}
\end{equation}
where $[0,\hat{t}(a_0)] \subseteq  [0,c(a_0)]$ is the largest interval on which $\hat{p}(t)>0$ for all $t \in [0,\hat{t}(a_0))$. Then,
\begin{equation}
V(w) = 2~ \min\{1-w_{11}, \bar{p} \left[ \bar{p} (1-w_{11})+ (1-\bar{p}) (1- w_{10}) \right] \},
\label{worst}
\end{equation}
where \[\bar{p} := \underset{a_0 \in A^0}{\max}~~\hat{p}(\hat{t}(a_0) | a_0).\]
\end{lemma}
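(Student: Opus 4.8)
The strategy is to characterize the principal's worst-case payoff given a JPE contract $w$ with $w_{00}=w_{01}=0$ by analyzing the supermodular games $\Gamma(w,A)$ over all $A \supseteq A^0$ (Observation~\ref{obs_1}). By Lemma~\ref{lemma_super}, each such game has a minimal equilibrium $(\underline{a},\underline{a})$ reached by iterating the minimal best-response starting from $a_{\min}$, and the principal's payoff in any equilibrium she selects is weakly above that of the minimal equilibrium; since we are taking an infimum over supersets $A$, the adversary can always append actions to drive play toward a low-productivity symmetric profile, so the worst case is governed by how far down a chain of successive ``undercutting" best responses can go. The key observation is that if the current symmetric profile has success probability $p$, then the adversary's most effective new action $a^*$ is the cheapest action that is a strict best response to $p$ and has $p(a^*)$ as small as possible; the binding incentive constraint for $a^*$ to beat an incumbent action $a$ of cost $c$ and probability $p$ is
\[ p(a^*)\bigl(p(a^*)w_{11}+(1-p(a^*))w_{10}\bigr) \ge p\bigl(p(a^*)w_{11}+(1-p(a^*))w_{10}\bigr)-c, \]
which rearranges to $p(a^*) \ge p - c\bigl[p(a^*)w_{11}+(1-p(a^*))w_{10}\bigr]^{-1}$. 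Treating the per-step cost budget as a continuous variable $t$ and taking the limit of infinitesimal undercutting steps, the reachable success probability as a function of cumulative cost spent satisfies exactly the initial value problem~\eqref{diffeq}: $\hat p'(t) = -\bigl[\hat p(t)w_{11}+(1-\hat p(t))w_{10}\bigr]^{-1}$, started from each known action's probability $p(a_0)$ and run until either the cost $c(a_0)$ is exhausted or $\hat p$ hits zero. The lowest probability the adversary can force, optimizing over which known action to start the chain from, is $\bar p := \max_{a_0 \in A^0} \hat p(\hat t(a_0)\mid a_0)$ (the max because a larger terminal probability is worse for the principal only if… actually the adversary wants $\hat p$ small, and starting from the best $a_0$ gives the adversary the most room, but the principal gets to anticipate — one must check the direction carefully; the claim is that $\bar p$ is the relevant terminal value).

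\textbf{Key steps, in order.} (1) Reduce to symmetric equilibria and to the minimal equilibrium of each supermodular game: using Lemma~\ref{lemma_super} and the principal's decreasing payoff in "lower" play, show $V(w,A)$ is controlled by the minimal symmetric equilibrium of $\Gamma(w,A)$, and that for any $\varepsilon$ the adversary can choose $A$ so that the minimal equilibrium has success probability within $\varepsilon$ of what the ODE predicts. (2) Discrete-to-continuous comparison: given any finite $A$, bound the probability of the minimal equilibrium below by $\hat p$ evaluated at total cost $c(a_0)$ for an appropriate $a_0$ — i.e., show no finite sequence of undercutting actions can push success probability below $\bar p$, because each undercutting step from probability $p$ to probability $p'<p$ costs at least $\int_{p'}^{p}[\sigma w_{11}+(1-\sigma)w_{10}]\,d\sigma / (\text{something})$ — more precisely, the incentive constraint integrated gives a lower bound on cost consumed, and these costs are bounded by $c(a_0)$. (3) Achievability: construct, for each $n$, a finite action set $A_n \supseteq A^0$ with $n$ equally spaced "undercutting" actions along the solution curve so that $\Gamma(w,A_n)$ is dominance-solvable with the minimal (and unique) equilibrium converging to success probability $\bar p$; verify these are valid actions (cost and probability in range) and that the game is supermodular so iterated elimination applies. (4) Evaluate the principal's payoff at the limiting symmetric profile with success probability $\bar p$: her expected payoff per agent is $\bar p[\bar p(1-w_{11})+(1-\bar p)(1-w_{10})]$, but one must also account for the possibility that the principal does better by a degenerate consideration — namely the $\min$ with $1-w_{11}$ arises because if $w_{11} \le w_{10}$ is false… actually $w_{11}>w_{10}$ here, so $1-w_{11}<1-w_{10}$, and the term $1-w_{11}$ is the payoff when agents play the highest action $p=1$ if such exists; the $\min$ captures that the adversary could also include a costless perfectly-productive action making $(p=1,p=1)$ forced, yielding $1-w_{11}$ per agent, and the principal's worst case is the smaller of the two adversarial options. (5) Combine: $V(w) = 2\min\{1-w_{11},\,\bar p[\bar p(1-w_{11})+(1-\bar p)(1-w_{10})]\}$.

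\textbf{Main obstacle.} The hard part is Step (2)–(3), the tight discrete-to-continuous correspondence: showing that the supremum over finite adversarial action sets of "how far down play can be pushed" is exactly the terminal value of the ODE, neither more nor less. One direction (achievability via fine discretization) requires verifying that a chain of actions spaced along the solution curve is genuinely dominance-solvable — each action must strictly beat the one above it against the relevant opponent action, which needs the strict version of the incentive inequality and care at the endpoints where $\hat p \to 0$. The other direction (no finite chain beats the ODE) requires an integral/telescoping argument converting the sum of per-step cost expenditures into a lower Riemann-type bound on $\int [\sigma w_{11}+(1-\sigma)w_{10}]^{-1}\,d(\text{prob})$ — equivalently showing the ODE's right-hand side is the pointwise-steepest admissible descent — and then invoking $\sum c(a_k) \le c(a_0)$ (since the chain starts from a known action whose net value bounds the available cost). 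Getting the constants and the direction of the ODE exactly right, and handling the $\min$-with-$1-w_{11}$ case cleanly (which known action $a_0$ achieves the max, and why the adversary cannot do better by mixing chains from different starting actions), is where the real work lies. I would also need Lemma~\ref{lemma_super}'s characterization to rule out that asymmetric or mixed equilibria of the intermediate games give the principal an escape route — since the games are supermodular this follows, but it must be stated.
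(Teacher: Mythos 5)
Your plan follows the same overall route as the paper --- Euler-discretized undercutting chain, lower bound via a discrete-to-continuous comparison, achievability via an explicit sequence of dominance-solvable games, and the $\min$ accounting for the costless full-success action --- but there are two genuine gaps that would sink the argument as written.

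First, you anchor the analysis on the \emph{minimal} equilibrium of $\Gamma(w,A)$, arguing that since the principal's payoff in any equilibrium weakly exceeds that of the minimal one, ``$V(w,A)$ is controlled by the minimal symmetric equilibrium.'' That direction is backwards. By definition $V(w,A)$ is the principal's payoff at her \emph{preferred} equilibrium, which --- because the JPE induces a supermodular game with positive spillovers in the relevant region --- is the \emph{maximal} equilibrium found by iterating $\overline{BR}$ from $a_{\max}$. The adversary's task is therefore to drive the \emph{maximal} equilibrium down. This is the entire reason the chain construction is non-trivial: merely appending a zero-cost, zero-productivity action $a_\emptyset$ to $A$ produces a terrible minimal equilibrium but leaves the maximal equilibrium (and hence $V(w,A)$) completely unchanged, since the high actions still best-respond to each other. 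Each action in the Euler chain must be a (maximal) best response to the one above it, so that iterated elimination knocks out the higher actions one by one and the ceiling itself comes down. Your proposal never isolates this point and would, if executed against the minimal equilibrium, prove a bound that the principal's selection rule renders irrelevant.

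Second, you explicitly flag uncertainty about why $\bar p$ is a \emph{max} over $a_0 \in A^0$ (``one must check the direction carefully; the claim is that $\bar p$ is the relevant terminal value''). This is not a detail to defer: the max encodes the constraint that the adversary cannot push the maximal equilibrium's productivity below the terminal value of the ODE started from any known action, because that known action would then become a profitable deviation along the chain and halt the descent. The paper handles this with a ``no crossing'' property of the solution curves (uniqueness of the IVP solution means the curve started from the most ``resistant'' known action dominates the others pointwise) together with an inductive lower bound exploiting concavity of $\hat p(\cdot \mid a)$. Without that piece, the lower-bound direction of your Step (2) is incomplete: your telescoping bound bounds cost spent along a \emph{single} chain from a \emph{single} starting action, but it does not rule out that an alternative known action of smaller cost might allow the adversary to descend further, nor does it explain why the answer ties to the most-favorable-to-the-agents $a_0$ rather than the least. (Separately, the budget constraint you invoke, ``$\sum c(a_k) \le c(a_0)$,'' should be the telescoping sum of cost decrements $\sum (c(a_{k-1})-c(a_k)) \le c(a_0)$, not the sum of costs; and the integrand in your Riemann bound is left half-formed.)

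Beyond these two points the rest of your sketch is aligned with the paper's argument: the ODE arises exactly as you describe from the infinitesimal-step incentive constraint, the achievability leg via a fine discretization along the solution curve is correct in outline (the paper uses an explicit Euler scheme with a vanishing rounding term), and your reading of the $\min\{1-w_{11},\cdot\}$ as the adversary's option to force $(p_1,p_2)=(1,1)$ via a costless perfectly-productive dominant action is correct. But as written the proof does not go through without fixing the maximal-equilibrium framing and the ``no crossing''/max-over-$a_0$ argument.
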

\begin{proof}
See Appendix \ref{JPEworst_proof}.
\end{proof}
The principal's worst-case payoff, $V(w)$, is two times the minimum of two terms. The first term is the principal's payoff from each agent when the worst-case action set induces a game between the agents in which there is a unique equilibrium in which both succeed with probability one. The second term is the principal's payoff when the worst-case action set induces a game between the agents in which, in the maximal equilibrium induced by the contract, each succeeds with a probability $\bar{p}$ as low as possible. (Both are required because, for high enough $w_{11}$, the principal may prefer the ``shirking equilibrium".) Rather than outline the entire proof of Lemma \ref{JPE_worst}, I instead describe the sequence of games that leads to the worst-case distribution $\bar{p}$, focusing on why the ``one unknown action" construction of Example \ref{simple_ex} is insufficient.

\paragraph{The Worst-Case Sequence of Games.}

For simplicity, suppose there is a single known action $a_0$ with success probability $p(a_0)=1$ and cost $c(a_0)=\frac{1}{4}$. The optimal IPE puts $w^*=w_{11}=w_{10}=\frac{1}{2}$. Given $w^*$, the worst-case success probability approaches \[p(a_0)-\frac{c(a_0)}{w^*}=\frac{1}{2}.\] 
Now, suppose I reduce $w_{10}$ to zero, but keep all other wages the same. This contract is (trivially) calibrated to $w^*$ and the known action $a_0$:
\[  \underbrace{p(a_0)}_{=1} \underbrace{w_{11}}_{=\frac{1}{2}}+ \underbrace{(1-p(a_0))}_{=0} w_{10} =\underbrace{w^*}_{=\frac{1}{2}}.\]
So, according to the analysis in Example \ref{simple_ex}, there is ostensibly \textit{no} efficiency loss generated by this modification. 

In particular, if I consider only the class of games with action sets of the form $A^1:=A^0 \cup \{a^1_1\}$, for some action $a^1_1$ with success probability $p(a^1_1)<p(a_0)$, then the worst case for the principal occurs as $p(a^1_1)$ approaches the value at which the best-response condition binds:  \[p(a^1_1) \left[ p(a_0) w_{11}+(1-p(a_0)) w_{10} \right] -c(a^1_1)=p(a_0) \left[ p(a_0) w_{11}+(1-p(a_0)) w_{10} \right] -c(a_0) \] \[\iff p(a^1_1)=p(a_0)-\frac{c(a_0)-c(a^1_1)}{p(a_0) w_{11}+(1-p(a_0)) w_{10} }\geq \frac{1}{2}.\] See Figure \ref{fig_onestep} for a geometric representation of this argument.

\begin{figure}[t]
\centering
\includegraphics[scale=0.25]{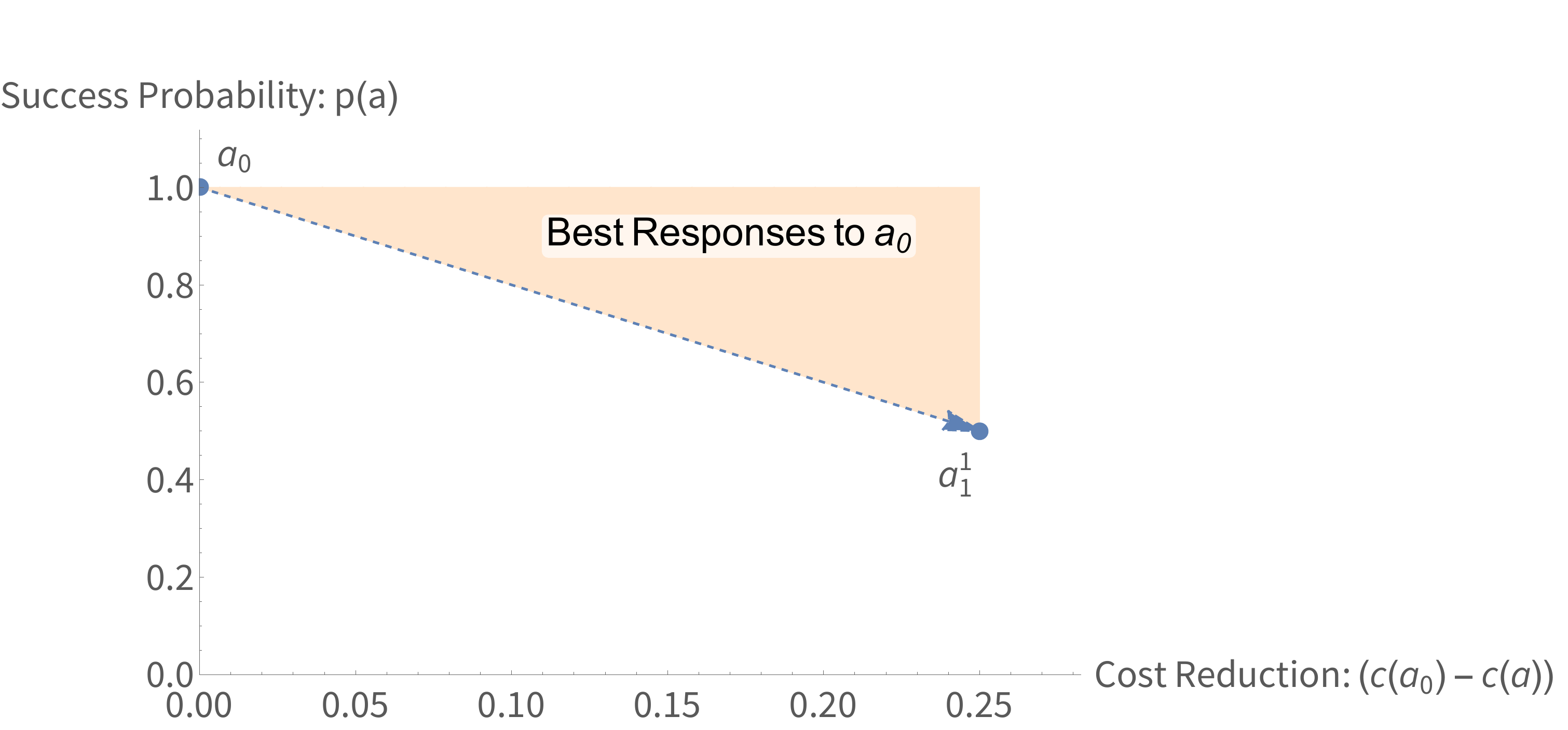}
\caption{$A^1$ best-response path.}\label{fig_onestep}
\caption{This figure depicts the best-response response path starting from the known (maximal) action $a_0$. The dashed line may be interpreted as an \textit{indifference curve} with slope $-1/(p(a_0) w_{11}+(1-p(a_0)) w_{10})$ and intercept $p(a_0)$: each action on the line, $a$, is identified by its cost relative to $c(a_0)$, $x=c(a_0)-c(a)$, and its success probability, $y=p(a)$. Since the slope of the indifference curve is negative, the maximal reduction in success probability occurs when the cost reduction is as large as possible, i.e., when $c(a^1_1)=0$ so that $x=\frac{1}{4}$. As $p(a^1_1) \downarrow \frac{1}{2}$, the worst-case probability is achieved.}
\end{figure}

But what if there are two unknown actions? Consider the action set $A^2:=A^0 \cup \{a^2_1, a^2_2\}$, where $a^2_1$ has a positive cost of $c(a^2_1)= \frac{c(a_0)}{2}=\frac{1}{8}$ and $c(a^2_2)=0$. A simple calculation shows that for $a^2_1$ to be a strict best-response to $a_0$, it must be the case that
 \[p(a^2_1) \left[ p(a_0) w_{11}+(1-p(a_0)) w_{10} \right] -c(a^2_1)>p(a_0) \left[ p(a_0) w_{11}+(1-p(a_0)) w_{10} \right] -c(a_0) \] \[\iff p(a^2_1)>p(a_0)-\frac{c(a_0)-c(a^2_1)}{p(a_0) w_{11}+(1-p(a_0)) w_{10}}= \frac{3}{4}.\] Furthermore, for $a^2_2$ to be a best-response to $a^2_1$, it must be the case that 
\[p(a^2_2)>p(a^2_1)-\frac{c(a^2_1)-c(a^2_2)}{p(a^2_1) w_{11}+(1-p(a^2_1)) w_{10}}=p(a^2_1)-\frac{1}{4 p(a^2_1)}.\]
If $p(a^2_1)$ is close to $\frac{3}{4}$ and $p(a^2_2)$ is close to $p(a^2_1)-1/(4 p(a^2_1))$, then, in addition, $a^2_1$ is the unique best-response to $a_0$ and $a^2_2$ is the unique best-response to $a^2_1$. Hence, best-response dynamics under the operator $\overline{BR}$ converge to $(a^2_1, a^2_1)$ starting from the maximal action in $A^2$, $a_0$. Since $\Gamma(w,A^2)$ is a supermodular game  (Observation \ref{obs_1}), Lemma \ref{lemma_super} thus implies that $(a^2_1, a^2_1)$ is the maximal Nash equilibrium.\begin{footnote}{ A similar argument shows that  best-response dynamics under the operator $\underline{BR}$ converge to $(a^2_1, a^2_1)$ starting from the minimal action in $A^2$, $a^2_1$. Hence, it is in fact the unique Nash equilibrium.}\end{footnote} In it, each agent's success probability can be made arbitrarily close to \[\frac{3}{4}-\frac{1}{4 \times \frac{3}{4}}=\frac{5}{12}<\frac{1}{2}.\]
See Figure \ref{fig_twosteps}, which continues the geometric argument.

\begin{figure}[t]
\centering
\includegraphics[scale=0.5]{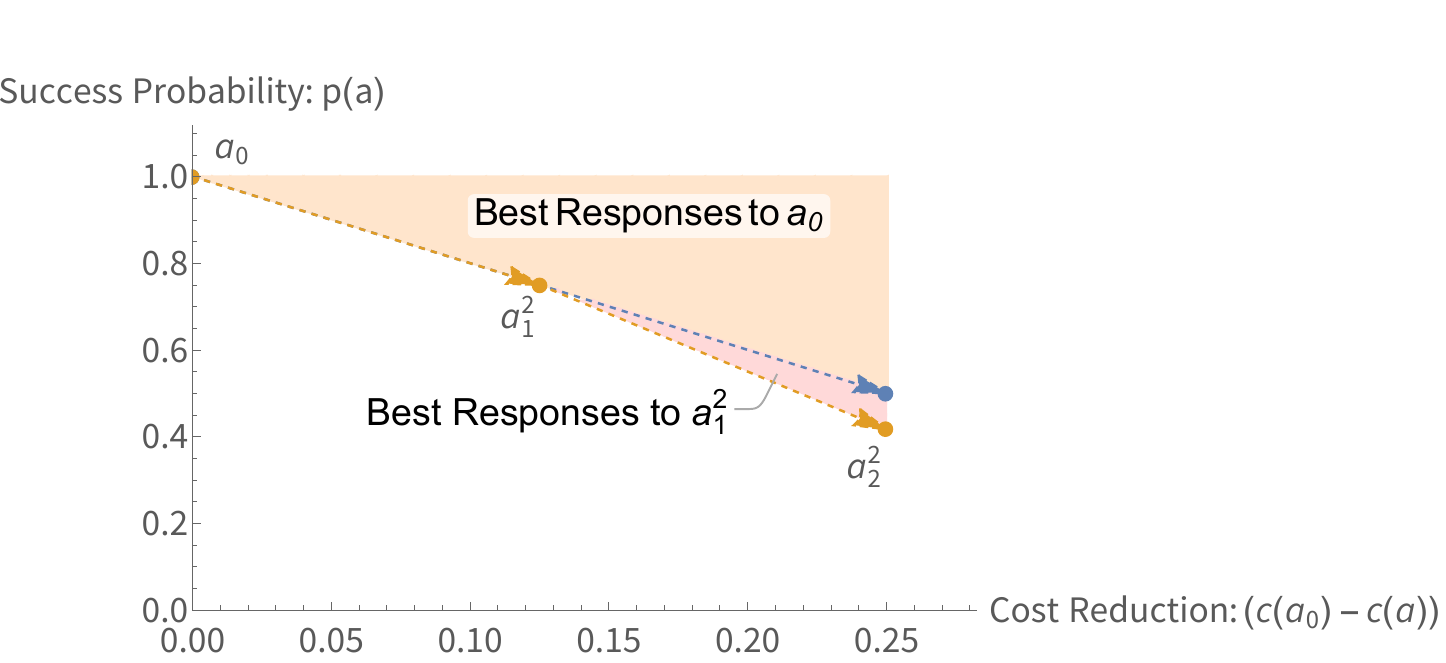}
\caption{$A^2$ best-response path.}\label{fig_twosteps}
\end{figure}

I now generalize this construction to drive the equilibrium probabilities of success even lower. Let $A^n:= A^0 \cup \{a^n_1,...,a^n_n\}$ be an action set with $c(a^n_k)=(n-k)\frac{c(a_0)}{n}$, so that costs are evenly distributed on a grid between zero and $c(a_0)$. For each $k=1,...,n$, choose $p(a_k)$ so that $a_k$ is a best-response to $a_{k-1}$, i.e. set
\begin{equation} p(a_k)= p(a_{k-1})- \epsilon(n) \left[p(a_{k-1}) w_{11}+(1-p(a_{k-1})) w_{10}\right]^{-1}+\rho(n), \label{difference}\tag{E}
 \end{equation}
where $\epsilon(n):= \frac{c(a_0)}{n}$ and $\rho(n)>0$.\footnote{To see why this is an equivalent condition, multiply both sides of the equation by $p(a_{k-1}) w_{11}+(1-p(a_{k-1})) w_{10}$.} For $\rho(n)$ small, $a_k$ is a maximal best-response to $a_{k-1}$ for all $k$. It follows that the maximal Nash equilibrium of $\Gamma(w,A^n)$ is $(a^n_n, a^n_n)$, found again by iterating best-responses. Hence, the per-agent probability of success can be reduced to $p(a^n_n)$.

\begin{figure}[t]
\centering
\includegraphics[scale=0.25]{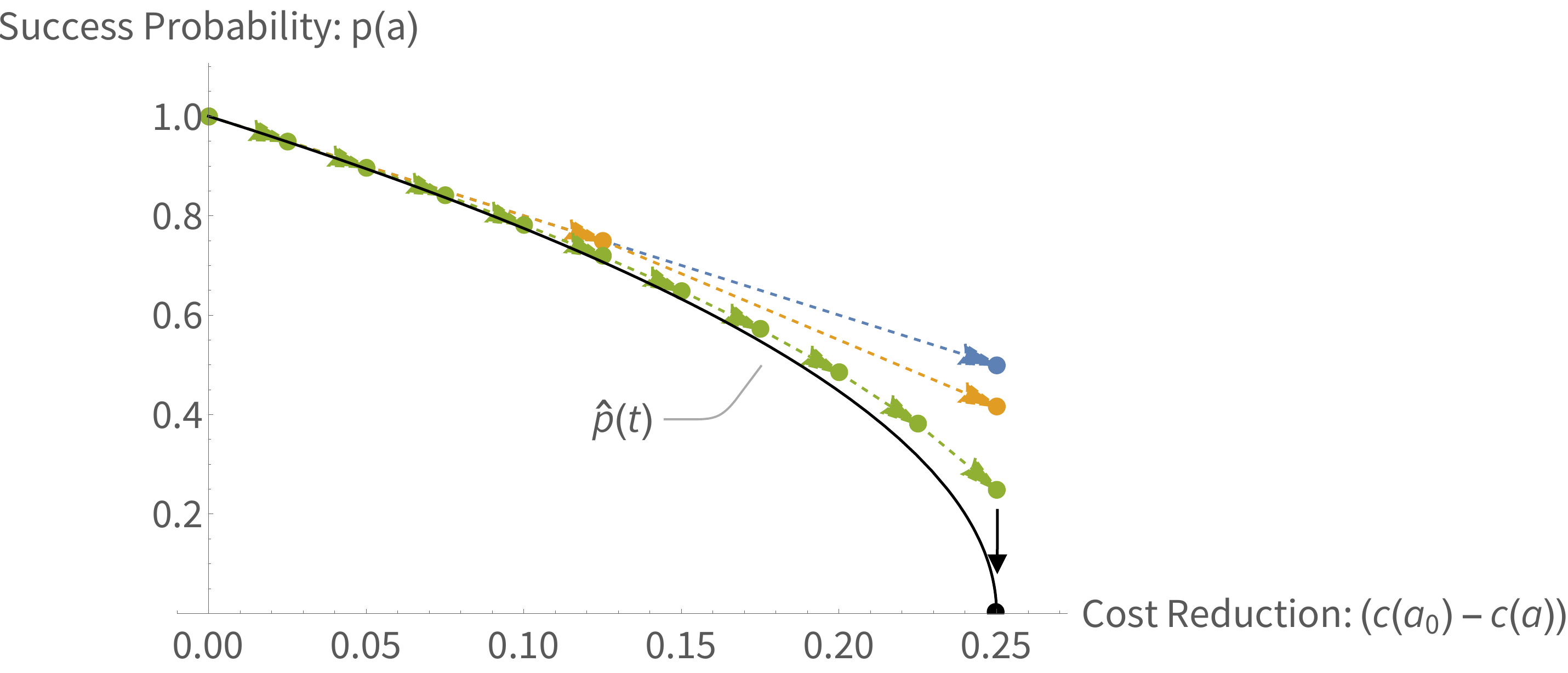}
\caption{$p(a^n_n)$ as $n \rightarrow \infty$.} \label{limit}
\end{figure}

It turns out that $\bar{p}$ is the limit of $p(a^n_n)$ as $n \rightarrow \infty$. To prove this, I observe that Equation \ref{difference} is an Euler approximation of Equation \ref{diffeq}, where $\frac{c(a_0)}{n}$ is the step size of the approximation and $\rho(n)$ is a ``rounding error". Hence, as $n$ grows large, if the rounding error $\rho(n)$ approaches zero at an appropriately fast rate relative to $\epsilon(n)$, agents' best-response dynamics are well-described by the solution to Equation \ref{diffeq},  $\hat{p}(\cdot | a_0)$, under the interpretation that $t$ is ``cost-reduction relative to $a_0$".\footnote{See, for instance, Theorem 6.3 of \cite{Atkinson_1989} and the proceeding discussion.} In the example considered here, the limit is \[\bar{p}=\hat{p}(\hat{t}(a_0) | a_0)=\hat{p}(c(a_0)|a_0)=\hat{p}(0.25 |a_0)=0,\] as depicted in Figure \ref{limit}. As zero profits are obtained in this limit, the so-constructed JPE cannot outperform the optimal IPE to which it was calibrated.

\subsubsection{Existence of a Calibrated JPE Outperforming IPE}\label{rent}

While I demonstrated in the previous section that not \textit{every} calibrated JPE outperforms the optimal IPE, I prove that there must exist one that does. 

\begin{lemma}[JPE Outperforms IPE] \label{JPEvIPE}
There exists a JPE with $w_{00}=w_{10}=0$ yielding the principal a strictly higher worst-case payoff than $V^*_{IPE}$.
\end{lemma}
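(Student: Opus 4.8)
The plan is to perturb the best IPE by grafting on an infinitesimal joint bonus while keeping the contract calibrated to the action attaining $V^*_{IPE}$, and to check via the worst-case formula of Lemma~\ref{JPE_worst} that this perturbation raises the worst-case payoff at first order. Fix $\hat a_0\in A^0$ and $w^*\in(0,1)$ attaining $V^*_{IPE}$, and write $p_0:=p(\hat a_0)$, $c_0:=c(\hat a_0)$, $\hat p^*:=p_0-c_0/w^*$. Since the optimal IPE wage for a fixed action is interior, the first-order condition gives $w^*=\sqrt{c_0/p_0}$, hence $c_0=p_0(w^*)^2$, $\hat p^*=p_0(1-w^*)\in(0,1)$, and $V^*_{IPE}=2\hat p^*(1-w^*)$. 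For $b\in[0,w^*/p_0)$ let $w(b)$ be the JPE with $w_{11}(b)=w^*+b(1-p_0)$, $w_{10}(b)=w^*-bp_0$, $w_{01}(b)=w_{00}(b)=0$: for $b>0$ it is a nonaffine JPE with $w_{11}(b)>w_{10}(b)>0$, it is calibrated to $(w^*,\hat a_0)$ in the sense $p_0w_{11}(b)+(1-p_0)w_{10}(b)=w^*$, and $w(0)$ is a best IPE, so $V(w(0))=V^*_{IPE}$.

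Next I would feed $w(b)$ into Lemma~\ref{JPE_worst}. The denominator in the initial value problem~\eqref{diffeq} for $w(b)$ equals $w_{10}(b)+b\hat p=w^*+b(\hat p-p_0)$, so at $b=0$ it degenerates to $\hat p_0'=-1/w^*$, i.e.\ $\hat p_0(t\,|\,a_0)=p(a_0)-t/w^*$. Because $\hat a_0$ attains $V^*_{IPE}$ at wage $w^*$, one has $p(a_0)-c(a_0)/w^*\le\hat p^*$ for all $a_0\in A^0$, with equality at $\hat a_0$; thus $\bar p(0)=\hat p^*$, and $C^1$-dependence of~\eqref{diffeq} on $(w_{11},w_{10})$ on a horizon bounded away from blow-up makes $b\mapsto\bar p(b)$ continuous and right-differentiable at $0$. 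Since $\hat p^*<1$, the second entry of the $\min$ in~\eqref{worst} binds at $b=0$, hence for all small $b$, so $V(w(b))/2=T(b):=\bar p(b)\big(1-w_{10}(b)\big)-b\,\bar p(b)^2$ with $T(0)=V^*_{IPE}/2$.

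It then suffices to show $T'_+(0)>0$. Differentiating~\eqref{diffeq} in $b$ at $b=0$, the sensitivity $u(t):=\partial_b\hat p_b(t\,|\,\hat a_0)\big|_{b=0}$ solves $u(0)=0$ and $u'(t)=(w^*)^{-2}\big(\hat p_0(t)-p_0\big)=-t/(w^*)^3$, so $u(c_0)=-c_0^2/(2(w^*)^3)$; since $\hat a_0$ is among the maximizers in the definition of $\bar p(0)$ and $\partial T/\partial\bar p|_{b=0}=1-w^*>0$ (so any ties at $\hat p^*$ can only raise the right derivative), $\bar p'_+(0)\ge-c_0^2/(2(w^*)^3)$. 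Using $w_{10}'(0)=-p_0$ and $p_0-\hat p^*=c_0/w^*$,
\[ T'_+(0)=\bar p'_+(0)(1-w^*)+\hat p^*(p_0-\hat p^*)\ \ge\ -\frac{c_0^2(1-w^*)}{2(w^*)^3}+\frac{\hat p^*c_0}{w^*}=-\tfrac{1}{2}p_0^2w^*(1-w^*)+p_0^2w^*(1-w^*)>0, \]
the last equality using $c_0=p_0(w^*)^2$ and $\hat p^*=p_0(1-w^*)$. Hence $T(b)>T(0)=V^*_{IPE}/2$ for all sufficiently small $b>0$, and the nonaffine JPE $w(b)$, which has $w_{01}(b)=w_{00}(b)=0$, satisfies $V(w(b))=2T(b)>V^*_{IPE}$.

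The step I expect to be the main obstacle is the second one: making rigorous that $b\mapsto\bar p(b)$ is right-differentiable at $0$ with the claimed derivative. This requires $C^1$ parameter-dependence for~\eqref{diffeq} on a time interval on which the trajectory stays bounded away from $0$; the observation that for every action tying at $\hat p^*$ the relevant horizon is exactly its cost, not an earlier vanishing time, which holds near $b=0$ precisely because $\hat p^*>0$; and bookkeeping of the finite maximum over $A^0$ --- only actions tying at $\hat p^*$ are active near $b=0$, and the one-sided derivative of the max of their smooth endpoint values is the largest of the individual one-sided derivatives. By contrast the third step is elementary, and the robustness of its final inequality (it reduces to $2>1$) indicates that no delicate constants are at stake.
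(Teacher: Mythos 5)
Your proposal is correct and takes essentially the same route as the paper's proof: the identical calibrated one-parameter family of JPE contracts around the optimal IPE pair $(w^*,\hat a_0)$ (the paper's $\epsilon$ is your $b\,p_0$), the same reduction via Lemma \ref{JPE_worst} to the shirking-equilibrium term (first entry of the min slack, max over $A^0$ handled by monotonicity of profit in $\bar p$), and the same first-order comparison showing the level rent-extraction gain dominates the marginal efficiency loss. The only difference is bookkeeping --- you obtain $\bar p'_+(0)$ from the variational equation of \eqref{diffeq}, whereas the paper integrates the ODE in closed form and applies L'H\^{o}pital --- and after reparameterizing, your values ($\bar p'_+(0)=-\tfrac{1}{2}p_0^2 w^*$, $T'_+(0)=\tfrac{1}{2}p_0^2 w^*(1-w^*)$) coincide with the paper's.
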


\begin{proof}
See Appendix \ref{JPEIPE_proof}.
\end{proof}

I illustrate the proof using the running example with a single known action $a_0$ for which $p(a_0)=1$ and $c(a_0)=\frac{1}{4}$. As previously pointed out, the optimal IPE given this action puts $w^*=w_{11}=w_{10}=\frac{1}{2}$. Now, consider the calibrated JPE setting $w_{10}=\frac{1}{2}-\epsilon$ for $\epsilon>0$ and $w_{11}= \frac{1}{2}$. I show that the calibrated JPE strictly increases the principal's worst-case payoff if $\epsilon$ is sufficiently small. Integration reveals that the solution to the differential equation defining $\bar{p}$ in Lemma \ref{JPE_worst} is \[ \bar{p}(\epsilon):= \frac{\sqrt{\frac{1}{2}(\frac{1}{2}-\epsilon)}-(\frac{1}{2}-\epsilon)}{\epsilon}. \]
By L'Hôpital's rule, as $\epsilon \rightarrow 0^+$, so that the wage scheme constructed approaches the optimal IPE, $\bar{p}(\epsilon)$ approaches $\frac{1}{2}$, the worst-case equilibrium probability of success given the optimal IPE. Differentiating $\bar{p}(\epsilon)$ and taking its limit as $\epsilon \rightarrow 0^+$, I identify a local calibration effect on the worst-case probability of success:
\[\underset{ \epsilon \rightarrow 0^+}{\lim}~~ \bar{p}'(\epsilon)=-\frac{1}{4}.\]
I now compute the local effect of calibration on the principal's \textit{profit} from each agent in the shirking equilibrium.\begin{footnote}{As the principal's profit in the shirking equilibrium at the optimal IPE is strictly lower than in the equilibrium in which both agents succeed with probability one, it suffices to show that the principal benefits from such a decrease to exhibit a strict increase in the principal's payoff.}\end{footnote} For any $\epsilon>0$, the principal's payoff per agent in the shirking equilibrium is
\[ \pi(\epsilon):= \underbrace{\bar{p}(\epsilon)}_\text{Expected Task Value} \times \underbrace{\left[ 1- (\bar{p}(\epsilon) w_{11}+ (1-\bar{p}(\epsilon)) w_{10}) \right]}_\text{Conditional Expected Surplus}. \] Using the product rule and taking limits,
\begin{equation}
\begin{aligned}
\underset{ \epsilon \rightarrow 0^+}{\lim}~ \pi'(\epsilon)&= \underset{ \epsilon \rightarrow 0^+}{\lim} \underbrace{ \bar{p}'(\epsilon) \left( 1- (\bar{p}(\epsilon) w_{11}+ (1-\bar{p}(\epsilon)) w_{10}) \right) }_\text{Efficiency Loss}+  \underbrace{\bar{p}(\epsilon)\left( \frac{p(a_0)-\bar{p}(\epsilon)}{p(a_0)}-\bar{p}'(\epsilon) \frac{\epsilon}{p(a_0)} \right) }_\text{Gain in Conditional Rents} \\
&= -\frac{1}{4} \times \frac{1}{2}+ \frac{1}{4}>0.
\end{aligned}
\notag \end{equation}
This establishes the desired result. 

The economic intuition behind the final calculation is as follows. When increasing $w_{11}$ and decreasing $w_{10}$, there are two effects. The first effect is that the probability of task completion in the worst-case Nash equilibrium is \textit{marginally} decreased, as captured by the term $\underset{ \epsilon \rightarrow 0^+}{\lim}~~ \bar{p}'(\epsilon)$ in the efficiency loss expression. On the other hand, expected wage payments are decreased in \textit{level}. Specifically, if the worst-case expected productivity of both agents is fixed at its value under the optimal IPE, $\underset{ \epsilon \rightarrow 0^+}{\lim}~~ \bar{p}(\epsilon)$, then expected wage payments to each agent conditional on success decrease by an amount
\begin{equation}
    \begin{aligned}
    w^*- \left(\left(\underset{ \epsilon \rightarrow 0^+}{\lim} \bar{p}(\epsilon) \right) w_{11}+\left(1-\left(\underset{ \epsilon \rightarrow 0^+}{\lim}\bar{p}(\epsilon)\right) \right)w_{10} \right)=  \underset{ \epsilon \rightarrow 0^+}{\lim} \left( \frac{p(a_0)-\bar{p}(\epsilon)}{p(a_0)} \right).
    \end{aligned} \notag
\end{equation} 
The final term is the non-vanishing term in the ``gain in conditional rents" expression and corresponds to the percentage decrease in the productivity of the ``worst-case" action relative to the targeted known action $a_0$. The proof establishes that the level decrease in expected wage payments dominates the marginal decrease in expected productivity due to compounding shirking behavior. Put differently, local changes away from the optimal IPE make it difficult for best-response dynamics to generate enough momentum to outweigh the level rent-extraction gain.

\subsubsection{Existence and Uniqueness}\label{laststeps}

To establish existence of a worst-case optimal contract, I simply observe that the search for an optimal JPE with $w_{00}=w_{01}=0$ can be recast as a maximization problem of a continuous function over a compact set. To establish that any worst-case optimal contract must be a JPE with $w_{00}=w_{01}=0$, I need only strengthen the proof of Lemma \ref{zero} to show that any contract $w$ with either $w_{00}>0$ or $w_{01}>0$ is either weakly outperformed by an IPE or RPE, or strictly outperformed by a JPE. I leave these last technical details to Appendix \ref{details_laststeps}, thereby completing the proof of Theorem \ref{JPE_opt}.

\subsection{Optimal Wages}
To conclude the analysis of the baseline model, I observe that optimal values of $w_{11}$ and $w_{10}$ can be found by solving the following maximization problem:
\begin{equation}
	\underset{w_{11}>w_{10} \geq 0}{\max}~~ \min\{1-w_{11}, \bar{p}(w_{11},w_{10}) \left[ \bar{p}(w_{11},w_{10}) (1-w_{11})+ (1-\bar{p}(w_{11},w_{10})) (1- w_{10}) \right] \},
	\notag \end{equation}
where $\bar{p}(w_{11}, w_{10})$ is the solution to the initial value problem in the statement of Lemma \ref{JPE_worst} and the dependence on $w_{11}$ and $w_{10}$ has been made explicit. For any eligible contract $w=(w_{11},w_{10},0,0)$ targeting a known action $a_0$, $\bar{p}$ is strictly larger than zero and hence given by the closed-form solution
 \begin{equation}
	\begin{aligned}
		\bar{p}(w_{11}, w_{10})&= \frac{\sqrt{(p(a_0) w_{11}+(1-p(a_0)) w_{10})^2-2 c(a_0) (w_{11}-w_{10})}-w_{10} }{w_{11}-w_{10}}.
	\end{aligned}
	\notag
\end{equation}
In the running example with $p(a_0)=1$ and $c(a_0)= \frac{1}{4}$, the optimal wages are $w_{11}=\frac{2}{3}$ and $w_{10}=w_{01}=w_{00}=0$; the principal increases $w_{11}$ above the optimal IPE wage, $\frac{1}{2}$, to mitigate efficiency losses. In Online Appendix \ref{numericalopt_JPE}, I demonstrate via numerical optimization that optimal compensation depends on aggregate output only, i.e., $w_{11}>w_{10}=0$, whenever the surplus generated by the team, $p_0-c_0$, is sufficiently large. On the other hand, when $p_0-c_0$ is sufficiently small, monitoring individual output has value, i.e., $w_{11}>w_{10}>0$ at the optimal wage scheme.

\section{Discussion}\label{discuss}

I now discuss the assumptions made in the baseline model and their role in driving the main result. 

\subsection{The Incomplete Contracts Assumption}\label{incomplete}
 The principal's problem can be re-phrased as follows: If she must use the same contract, i.e., mapping from successes and failures into wages, given any set of actions the agents might have available, which one does the best in the sense of yielding the highest payoff guarantee?  The solution to the problem is a positive description of how a principal might write a contract in the face of structured uncertainty about the agents' environment.
 
 Two implicit assumptions underlie this formulation. First, contracts are \textit{incomplete}; they can only depend on observable successes and failures and not on the technology of the agents. Second, in line with \cite{Carroll_2015}, the principal does not possess a Bayesian prior over the agents' unknown technology.

If contracts were to be complete, then it is well known that the manager could implement the Bayesian optimal contract technology-by-technology. For instance, she could ask agents to report the true technology and, if reports disagree, punish them with a contract that always pays zero. The resulting mechanism is incentive compatible and its strict superiority over the optimal (incomplete) contracts studied does not depend on whether the principal has Bayesian or max-min uncertainty about the agents' technology. The interpretation taken in this paper, however, and in the rest of the literature on robust contracting, is that such a mechanism violates the spirit of the robustness exercise. The principal would like to avoid changing the contract she offers as the agents' environment varies, as suggested in the introductory quotation.\begin{footnote}{It is also worth noting that the main results continue to hold under the more pessimistic assumption that agents play the principal's least-preferred equilibrium within the set of Pareto Efficient Nash equilibria (see Online Appendix \ref{selection}). Under this selection assumption, more complicated, multi-stage mechanisms must be used to implement the Bayesian optimal contract.}\end{footnote}

On the other hand, there are no general, existing results on the form of optimal \textit{incomplete} contracts under \textit{Bayesian} uncertainty.\begin{footnote}{\cite{NalebuffStiglitz_1983} do consider such a model, finding conditions under which relative performance evaluation is optimal, but their production setting is not analogous to the one considered in this paper.}\end{footnote}  
 In Online Appendix \ref{app_incompletebayesian}, I show that predictions in such a setting are indeterminate. In particular, I consider a moral hazard problem in which a principal incentivizes the agents to take a costly, surplus-generating action, $a_0$, instead of a zero-cost shirking action, $a_\emptyset$, that results in failure with probability one. I posit, however, that there is another costless action, $a^*$, with intermediate productivity that is available to the agents with probability $1-\mu \in (0,1)$. If $\mu$ is sufficiently small and $a^*$ is sufficiently productive, then it is impossible to improve upon the IPE contract that always pays the agents zero. However, if $\mu$ is sufficiently large, then the optimal IPE implements $a_0$ when $a^*$ is unavailable and $a^*$ when it is available. In such cases, calibrating a JPE to the optimal IPE strictly increases the principal's payoff: these contracts enjoy the same incentive properties as the IPE contracts to which they are calibrated, i.e., they implement the same actions, while reducing expected wage payments in the scenarios in which $a^*$ is taken. This result provides an alternative, Bayesian foundation for nonlinear JPE that has not previously appeared in the literature.\begin{footnote}{It should be made clear, however, that the model studied in the Appendix is not formally analogous to the max-min model studied in the main text; the max-min problem does \textit{not} possess a saddle point and the minimax theorem does not hold (there is a duality gap). Hence, maximizing over nature’s worst-case responses yields strictly lower profits than the worst expected
payoff of the principal over all feasible production environments. The advantage of the max-min model is the sharpness of its prediction.}\end{footnote}

\subsection{The No Discrimination Assumption}\label{nodiscrim}

This paper takes as an axiom that the principal is constrained to use symmetric contracts. This is not without justification: Any asymmetric contract is \textit{discriminatory} in the sense of treating equals unequally. Hence, such contracts may be ruled out by legal considerations or --- if the principal randomizes --- ex post fairness considerations.

On the other hand, discrimination has been shown to be of value in settings in which the principal has no uncertainty about the agents' available actions and demands the contract she uses induces her preferred strategy profile as a unique Nash equilibrium. For instance, \cite{Winter_2004} shows that asymmetric contracts can be optimal even when agents are symmetric (see also \cite{Segal_2003} and \cite{Halacetal_2021}).\footnote{As \cite{Winter_2004} points out, however, if agents are assumed to play a Pareto Efficient Nash equilibrium, then there is always an optimal contract that is a symmetric IPE. This is \textit{not} the case in the setting considered here, as discussed in Section \ref{otherextensions} and shown in Online Appendix \ref{selection}.} 

Motivated by this possibility, in Online Appendix \ref{discrimination}, I identify necessary and sufficient conditions under which the optimal nondiscriminatory contract identified in the baseline model outperforms any IPE, potentially discriminatory. I show numerically that, in the running example of the analysis, it is impossible to improve upon the optimal  nondiscriminatory JPE. However, this need not always be the case; discrimination can help the principal if agents are known to share a common action set because the worst-case action for one agent constrains the worst-case action for the other.\begin{footnote}{Of course, this advantage is eliminated once the common action set assumption is relaxed.}\end{footnote} I illustrate this point via another numerical example.
 
\subsection{The Independence and Identicality Assumptions}\label{common}

As discussed in Section \ref{relatedlit}, technological independence between agents is crucial in driving the main results. However, the role of the assumption of a common action set is less clear. In Online Appendix \ref{unknown}, I consider an elaboration of Example \ref{simple_ex} in which there is a single known action and a single unknown action for each agent, potentially heterogeneous across agents. I show via a simple extension of the arguments in the main text that any optimal contract is a nonaffine JPE. 
Extending the result to the case of any number of unknown actions requires extending the characterization of worst-case payoffs of arbitrary JPE contracts in Lemma \ref{JPE_worst}. Specifically, instead of a differential equation, free-riding dynamics would instead be characterized by a nonlinear dynamical system. I conjecture that Theorem \ref{JPE_opt} would continue to hold upon extending Lemma \ref{JPE_worst} and employing the calibration and perturbation argument in Lemma \ref{JPEvIPE}, but I leave this technically challenging extension to future work.

\subsection{The Equilibrium Selection Assumption}\label{preferred}

In the model analyzed, the principal has the power to select her most preferred Nash equilibrium. In Online Appendix \ref{selection}, I consider the solution to the principal's problem under \textit{worst-case} equilibrium selection and the additional requirement that agents play a Pareto-Efficient Nash equilibrium. As non-IPE contracts tie the incentives of agents together, agents might benefit from discussing their strategies with one another, even if they cannot make binding commitments. Such communication would deem equilibria that are strictly Pareto dominated implausible, i.e., equilibria $\sigma \in \mathcal{E}(w,A)$ for which there exists another equilibrium $\sigma' \in \mathcal{E}(w,A)$ that makes each agent strictly better off. 

In this setting, all proofs in the main text hold other than that of Lemma \ref{sub_linear}, which establishes that no affine contract can outperform the optimal IPE, and that of Lemma \ref{zero}, which establishes that rewarding failure is suboptimal. Though I conjecture that the statements of these Lemmas hold, the key challenge in extending the proofs is that constant shifts in agents' payoffs can potentially affect the set of Pareto Efficient Nash equilibria. Nevertheless, it can still be established that any worst-case optimal contract is nonlinear. In addition, the proof of Lemma \ref{JPEvIPE}, which states that there exists a JPE with $w_{00}=w_{10}=0$ yielding the principal a strictly higher worst-case payoff than $V^*_{IPE}$, holds as written. This is a simple consequence of the observation that the principal's most-preferred Nash equilibrium coincides with the unique Pareto Efficient Nash equilibrium of the supermodular game induced by the contract.

\section{A General Result}\label{otherextensions}

The baseline model is kept deliberately simple in order to isolate the key intuition behind the advantage of nonaffine JPE over IPE. In addition, the two-agent, binary output setting makes it possible to completely characterize worst-case optimal contracts.\begin{footnote}{Moving beyond this case introduces tractability issues. Specifically, it is no longer possible to classify all contracts in terms of the strategic complementarity properties of the games they induce between the agents (e.g., the taxonomy of \cite{CheYoo_2001} is no longer exhaustive). Such tractability issues are present not only in the worst-case analysis of this paper, but in existing Bayesian analyses of optimal performance evaluations (see, e.g., \cite{Fleckinger_2012}).}\end{footnote} Nevertheless, the two key findings of the analysis --- that worst-case optimal contracts are nonaffine and that JPE outperforms IPE --- also hold when there are any number of agents $i=1,2,...,n$ and individual output belongs to any compact set $Y \subset \mathbb{R}_+$ with $\min(Y)=0$ and $\max(Y)=\bar{y}>0$. I outline the key ideas behind this extension.

In the many-output environment, an action, $a$, is described by an effort cost, $c(a)$, and a probability distribution over $Y$, $F(a)$. Consequently, the non-triviality assumption is that the known action set $A_0$ contains an action, $a_0$, generating strictly positive surplus: $ E_{F(a_0)}[y]-c(a_0)>0$ . For simplicity, I assume, again, that known actions are costly, i.e., if $a_0 \in A_0$, then $c(a_0)>0$.

A \textbf{contract} in this model is a function
\[ w: Y^N \rightarrow \mathbb{R}_+. \]
It is an \textbf{independent performance evaluation (IPE)} if $w(y_i, y_{-i})$ is constant in $y_{-i}$ and a \textbf{joint performance evaluation (JPE)} contract if $w(y_i, y_{-i})$ is not an IPE and is weakly increasing in $y_{-i}$ for every $y_i$. Finally, a contract is \textbf{affine} if it can be represented as a function
\[ w(y_i,y_{-i})= \alpha_0+ \alpha_i y_i+ \sum^n_{j \neq i} \alpha_j y_j,  \quad \text{ $\alpha_k \geq 0$ for $k=0,1,...,n$.}\] 

As shown in \cite{Carroll_2015}, offering each agent an IPE contract $w^*(y_i, y_{-i})= \alpha^* y_i$, where $\alpha^*= \sqrt{c(a_0)}/ \sqrt{E_{F(a_0)}[y]}$ for some $a_0 \in A_0$, yields the principal a worst-case payoff of
\[ V^*_{IPE}:= \underset{\text{$w$: $w$ is an IPE}}{\sup}~ V(w)= n \underset{w \in [0,1], a_0 \in A^0}{\max}~\left[ \left(E_{F(a_0)}[y]-\frac{c(a_0)}{w}\right) \left(1-w \right)\right]>0.\]I establish the following generalization of the main result.

\begin{theorem}\label{multiple_thm}
Suppose there are $i=1,2,...,n$ agents and output belongs to a compact set $Y$ with $\min(Y)=0< \bar{y}=\max(Y)$. Then, any worst-case optimal contract is nonaffine and there exist values of $w_0 \geq 0$ and $b > 0$ such that the nonaffine JPE contract
\[ w(y_i, y_{-i})= ( w_0 + \frac{b}{n-1} \sum^n_{j \neq i} y_j) y_i  \]
yields the principal strictly higher worst-case expected profits than $V^*_{IPE}$.
\end{theorem}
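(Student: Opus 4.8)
The strategy is to reduce the $n$-agent, compact-output problem to a version of the two-agent, binary-output analysis already carried out, using the observation that the worst case is driven by a one-parameter family of ``shirking" environments, and then to run the same local calibration-and-perturbation argument as in Lemma \ref{JPEvIPE}. Concretely, I would first establish the nonaffine claim: an affine contract $w(y_i,y_{-i})=\alpha_0+\alpha_i y_i+\sum_{j\neq i}\alpha_j y_j$ can be weakly improved by the IPE $w'(y_i,y_{-i})=\alpha_i y_i$, because subtracting the constant $\alpha_0$ and the terms $\sum_{j\neq i}\alpha_j y_j$ from agent $i$'s wage is, for \emph{each fixed} profile of the other agents' actions, a shift of agent $i$'s payoff by a quantity that does not depend on $a_i$ (here I use that outputs are conditionally independent, so $E[\sum_{j\neq i}\alpha_j y_j\mid a]$ is independent of $a_i$); hence the induced game, and thus the equilibrium set, is unchanged, while the principal's expected wage bill weakly falls. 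This gives $V(w)\le V(w')\le V^*_{IPE}$ for every affine $w$, so no affine contract — and in particular no IPE — can be worst-case optimal once I show some JPE strictly beats $V^*_{IPE}$.

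\textbf{The JPE-beats-IPE step.} Fix the symmetric IPE $w^*(y_i,y_{-i})=\alpha^* y_i$ that achieves $V^*_{IPE}$, targeting a known action $a_0$ with $m_0:=E_{F(a_0)}[y]>0$, and consider the calibrated family
\[
w_\epsilon(y_i,y_{-i})=\Bigl(w_0(\epsilon)+\frac{b(\epsilon)}{n-1}\sum_{j\neq i}y_j\Bigr)y_i,
\qquad b(\epsilon)=\epsilon,\quad w_0(\epsilon)+\epsilon\,m_0=\alpha^*,
\]
so that an agent who produces $y_i$ and whose peers each play $a_0$ receives, in expectation, exactly $\alpha^* y_i$ — calibration in the sense of Example \ref{simple_ex}. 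The key structural facts to record are: (i) by conditional independence and the fact that $w_\epsilon$ is linear in each $y_j$ with nonnegative coefficient $\tfrac{b}{n-1}y_i$, the game $\Gamma(w_\epsilon,A)$ is supermodular in the product order induced by first-order stochastic dominance of the $F(a)$'s, exactly as in Observation \ref{obs_1}; and (ii) along a symmetric best-response path the only statistic of the peer profile that matters for agent $i$'s incentives is the \emph{average peer mean} $\bar m:=\frac{1}{n-1}\sum_{j\neq i}E_{F(a_j)}[y]$, because agent $i$'s marginal expected wage from raising output is $w_0+b\,\bar m$. Consequently, on the symmetric worst-case path every agent faces the same scalar incentive problem $\max_a\{(w_0+b\,\bar m)\,E_{F(a)}[y]-c(a)\}$ that appeared in the two-agent analysis, with $\bar m$ playing the role of the single peer's success probability; iterating worst-case best responses as costs are driven to zero yields, in the continuum limit, the same initial-value problem as (\ref{diffeq}) with $p(a_0)$ replaced by $m_0$ and $w_{11},w_{10}$ replaced by $w_0+b$ and $w_0$ suitably — i.e. a worst-case mean $\bar m(\epsilon)$ solving an ODE whose $\epsilon\to 0^+$ limit is $\alpha^*$'s worst-case mean and whose derivative is a bounded negative ``calibration effect." The principal's per-agent profit in the shirking equilibrium is then $\pi(\epsilon)=\bar m(\epsilon)\,[\,1-(w_0(\epsilon)+b(\epsilon)\bar m(\epsilon))\,]$, and differentiating at $\epsilon=0^+$ reproduces exactly the decomposition in Section \ref{rent}: a first-order ``efficiency loss" term $\bar m'(0^+)(1-\alpha^*)<0$ against a first-order ``gain in conditional rents" term equal to $\bar m(0^+)\cdot\frac{m_0-\bar m(0^+)}{m_0}>0$, and the same algebra shows the rent gain strictly dominates. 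I would also check that the alternative worst case — the environment in which the high-output equilibrium survives but the principal's payoff is $n(1-w_0-b)\bar y$ or similar — is slack at $\epsilon=0$, so it does not bind for small $\epsilon$; this is immediate because at $\epsilon=0$ the shirking equilibrium is strictly worse for the principal (the known action is strictly surplus-generating, so $\alpha^*<1$ and the IPE's worst case is the shirking one). Hence $V(w_\epsilon)>V^*_{IPE}$ for small $\epsilon>0$, and since no affine/IPE contract exceeds $V^*_{IPE}$, every worst-case optimal contract is a nonaffine JPE.

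\textbf{Main obstacle.} The delicate point is (ii): with $n>2$ agents and a continuum of outputs, a priori the worst-case environment need not be symmetric, and the peer profile enters agent $i$'s problem only through $\bar m$ \emph{only because} the contract is affine in each $y_j$ — so I must argue that restricting nature to symmetric action sets and symmetric equilibria is without loss for computing $V(w_\epsilon)$, or at least that it yields a valid \emph{lower} bound matched by an explicit \emph{upper} bound. The cleanest route is: symmetric environments give an upper bound on $V(w_\epsilon)$ via the explicit ODE construction (nature can always play them), and for the reverse I would note that since $w_\epsilon$ depends on peers only through $\sum_{j\neq i}y_j$ and outputs are independent, agent $i$'s best response depends on $(a_j)_{j\neq i}$ only through the scalar $\bar m$, so any equilibrium can be ``symmetrized" without changing the relevant scalar or the principal's per-agent payoff — formally, replace each agent's strategy by a common randomization with the same marginal mean, using supermodularity (Lemma \ref{lemma_super}) to control which equilibrium the principal selects. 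Making that symmetrization rigorous, together with the measure-theoretic care needed to run the Euler-to-ODE approximation of Lemma \ref{JPE_worst} over a compact rather than two-point output space, is where the real work lies; the rent-dominance calculation itself, once $\bar m(\epsilon)$ is pinned down by the ODE, is a routine repeat of Section \ref{rent}. The full details are deferred to the Online Appendix; see also the discussion surrounding Lemma \ref{JPE_worst}.
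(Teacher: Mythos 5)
Your proposal follows essentially the same route as the paper: shift away the constant $\alpha_0$ and the peer-linear terms $\sum_{j\neq i}\alpha_j y_j$ to show no affine contract exceeds $V^*_{IPE}$, then build the calibrated JPE of the stated form and rerun the Lemma~\ref{JPEvIPE} perturbation to obtain a strict improvement. Two remarks on where your sketch diverges from what is actually needed.

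First, the lattice you want is not the FOSD product order on distributions (which is only a partial order, so $(A,\succeq)$ would fail to be a complete lattice and the Vives--Milgrom--Roberts machinery would not directly apply). Since the contract is linear in each $y_j$, agent $i$'s interim payoff depends on peers only through $E_{F(a_j)}[y]$, so the right structure is the \emph{total} order by expected output $E_{F(a)}[y]$ with cost as a tiebreak, exactly as in the two-output case. Second, the ``main obstacle'' you flag — that the worst-case environment and equilibrium might be asymmetric, requiring a symmetrization step — largely dissolves given the model's standing assumption that all agents share a common finite action set $A$. Nature can only choose a common $A\supseteq A^0$; for any such $A$, the game $\Gamma(w_\epsilon,A)$ is supermodular in the total order above, and Lemma~\ref{general_super} delivers a \emph{symmetric} maximal Nash equilibrium by iterating $\overline{BR}$ from $a_{\max}$. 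Because the principal selects her preferred equilibrium and the relevant worst case is the $\min$ of the high-output and shirking payoffs (as in Lemma~\ref{JPE_worst}), one never has to symmetrize an asymmetric equilibrium: the maximal equilibrium is already symmetric, and the reduction of agent $i$'s incentive problem to a single scalar (the peer mean) comes for free from linearity in $y_j$ plus conditional independence. So the ``real work'' you defer to a symmetrization lemma is not required; the proof reduces cleanly to the two-agent per-agent calculation multiplied by $n$.
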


\begin{proof}
See Appendix \ref{multiple}.
\end{proof}

The steps of the proof are as follows. First,  building upon Lemma \ref{sub_linear}, I prove that any affine contract can be improved upon by an IPE. Building upon the key idea in Example \ref{simple_ex},  I then consider nonaffine JPE contracts of the form
\[w(y_i, y_{-i})= ( w_0 + \frac{b}{n-1} \sum^n_{j \neq i} y_j) y_i,\]
where $w_0 \geq 0$ is a ``base wage" and $b >0$ is a ``bonus factor" that determines how responsive wages are to the average performance of the other workers. Finally, I prove that there always exists a JPE in this class that yields the principal strictly higher worst-case expected payoffs than offering each agent the \cite{Carroll_2015}-optimal IPE. 

The key to generalizing the arguments in the two-output model to the case of many output levels is the observation that any action set $A \supseteq A_0$ can be equipped with the following total order: $a \succeq a'$ if either $E_{F(a)}[y_i] > E_{F(a')}[y_i]$, or $E_{F(a)}[y_i]=E_{F(a')}[y_i]$ and $c(a_i) \leq c(a_j)$. Under this order and under the specific class of nonaffine JPE contracts considered, any game played by the agents is supermodular. 

The key to generalizing the two-agent model to the case of multiple agents is the observation that the base wage, $w_0$, can be set equal to a number slightly smaller than the optimal IPE, $w^*$, and that the bonus factor can be calibrated according to the equation
\[ E_{F(a^*_0)}[y] \left(w_0+\frac{b}{n-1} \sum^n_{j \neq i} E_{F(a^*_0)}[y] \right)=E_{F(a^*_0)}[y] \left(w_0+b E_{F(a^*_0)}[y] \right)=w^*, \]
where $a^*_0$ is the action targeted by the optimal IPE. Under this contract, agent incentives to take less productive actions are the same as in the two-agent case. Hence, the productivity of agents in the maximal equilibrium of the worst-case supermodular game is $\bar{p}$, as in the statement of Lemma \ref{JPE_worst}, with the caveat that $\bar{p}$ is to be interpreted as the worst-case expected value of output produced by each agent. 

The so-constructed JPE possesses approximately the same incentive properties as the IPE to which it is calibrated. But, it strictly reduces the share of output each agent $i$ receives,
\[ \alpha(y_{-i}):=( w_0 + \frac{b}{n-1} \sum^n_{j \neq i} y_j),\]
when other agents $j \neq i$ are less productive. Put differently, the optimal piece rate contract is made ``flexible" in the sense of the introductory quotation of \cite{NalebuffStiglitz_1983} --- $\alpha$ is no longer a constant function of $y_{-i}$.

\section{Final Remarks}\label{finalremarks}

This paper identifies new foundations for team-based incentive pay in a canonical
moral hazard in teams setting. If a principal does not know all of the actions the agents can take, but must provide them with incentives, then nonaffine joint performance evaluation can approximate the incentive properties of any nontrivial independent performance evaluation contract, while flexibly reducing expected wage payments when agents are less productive than the principal anticipates. The worst-case analysis draws attention to these scenarios, uncovering an economic intuition that had previously gone unnoticed. 

While the focus of the article has been on the exposition of a new theoretical channel leading to the optimality of nonlinear joint performance evaluation, I conclude by discussing two applications in which this channel might be relevant. First, \cite{Herriesetal_2003} study the monthly sales of over 3,500 individual workers employed by a single telephone company. They find that there is correlation in sales performance among co-workers within the same work group. However, the authors argue that these correlations cannot be attributable to technological interdependence because sales calls are taken individually. Instead, they are potentially attributable to a combination of nonlinear joint performance evaluation and exchange of information among co-workers. Specifically, the majority of workers were compensated nonlinearly according to an increasing function of monthly individual and group revenue.

The arguments in this paper provide a plausible micro-foundation for the kind of nonlinear pay observed in  \cite{Herriesetal_2003}'s setting and in related employer-employee settings. In particular, it is plausible that group managers knew some sales tactics of the sales representatives they compensated --- for instance, representatives could always follow the telephone company's script. But, there are a myriad of less costly (but, potentially less productive) ways in which a sales representative might deviate from this script. Moreover, informational spillovers might cause these tactics to become common knowledge among workers, in which case a manager would be justifiably concerned about the compounding shirking effect that arises if incentive pay is entirely based on group performance. Thus, she might compensate workers using a mix of team-based incentive pay and individual performance bonuses. Individual performance bonuses curb individual shirking incentives, while team-based incentive pay allows the manager to reduce expected wage payments if her subordinates discover less costly, but less productive, sales tactics.

A second potential application of the joint performance evaluation result is in the literature on corporate finance and financial diversification. In this literature, \cite{Diamond_1984} shows that project financing for two independent and identical projects managed by a single risk-neutral agent protected by limited liability, e.g., an entrepreneur, is larger than if the two projects are managed by independent and identical risk-neutral agents.\begin{footnote}{My exposition here follows Section 4.2 of \cite{tirole2010theory}.}\end{footnote} The reason is that a single agent can use the income she receives from one project to pay investors when the other fails, i.e., the successful project is collateral, relaxing her limited liability constraint. This argument has been used to justify the existence of financial intermediaries, such as banks,  who oversee multiple projects and can garner a greater total supply of investment than the sum of investments garnered by individual entrepreneurs.\begin{footnote}{See the scientific background on the 2022 Sveriges Riksbank Prize in Economic Sciences in Memory of Alfred Nobel: \url{https://www.nobelprize.org/uploads/2022/10/advanced-economicsciencesprize2022-2.pdf}.}\end{footnote}

One potential issue with the line of reasoning described is that, as the number of projects a bank oversees grows, it is impractical for all projects to be monitored by the same employee. And if projects are monitored by separate employees, then diversification no longer relaxes limited liability constraints. This paper provides an alternative foundation for financial intermediaries that applies even in these cases: If investors care not about their expected returns given known monitoring actions, but instead about their worst-case expected return given potential mismanagement of their investments, then contracting with a bank can be preferable to contracting with multiple entrepreneurs. Intuitively, the bank can provide better incentives for its employees because compensation for the returns of one project can be made contingent upon the returns of the other; in cases in which both projects are poorly managed, the bank's expected wage payments decrease. Due to its superior profitability, a bank providing funds for two entrepreneurs thus receives a greater supply of investment than the sum of the funds the entrepreneurs would receive on their own.

\begin{appendix}

\section{Proofs}\label{proofs}

 \subsection{Proof of Lemma \ref{zero}}\label{proof_zero}

If $w_{11}\geq w_{01}$ ($w_{10} \geq w_{00}$), setting $w'_{11}=w_{11}-w_{01}$ and $w'_{01}=0$ ($w'_{10}=w_{10}-w_{00}$ and $w'_{00}=0$) shifts each agent's payoff by a constant. Similarly, if $w_{11} \leq w_{01}$ ($w_{10}\leq w_{00}$), setting $w'_{01}=w_{01}-w_{11}$ and $w'_{11}=0$ ($w'_{00}=w_{00}-w_{10}$ and $w'_{10}=0$) shifts each agent's payoff by a constant. It follows that any Nash equilibrium under $w$ is also a Nash equilibrium under $w'$. Since the principal's ex post payment decreases, these adjustments must (weakly) increase her worst-case payoff.

The argument in the previous paragraph immediately establishes that if $w_{11} \geq w_{01}$ and $w_{10} \geq w_{00}$, then there exists an improved contract $w'$ for which $w'_{00}=w'_{01}=0$. There are three other cases to consider: (i) $w_{01} \geq w_{11}$ and $w_{00} \geq w_{10}$ (in which case it suffices to set $w_{11}=w_{10}=0$); (ii) $w_{11} \geq w_{01}$ and $w_{00} \geq w_{10}$ (in which case it suffices to set $w_{01}=w_{10}=0$);  and (iii) $w_{01} \geq w_{11}$ and $w_{10} \geq w_{00}$ (in which case it suffices to set $w_{11}=w_{00}=0$).

\subsubsection*{$w_{01} \geq w_{11}=0$ and $w_{00} \geq w_{10}=0$}

If $w_{01} \geq 0$ and $w_{00} \geq 0$, then $w$ cannot be eligible. To wit, consider the action set $A:= A^0 \cup \{a_\emptyset\}$ where $p(a_\emptyset)=0=c(a_\emptyset)$. Then, $a_\emptyset$ is a strictly dominant strategy and so $(a_\emptyset, a_\emptyset)$ is the unique Nash equilibrium. In this equilibrium, the principal obtains a payoff $-2w_{00} \leq 0$.

\subsubsection*{$w_{11} \geq w_{01}=0$ and $w_{00} \geq w_{10}=0$}

 Under this contract, agent $i$'s payoffs satisfy increasing differences in $(a_i, a_j)$. Hence, any game this contract induces is supermodular. Moreover, fixing $a_j$, $(a_i, w_{00})$ satisfies decreasing differences. Theorem 6 of \cite{MilgromRoberts1990} then implies that the maximal equilibrium of any game $\Gamma(w, A)$, $A \supseteq A^0$, is decreasing in $w_{00}$. Since the principal's worst-case payoff either occurs when both agents succeed with probability one or in a region in which increasing the maximal equilibrium action increases the principal's payoff, the contract $w'$ with $w'_{00}=w'_{01}=w'_{10}=0$ and $w'_{11}=w_{11}$ must be such that $V(w') \geq V(w)$.

\subsubsection*{$w_{01} \geq w_{11}=0$ and $w_{10} \geq w_{00}=0$}
In this case, agent $i$'s payoff from an action profile $(a_i, a_j)$ is
\begin{equation}
\begin{aligned}
U_i(a_i, a_j; w) &=  p(a_i) (1-p(a_j)) w_{10}+ (1-p(a_i)) p(a_j) w_{01}-c(a_i)\\
&= p(a_i) \left[ w_{10}-p(a_j)(w_{10}+w_{01}) \right] + p(a_j) w_{01}-c(a_i),
\end{aligned}
\notag
\end{equation}
which satisfies decreasing differences. I show that the principal's payoff under such a contract cannot exceed $V^*_{IPE}$ when either $w_{01}>0$ or $w_{10}>0$ (at least one of these inequalities must hold for the contract to be eligible).

Let $a_\emptyset$ be the action satisfying $c(a_\emptyset)=p(a_\emptyset)=0$. Let $a^*_\epsilon$ be an action for which $c(a^*_\epsilon)=0$ and for which $p(a^*_\epsilon)$ is a fixed point of
\[T_\epsilon(p):= \begin{cases} \underset{a \in A^0 \cup \{a_\emptyset\}}{\max} \left[ p(a)-\frac{c(a)}{w_{10}-p(w_{10}+w_{01}) } \right]+\epsilon & \text{if $w_{10}-p(w_{10}+w_{01})>0$}\\ 0 & \text{otherwise} \end{cases}, \] where $\epsilon>0$ is small. To see that $T_\epsilon$ has a fixed point, notice that, for any $p \in [0,1]$, $T_\epsilon(p)$ is larger than zero (because $a_\emptyset \in A^0 \cup \{a_\emptyset\}$) and less than one if $\epsilon$ is small enough (because $A^0$ does not contain a zero-cost action that results in success with probability one by the assumption of costly known productive actions). Hence, $T_\epsilon$ is a continuous function mapping $[0,1]$ into $[0,1]$. By Brouwer's Fixed Point Theorem, it thus has at least one fixed point.

By construction, $(a^*_\epsilon,a^*_\epsilon)$ is a Nash equilibrium of $\Gamma(w,A_\epsilon)$, where $A_\epsilon:= A^0 \cup \{a^*_\epsilon, a_\emptyset\}$.
Now, consider a sequence of strictly positive values $\epsilon_1$, $\epsilon_2$,... that converges to zero and for which there is a convergent sequence of fixed points $p(a^*_{\epsilon_1})$, $p(a^*_{\epsilon_2})$,... of the mappings $T_{\epsilon_1}$, $T_{\epsilon_2}$,... . Since $[0,1]$ is a compact set, such a convergent sequence must exist. Moreover, if the limit $p^*$ satisfies $w_{10}-p^* (w_{10}+w_{01})>0$, then it must equal \[p^*:=\underset{a \in A^0 \cup \{a_\emptyset\}}{\max} \left[ p(a)-\frac{c(a)}{w_{10}-p^* (w_{10}+w_{01}) } \right].\]

I show that the principal's worst-case payoff in the limit can be no larger than what she obtains from the optimal IPE. If $p^*$ equals zero, then the principal attains less than zero profits and so lower profits than under the optimal IPE. Otherwise, let $\hat{a}_0$ denote a maximizer of $p(a)-\frac{c(a)}{w_{10}-p^* (w_{10}+w_{01}) }$ over $A^0 \cup \{a_\emptyset\}$, let $\hat{\alpha}:= (1-p^*) w_{10}$, and notice that the principal attains a payoff of

\begin{equation}
\begin{aligned}
 &2 \left[(p^*)^2 + p^* (1-p^*) (1-w_{01}-w_{10})\right]\\
 &= 2 \left[ p(\hat{a}_0)- \frac{c(\hat{a}_0)}{(1-p^*) (w_{10}+w_{01})} \right] \left[ 1-(1-p^*)(w_{10}+w_{01}) \right] \\
 & \leq 2 \left[ p(\hat{a}_0)- \frac{c(\hat{a}_0)}{(1-p^*) w_{10}} \right] \left[1-(1-p^*) w_{10}\right] \\
 &= 2 \left[ p(\hat{a}_0)-\frac{c(\hat{a}_0)}{\hat{\alpha}} \right] \left[ 1- \hat{\alpha}\right].
\end{aligned}\notag
\end{equation}
But, 
\begin{equation}
\begin{aligned}
 2 \left[ p(\hat{a}_0)-\frac{c(\hat{a}_0)}{\hat{\alpha}} \right] (1-\hat{\alpha}) & \leq 2 \underset{\alpha \in [0,1], a_0 \in A^0 \cup \{a_\emptyset\}}{\max}~\left[(1-\alpha)(p(a_0)-\frac{c(a_0)}{\alpha})  \right]\\ &= 2 \underset{\alpha \in [0,1], a_0 \in A^0}{\max}~\left[(1-\alpha)(p(a_0)-\frac{c(a_0)}{\alpha})  \right] \\ &= V^*_{IPE}, \end{aligned} \notag \end{equation} where the inequality follows because $p(\hat{a}_0)-\frac{c(\hat{a}_0)}{\hat{\alpha}} \geq 0$ for all $\hat{\alpha} \geq 0$ and the equality follows because setting $\alpha=1$ yields the principal a payoff of zero given any action in $A^0$, the payoff attained from choosing $a_\emptyset$ and any $\alpha \in [0,1]$. 
 
The previous argument establishes  that if there exists a $K$ such that, for all $k \geq K$, $(a^*_{\epsilon_k}, a^*_{\epsilon_k})$ is the unique Nash equilibrium of $\Gamma(w,A_{\epsilon_k})$, then the principal's worst-case payoff is no higher than $V^*_{IPE}$. But, other pure and mixed strategy equilibria may exist that benefit the principal, even as $k$ grows large. I now address this issue. First, consider the case in which the limit of $(a^*_{\epsilon_k})$ is $a_\emptyset$. If multiplicity arises, then there exists an action $a_0 \in A^0$ that results in success with strictly positive probability and is a weak best response to any action that succeeds with zero probability; if not, then, by Lemma \ref{lemma_super}, there would exist a $K$ such that for all $k \geq K$, $(a^*_{\epsilon_k}, a^*_{\epsilon_k})$ is the maximal Nash equilibrium of $\Gamma(w,A_{\epsilon_k})$ and hence the unique Nash equilibrium. If $p(a_0) \leq \frac{w_{10}}{w_{10}+w_{01}}$, then the principal's payoff in any equilibrium in which such an action is played with positive probability is less than zero. This follows from
\[ p(a_0) (1-w_{10}-w_{01}) \leq \frac{w_{10}}{w_{10}+w_{01}}-w_{10}<0.\] If, on the other hand, $p(a_0)>\frac{w_{10}}{w_{10}+w_{01}}$, then I can add to each $A_{\epsilon_k}$ the action $a'_0$ for which $c(a'_0)=0$ and $p(a'_0)= p(a_0)-\frac{c(a_0)}{w_{10}}$ if  $p(a_0)-\frac{c(a_0)}{w_{10}}>\frac{w_{10}}{w_{10}+w_{01}}$ and $p(a'_0)=\frac{w_{10}}{w_{10}+w_{01}}+\epsilon_k$ otherwise. In the first case, the principal attains a payoff of
\[ \left[ p(a_0)-\frac{c(a_0)}{w_{10}}\right] (1-w_{10}-w_{01}) \leq 2 \underset{\alpha \in [0,1], a_0 \in A^0}{\max}~\left[(1-\alpha)(p(a_0)-\frac{c(a_0)}{\alpha})  \right]= V^*_{IPE}.   \]  In the second case, there exists a $K$ such that for all $k \geq K$, the principal's payoff in the equilibrium $(a'_0,a^*_{\epsilon_k})$ is less than zero because the inequality in the previous displayed equation is strict. Finally, no mixed equilibria can exist in any of the cases considered since $a_\emptyset$ is a strict best response to any action larger than $\frac{w_{10}}{w_{10}+w_{01}}$ (the marginal benefit of producing succeeding with higher probability is less than zero).

Second, consider the case in which the limit of $(a^*_{\epsilon_k})$ is $p^*>0$. Any other pure or mixed Nash equilibrium of $\Gamma(w,A_{\epsilon_k})$ must involve one agent succeeding with probability $\hat{p} \geq \frac{w_{10}}{w_{10}+w_{01}}>p^*$. If not, then $p(a^*_{\epsilon_k})$ would be a best-response to the distribution $\hat{p}$ and, if $p(a^*_{\epsilon_k})$ is played, then any distribution $\hat{p}$ could not be a best-response.\begin{footnote}{The first statement follows because $p(a^*_\epsilon)$ has zero cost, profits would still be increasing in the probability with which the agent succeeds, and there are strictly decreasing differences. The second follows because $p(a^*_{\epsilon_k})$ is a strict best-response to $p(a^*_{\epsilon_k})$ by construction.}\end{footnote} However, any equilibrium in which one agent generates a distribution $\hat{p}$ must have the other play either $a_\emptyset$ (if $\hat{p} > \frac{w_{10}}{w_{10}+w_{01}}$), $a^*_{\epsilon_k}$ (only if $\hat{p} =\frac{w_{10}}{w_{10}+w_{01}}$), or a mixture between the two (again, only if $\hat{p} =\frac{w_{10}}{w_{10}+w_{01}}$); known productive actions are costly and the marginal benefit of succeeding with higher probability is less than zero (strictly so if $\hat{p}> \frac{w_{10}}{w_{10}+w_{01}}$). 
It suffices to consider the case in which $\hat{p}>\frac{w_{10}}{w_{10}+w_{01}}$. In the other two cases, introducing an action that has the same productivity as the most productive action in the support of the player's strategy that succeeds with probability $\hat{p}$, but an (arbitrarily) smaller cost, reduces the problem to this case, or alternatively, results in the equilibrium $(a^*_{\epsilon_k},a^*_{\epsilon_k})$. So, consider any action, $a_0 \in A^0$, satisfying $p(a_0) \geq \frac{w_{10}}{w_{10}+w_{01}}$ in the support of the strategy succeeding with probability $\hat{p}>\frac{w_{10}}{w_{10}+w_{01}}$. Mirroring the argument in the previous case, I can add to each $A_{\epsilon_k}$ the action $a'_0$ for which $c(a'_0)=0$ and $p(a'_0)= p(a_0)-\frac{c(a_0)}{w_{10}}+\epsilon_k$ if  $p(a_0)-\frac{c(a_0)}{w_{10}}>\frac{w_{10}}{w_{10}+w_{01}}$ and $p(a'_0)=\frac{w_{10}}{w_{10}+w_{01}}+\epsilon_k$ otherwise. These adjustments ensure that $a'_0$ is the unique best response to $a_\emptyset$ for every $k$ and so, mirroring the steps in the proof of the previous case, the principal attains a payoff no larger than $V^*_{IPE}$.

 \subsection{Proof of Lemma \ref{RPEvIPE}}\label{IPE>RPE}

Let $a_\emptyset$ be the action satisfying $c(a_\emptyset)=p(a_\emptyset)=0$. Let $a^*_\epsilon$ be an action for which $c(a^*_\epsilon)=0$ and for which $p(a^*_\epsilon)$ is a fixed point of
\[T_\epsilon(p):= \underset{a_0 \in A^0 \cup \{a_\emptyset\}}{\max} \left[ p(a_0)-\frac{c(a_0)}{p w_{11}+(1-p) w_{10}} \right] +\epsilon, \] where $\epsilon>0$ is small.\footnote{Interpret $-\frac{c(a_0)}{p w_{11}+(1-p) w_{10}}$ as zero if the denominator is zero and $c(a_0)=0$ and $-\infty$ if the denominator is zero and $c(a_0)>0$.} To see that $T_\epsilon$ has a fixed point, notice that, for any $p \in [0,1]$, $T_\epsilon(p)$ is larger than zero (because $a_\emptyset \in A^0 \cup \{a_\emptyset\}$) and less than one if $\epsilon$ is small enough (because $A^0$ does not contain a zero-cost action that results in success with probability one). Hence, $T_\epsilon$ is a continuous function mapping $[0,1]$ into $[0,1]$. By Brouwer's Fixed Point Theorem, it thus has at least one fixed point.

Now, define an action space $A_\epsilon:= A^0 \cup \{a^*_\epsilon, a_\emptyset\}$. If $A^0$ contains an action producing $y_i=1$ with probability one, consider the least costly among all of them, $\bar{a}_0$, and add to $A_\epsilon$ the action $\bar{a}_\epsilon$, where $c(\bar{a}_\epsilon)=c(\bar{a}_0)-\gamma(\epsilon)$ and $p(\bar{a}_n)=1-\frac{\gamma(\epsilon)}{2}$ for $\gamma(\epsilon) :=\frac{\epsilon (p(a^*_\epsilon) w_{11}+(1-p(a^*_\epsilon)) w_{10}}{2}$. Then, $\bar{a}_\epsilon$ strictly dominates $\bar{a}_0$ (and so any other action producing $y_i=1$ with probability one is as well) and $a^*_\epsilon$ is a strictly better reply to $a^*_\epsilon$ than $\bar{a}_\epsilon$.

I show that $(a^*_\epsilon,a^*_\epsilon)$ is the unique Nash equilibrium of $\Gamma(w, A_\epsilon)$. Notice, by construction, $(a^*_\epsilon, a^*_\epsilon)$ is a strict Nash equilibrium. Now, remove all actions producing $y_i=1$ with probability one since they are strictly dominated by $\bar{a}_\epsilon$. Upon removing these actions, $a^*_\epsilon$ strictly dominates any action smaller than it in the order $\succeq$. So, remove any actions in $\Gamma(w,A_\epsilon)$ below $a^*_\epsilon$ and denote the resulting action space by $\hat{A}$. Now, consider the profile $(\bar{a}, a^*_\epsilon)$, where $\bar{a}$ is the largest element of $\hat{A}$. Since $a^*_\epsilon$ is the unique best response to $a^*_\epsilon$ (because $(a^*_\epsilon, a^*_\epsilon)$ is a strict Nash equilibrium), the maximal best-response to $a^*_\epsilon$ is $a^*_\epsilon$. This also implies that $a^*_\epsilon$ is the minimal best-response to $\bar{a}$; if not, there exists some $\hat{a}_0 \in \hat{A}$ such that $\hat{a}_0 \succ a^*_\epsilon$ and
\[ U_i(\hat{a}_0, a_0; w)- U_i(a^*_\epsilon, a_0; w) \geq U_i(\hat{a}_0, \bar{a}; w)- U_i(a^*_\epsilon, \bar{a}; w) >0~~\text{for any $a_0 \in  \hat{A}$},\]
where the first inequality follows from the property of decreasing differences and the second from $a_0$ being the smallest best-response to $\bar{a}$. Hence, $\hat{a}_0$ strictly dominates $a^*_\epsilon$, contradicting the previous observation that $a^*_\epsilon$ is a best response to $a^*_\epsilon$. As $(a^*_\epsilon, a^*_\epsilon)$ is a fixed point of $\widetilde{BR}$, $(a^*_\epsilon,a^*_\epsilon)$ is the limit found by iterating $\widetilde{BR}$ from $(\bar{a}, a^*_\epsilon)$ or $(a^*_\epsilon, \bar{a})$ in $\Gamma(w,\hat{A})$. By Lemma \ref{lemma_sub}, it follows that $(a^*_\epsilon, a^*_\epsilon)$ is the unique Nash equilibrium of  $\Gamma(w,\hat{A})$ and hence of $\Gamma(w, A_\epsilon)$.

Now, consider a sequence of strictly positive values $\epsilon_1$, $\epsilon_2$,... that converges to zero and for which there is a convergent sequence of fixed points $p(a^*_{\epsilon_1})$, $p(a^*_{\epsilon_2})$,... of the mappings $T_{\epsilon_1}$, $T_{\epsilon_2}$,... . Since $[0,1]$ is a compact set, such a convergent sequence must exist. Moreover, its limit is the distribution \[ p(a^*)= \underset{a_0 \in A^0 \cup \{a_\emptyset\}}{\max} \left[ p(a_0)-\frac{c(a_0)}{p(a^*) w_{11}+(1-p(a^*)) w_{10}} \right].\]
Let $\hat{a}_0  \in A^0 \cup \{a_\emptyset\}$ denote the maximizer on the right-hand side and define $\hat{\alpha}:=p(a^*) w_{11}+(1-p(a^*)) w_{10}$. The principal's payoff in the unique equilibrium $(a^*_{\epsilon_k}, a^*_{\epsilon_k})$ of $\Gamma(w,A_{\epsilon_k})$ as $k$ grows large becomes arbitrarily close to \[2 \left[p(a^*) \right] \left[ p(a^*) (1-w_{11})+(1-p(a^*)) (1-w_{10}) \right]= \]
\[ 2 \left[ p(\hat{a}_0)-\frac{c(\hat{a}_0)}{\hat{\alpha}} \right] (1-\hat{\alpha}) \leq 2 \underset{\alpha \in [0,1], a_0 \in A^0 \cup \{a_\emptyset\}}{\max}~\left[(1-\alpha)(p(a_0)-\frac{c(a_0)}{\alpha})  \right], \]
where the inequality follows because $p(\hat{a}_0)-\frac{c(\hat{a}_0)}{\hat{\alpha}} \geq 0$ for all $\hat{\alpha} \geq 0$ and so I need only consider values of $\alpha$ between zero and one to maximize $(1-\alpha)(p(a_0)-\frac{c(a_0)}{\alpha})$ for any $a_0 \in A^0 \cup \{a_\emptyset\}$. But, 
\begin{equation}
\begin{aligned}
&2 \underset{\alpha \in [0,1], a_0 \in A^0 \cup \{a_\emptyset\}}{\max}~\left[(1-\alpha)(p(a_0)-\frac{c(a_0)}{\alpha})  \right]\\
=& 2 \underset{\alpha \in [0,1], a_0 \in A^0}{\max}~\left[ (1-\alpha)(p(a_0)-\frac{c(a_0)}{\alpha}) \right]\\
=&V^*_{IPE}
\end{aligned}
\notag
\end{equation}
because setting $\alpha=1$ yields the principal a payoff of zero given any action in $A^0$, the same payoff attained from choosing $a_\emptyset$ and any $\alpha \in [0,1]$.

\subsection{Proof of Lemma \ref{JPE_worst}}\label{JPEworst_proof}

\subsubsection*{Comparative Statics in Principal's Payoff}

Suppose agent $i$ succeeds with probability $p_i$. The principal's payoff given $(p_i,p_j)$ is
\[ \pi(p_i,p_j):= p_i p_j (2-2 w_{11}) + \left[ p_i (1-p_j) + (1-p_i) p_j \right] (1-w_{10}).\]
The principal's payoff is therefore increasing in $p_i$ if and only if

\[ p_j \leq \frac{1}{2} \left[ \frac{1-w_{10}}{w_{11}-w_{10}} \right].  \]
Monotonicity of $\pi(p_i,p_j)$ on $[0,1]$ thus depends on $w$: (i) if $w_{10} \geq 1$, then $\pi$ is decreasing on $[0,1]$ in $p_i$ and $p_j$; (ii) if $w_{10}<1$ and $w_{11} \leq \frac{1+w_{10}}{2}$, then $\pi(p)$ is increasing on $[0,1]$  in $p_i$ and $p_j$; and,
(iii) if $w_{10}<1$ and $w_{11} > \frac{1+w_{10}}{2}$, then $\pi(p)$ is increasing in $p_i$ if $p_j \in [0, \frac{1}{2} \left[ \frac{1-w_{10}}{w_{11}-w_{10}} \right]]$ and decreasing in $p_i$ if $p_j \in [ \frac{1}{2} \left[ \frac{1-w_{10}}{w_{11}-w_{10}} \right],1]$.

In case (i), $\pi$ is minimized when $p_i=p_j=1$, yielding the principal a payoff of \[2-2 w_{11}.\] This payoff can be achieved exactly: Consider the action set $A:=A^0 \cup \{\hat{a}\} \supseteq A^0$, where $p(\hat{a})=1$ and $c(\hat{a})=0$. Then, because $w_{11}>w_{10} \geq 1$, $\hat{a}$ is a strictly dominant strategy and so the unique Nash equilibrium of $\Gamma(w,A)$ is $(\hat{a},\hat{a})$. In case (ii), $\pi$ is minimized when the probability with which the maximal equilibrium action of $\Gamma(w,A)$, for any $A \supseteq A^0$, is as small as possible (by Observation \ref{obs_1} and Lemma \ref{lemma_super} there always exists such an action). Letting $\bar{p}$ denote the greatest lower bound on such probabilities, the principal's payoff is
\[\bar{p}^2 (2-2w_{11})+ \bar{p}(1-\bar{p}) (2-2 w_{10}). \] In case (iii), the principal's payoff is the minimum of the payoff in case (i) and case (ii),
\[ V(w)= \min\{2-2w_{11}, \bar{p}^2 (2-2w_{11})+ \bar{p} (1-\bar{p}) (2-2 w_{10})\}.\]
I identify $\bar{p}$ to complete the proof of the Lemma.

\subsubsection*{Defining $\bar{p}$}

Consider an arbitrary action $a \in A$ with cost $c(a)$ and probability $p(a)$. Let $\hat{p}(\cdot | a)$ be
a solution to the initial value problem
\begin{equation}
\begin{aligned}
\hat{p}'(t |a) &= f(\hat{p}(t|a)):= -\left[\hat{p}(t|a) w_{11} +(1-\hat{p}(t|a)) w_{10}\right]^{-1}~~\text{with}\\
 \hat{p}(0|a)&=p(a)
\end{aligned}
\notag
\end{equation}
on $D=[0,\hat{t}(a)] \times [0,p(a)]$, where $[0,\hat{t}(a)] \subseteq  [0,c(a)]$ is the largest interval on which $\hat{p}(t|a)>0$ for all $t \in [0,\hat{t}(a))$. Notice, $\hat{p}'(t|a)$ exists on $(0,\hat{t}(a))$, $\hat{p}'(t|a)<0$, and $\hat{p}''(t|a)<0$. So, $\hat{p}(\cdot|a)$ is strictly decreasing and strictly concave. Now, define \[\bar{p} := \underset{a_0 \in A^0}{\max}~~\hat{p}(\hat{t}(a_0) | a_0). \] 

\subsubsection*{$\bar{p}$ is a lower bound}

I show that $\bar{p}$ is a lower bound on the probability of the maximal equilibrium action of any game $\Gamma(w,A)$, where $A \supseteq A^0$. I begin with the following claim.

\begin{claim}[Lower Bound of a $\overline{BR}$ Path]\label{BR_LB}
Fix some game $\Gamma(w,A)$, where $A \supseteq A^0$. Let $(a_1,a_2,...,a_n)$ be the path starting from the maximal element of $A$, $a_1$, to the maximal equilibrium action, $a_n$, obtained by iterating $\overline{BR}$. If $a=a_\ell$ for some $\ell=1,...,n$, then \[p(a_n) \geq \hat{p}(\hat{t}(a)|a).\]
\end{claim}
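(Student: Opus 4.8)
The plan is to prove the claim by downward induction on $\ell$, showing that at each step of the $\overline{BR}$ path the probability of success cannot fall below the value the ODE flow would predict if started from the current action. Concretely, I would reformulate the statement in terms of a single ``potential'' $t \mapsto \hat p(t\mid a)$ and show that a best-response step from $a_\ell$ to $a_{\ell+1}=\overline{BR}(a_\ell)$ corresponds to moving along (or staying above) the solution curve. The base case is $\ell=n$: here $a_n$ is the maximal equilibrium action, so $a_n$ is a best response to itself, and I must check that $p(a_n)\geq \hat p(\hat t(a_n)\mid a_n)$; but since $\hat p(\cdot\mid a_n)$ is strictly decreasing with $\hat p(0\mid a_n)=p(a_n)$, the right-hand side is at most $p(a_n)$, giving the base case immediately (indeed with $\hat t(a_n)$ possibly $0$ if $c(a_n)=0$, or in general $\hat t(a_n)\le c(a_n)$, both cases are fine).

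For the inductive step, suppose the bound holds for $\ell+1$, i.e.\ $p(a_n)\geq \hat p(\hat t(a_{\ell+1})\mid a_{\ell+1})$, and consider $a_\ell$ with $a_{\ell+1}=\overline{BR}(a_\ell)$. The key computation is the incentive inequality defining a best response: because $a_{\ell+1}$ is (weakly) preferred to $a_\ell$ against $a_\ell$ in the supermodular game $\Gamma(w,A)$ (Observation \ref{obs_1}), one gets, after writing out $U_i$ with $w_{01}=w_{00}=0$,
\[
\bigl(p(a_{\ell+1})-p(a_\ell)\bigr)\bigl[p(a_\ell)w_{11}+(1-p(a_\ell))w_{10}\bigr] \;\geq\; -\bigl(c(a_\ell)-c(a_{\ell+1})\bigr),
\]
i.e.\ the drop in success probability per unit of cost saved is at least $-\bigl[p(a_\ell)w_{11}+(1-p(a_\ell))w_{10}\bigr]^{-1}=f(p(a_\ell))$. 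This is exactly the statement that the ``chord'' from $(0,p(a_\ell))$ to $(c(a_\ell)-c(a_{\ell+1}),p(a_{\ell+1}))$ has slope no more negative than $f$ evaluated at the left endpoint. Since $f$ is increasing in its argument on $[0,1]$ and $\hat p(\cdot\mid a_\ell)$ is concave and decreasing, a comparison-of-solutions argument (Gronwall-type, or a direct convexity argument using concavity of $\hat p$) shows that $p(a_{\ell+1})\geq \hat p\bigl(c(a_\ell)-c(a_{\ell+1})\mid a_\ell\bigr)$, and more generally that the translated flow starting from $a_{\ell+1}$ dominates the flow from $a_\ell$: $\hat p\bigl(s\mid a_{\ell+1}\bigr)\geq \hat p\bigl(s+(c(a_\ell)-c(a_{\ell+1}))\mid a_\ell\bigr)$ for all relevant $s$, and in particular $\hat p(\hat t(a_{\ell+1})\mid a_{\ell+1})\geq \hat p(\hat t(a_\ell)\mid a_\ell)$. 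Chaining this with the inductive hypothesis gives $p(a_n)\geq \hat p(\hat t(a_{\ell+1})\mid a_{\ell+1})\geq \hat p(\hat t(a_\ell)\mid a_\ell)$, completing the induction.

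I expect the main obstacle to be the comparison step: making rigorous the assertion that an Euler-type chord inequality with the ``wrong'' (right-endpoint vs.\ left-endpoint) reference point still implies domination of the continuous flow. The cleanest route is to exploit the two structural facts established in the proof of Lemma \ref{JPE_worst} just above — that $f$ is monotone increasing and that $\hat p(\cdot\mid a)$ is strictly concave — so that the discrete best-response step is a ``conservative'' Euler step: it reduces $p$ by less than the tangent line at the starting point would, hence by less than the concave curve, which lies below its tangent. One subtlety to handle carefully is the possibility that the path crosses into the region $p\le 0$, i.e.\ that $\hat t(a_\ell)<c(a_\ell)$ for some $\ell$; there the bound $\bar p$ is $\le 0$ for that action and the claim is vacuous or trivial for it, but I would still need to confirm the max over $A^0$ in the definition of $\bar p$ is taken cleanly. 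Finally, the step also uses that all of $a_1,\dots,a_n$ are weakly below the maximal element, so $\succeq$-monotonicity of $\overline{BR}$ keeps the path well-ordered and the supermodular machinery of Lemma \ref{lemma_super} applies throughout.
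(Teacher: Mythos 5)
Your proof is substantively correct but takes a genuinely different route from the paper's. The paper proves the claim by \emph{forward} induction along the $\overline{BR}$ path with a \emph{fixed} reference action $a=a_\ell$: setting $\epsilon_k := c(a_{k-1})-c(a_k)$, it shows directly by induction on $m=\ell+1,\dots,n$ that $p(a_m) \geq \hat p(\sum_{k=\ell+1}^m \epsilon_k \mid a)$, using at each step the best-response (Euler-step) bound together with concavity of $\hat p(\cdot\mid a)$, and then concludes because $\sum\epsilon_k \leq c(a)$ and $\hat p(\cdot\mid a)$ is decreasing. You instead do \emph{backward} induction on $\ell$, with a trivial base case at $\ell=n$, and reduce each inductive step to the monotonicity $\hat p(\hat t(a_{\ell+1})\mid a_{\ell+1}) \geq \hat p(\hat t(a_\ell)\mid a_\ell)$, which you establish via the chord inequality (the Euler step starting at $a_\ell$), concavity, and then a comparison-of-solutions (``no crossing'') argument for the translated flows. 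The comparison step you worry about is exactly the paper's Claim \ref{nocrossing}, which is proved there just \emph{after} Claim \ref{BR_LB} and used elsewhere; it follows from ODE uniqueness for the autonomous system, so there is no circularity, but you would need to state and prove it explicitly before invoking it. The paper's approach is more self-contained and mirrors the Euler-approximation argument used later in the proof of Lemma \ref{JPE_worst}; yours is more modular and reduces the claim to a single clean comparison, at the cost of having to handle carefully the domain endpoints (e.g.\ confirming $\hat t(a_{\ell+1}) + (c(a_\ell)-c(a_{\ell+1})) \leq c(a_\ell)$ and what happens when a flow hits zero, cases that are easy but must be checked). One small caveat: your aside about ``right-endpoint vs.\ left-endpoint'' reference points is unnecessary worry --- the incentive inequality is a left-endpoint tangent bound, the curve is concave and lies below its tangent, so the direction of the inequality is automatically favorable.
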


\begin{proof}
Consider the truncated path starting at $a=a_\ell$ and ending at $a_n$. Notice that $a_{k} \in \overline{BR}(a_{k-1})$ for $k=\ell+1,...,n$ only if $p(a_{k-1})>p(a_{k})$ and,
\[ p(a_k) \left[ p(a_{k-1}) w_{11}+(1-p(a_{k-1})) w_{10} \right] -c(a_k) > p(a_{k-1}) \left[ p(a_{k-1}) w_{11}+(1-p(a_{k-1})) w_{10} \right]- c(a_{k-1})  \]
\[ \iff p(a_k) > p(a_{k-1})- \frac{c(a_{k-1})-c(a_k)}{p(a_{k-1}) w_{11}+(1-p(a_{k-1})) w_{10}}. \]
Hence, $\epsilon_k:= c(a_{k-1})-c(a_k)>0$ for any $k=\ell+1,...,n$. This implies that $\sum^n_{k=\ell +1} \epsilon_k \leq c(a)$, since $c(a_n) \geq 0$.

To show that $p(a_n) \geq \hat{p}(\hat{t}(a)|a)$, it suffices to consider the case in which $f(t,\hat{p}(t)|a)$ exists for all $t \in [0,c(a)]$ (it must always be the case that $p(a_n) \geq 0$). To show this, I need only show that $p(a_n) \geq \hat{p}(\sum^n_{k=\ell+1} \epsilon_k |a)$ because $\hat{p}(\cdot | a)$ is decreasing and so $\hat{p}(c(a) | a) \leq \hat{p}(\sum^n_{k=\ell+1} \epsilon_k | a)$. 

I prove the inequality by induction. For the base case, recall that $p(a_{\ell+1})$ must satisfy the best-response condition 
\begin{equation}
\begin{aligned}
p(a_{\ell+1}) & \geq p(a_\ell)- \frac{\epsilon_{\ell+1}}{p(a_\ell) w_{11}+(1-p(a_\ell)) w_{10}}\\
&= \hat{p}(0|a)+\hat{p}'(0|a) \epsilon_{\ell+1}\\
&\geq \hat{p}(\epsilon_{\ell+1} |a),
\end{aligned}
\notag
\end{equation}
where the last inequality follows because $\hat{p}( \cdot | a)$ is concave. 

For the inductive step, suppose $\hat{p}(\sum^m_{k=\ell+1} \epsilon_k | a) \leq p(a_m)$ for $m=\ell+1,...,K$. I show that $\hat{p}(\sum^K_{k=\ell+1} \epsilon_k+\epsilon_{K+1} | a) \leq p(a_{K+1})$. Once again, $a_{K+1}$ is a best-response to $a_K$ only if,
\begin{equation}
\begin{aligned}
p(a_{K+1}) & \geq p(a_K)- \frac{\epsilon_{K+1}}{p(a_K) w_{11}+(1-p(a_K)) w_{10}}\\
&\geq \hat{p}(\sum^K_{k=\ell+1} \epsilon_k | a)+\hat{p}'(\sum^K_{k=\ell+1} \epsilon_k | a) \epsilon_{K+1}\\
&\geq \hat{p}( \sum^{K}_{k=\ell+1} \epsilon_k+\epsilon_{K+1} | a),
\end{aligned}
\notag
\end{equation}
where the second inequality follows from the induction hypothesis and the last follows because $\hat{p}(\cdot | a)$ is concave. 
\end{proof}

Consider any finite set $A \supseteq A^0$. Let $\tilde{c}$ be the maximal cost of any action in $A$ and $\tilde{p}$ be the maximal probability. For any action $a \in A$, let $\tilde{p}( \cdot | a)$ be the solution to the initial value problem,
\begin{equation}
\begin{aligned}
\tilde{p}'(t | a)&= f(\tilde{p}(t|a))= -\left[\tilde{p}(t|a) w_{11} +(1-\tilde{p}(t|a)) w_{10}\right]^{-1} \\
\tilde{p}(\tilde{c}-c(a) | a)& =p(a),
\end{aligned}
\notag
\end{equation}
on $D=[0, \tilde{t}(a)] \times [0,\tilde{p}]$, where $[0,\tilde{t}(a)] \subseteq [0, \tilde{c}]$ is the largest interval on which $\hat{p}(t|a)>0$ for all $t \in [0,\hat{t}(a))$. Notice that $\tilde{p}(\tilde{c}-c(a)+t|a)=\hat{p}(t|a)$ for any $t \in [0,\hat{t}(a)]$, $\tilde{p}'( \cdot | a)<0$ for all $t \in [0,\tilde{t}(a))$, and  $\tilde{p}''( \cdot | a)<0$ for all $t \in [0,\tilde{t}(a))$. Moreover, the following ``no crossing" property holds; its proof is immediate upon observing that the solution to the initial value problem is unique on any interval $[0,\bar{t}]$ for $\bar{t}<\tilde{c}$, since $f'(\hat{p}(t|a))$ is bounded and exists.\footnote{See, for instance, Theorem 2.2 of \cite{CoddingtonLevinson_1955}.}

\begin{claim}[No Crossing]\label{nocrossing}
If $\tilde{p}(t | a)>\tilde{p}(t | a')$ for some $t \in [0,\tilde{t}(a)] \cap [0,\tilde{t}(a')]$, then $\tilde{p}(t' | a) \geq \tilde{p}(t' | a')$ for any other $t' \in [0,\tilde{t}(a)] \cap [0,\tilde{t}(a')]$ and so $\hat{p}(\hat{t}(a)|a) \geq \hat{p}(\hat{t}(a')|a')$.
\end{claim}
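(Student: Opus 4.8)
The plan is to reduce everything to uniqueness of solutions of the \emph{autonomous} initial value problem $y'=f(y)$ with $f(y)=-\bigl[y\,w_{11}+(1-y)\,w_{10}\bigr]^{-1}$, as the reference to \cite{CoddingtonLevinson_1955} suggests. Set $\bar t:=\min\{\tilde t(a),\tilde t(a')\}$, so the common domain is $I=[0,\bar t]$. On $[0,\bar t)$ both $\tilde p(\cdot\,|a)$ and $\tilde p(\cdot\,|a')$ are strictly positive and weakly decreasing, so on any compact subinterval $[0,\bar t-\delta]$ their values lie in a compact subset of $(0,\tilde p]$ on which $y\mapsto y\,w_{11}+(1-y)\,w_{10}$ is bounded away from $0$; hence $f$ is $C^{1}$ there with bounded derivative, so the IVP has a unique solution through every point of that set (Theorem 2.2 of \cite{CoddingtonLevinson_1955}).

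For the first assertion I would argue as follows. The function $g(s):=\tilde p(s\,|a)-\tilde p(s\,|a')$ is continuous on $I$. If $g$ vanished at some $s_{0}\in(0,\bar t)$, then at $s_{0}$ the two solutions would have the same positive value, and uniqueness would force $\tilde p(\cdot\,|a)\equiv\tilde p(\cdot\,|a')$ on $(0,\bar t)$ and hence, by continuity, on $I$ — contradicting the hypothesis $g(t)>0$. Therefore $g$ has no zero in $(0,\bar t)$, so, being continuous, it has constant sign there; since $g(t)>0$ for some $t\in I=[0,\bar t]$, continuity forces that sign to be positive, i.e.\ $g\ge 0$ on all of $I$. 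This is exactly $\tilde p(t'\,|a)\ge\tilde p(t'\,|a')$ for every $t'\in I$.

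For the comparison of terminal values I would first record that $\hat p(\hat t(a)\,|a)=\tilde p(\tilde t(a)\,|a)$ and, likewise, $\hat p(\hat t(a')\,|a')=\tilde p(\tilde t(a')\,|a')$. This follows from the shift identity $\tilde p(\tilde c-c(a)+t\,|a)=\hat p(t\,|a)$ together with monotonicity: $\tilde p(\cdot\,|a)$ is positive on $[0,\tilde c-c(a)]$ since it decreases to the value $p(a)>0$ there, and beyond that point $\hat p(\cdot\,|a)>0$ on $[0,\hat t(a))$ translates to $\tilde p(\cdot\,|a)>0$ on $[0,\tilde c-c(a)+\hat t(a))$, so $\tilde c-c(a)+\hat t(a)=\tilde t(a)$ whether or not $\hat p(\cdot\,|a)$ reaches $0$ before cost $c(a)$. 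Now split into two cases. If $\tilde t(a)\le\tilde t(a')$, then $\tilde t(a)\in I$, so the first assertion gives $\tilde p(\tilde t(a)\,|a)\ge\tilde p(\tilde t(a)\,|a')$, and monotonicity of $\tilde p(\cdot\,|a')$ gives $\tilde p(\tilde t(a)\,|a')\ge\tilde p(\tilde t(a')\,|a')$; chaining these yields $\hat p(\hat t(a)\,|a)\ge\hat p(\hat t(a')\,|a')$. If instead $\tilde t(a)>\tilde t(a')$, then $\tilde t(a')<\tilde t(a)\le\tilde c$, which forces $\tilde p(\tilde t(a')\,|a')=0$, and $\hat p(\hat t(a)\,|a)\ge 0=\hat p(\hat t(a')\,|a')$ is immediate.

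The only delicate point is that $f$ blows up as $y\downarrow 0$ (when $w_{10}=0$), so uniqueness is available only on intervals keeping the solutions bounded away from $0$; this is precisely why the uniqueness step is run on the open interval $(0,\bar t)$ and the conclusion extended to the closed interval $I$ by continuity. Everything else is bookkeeping about how the intervals $[0,\tilde t(\cdot)]$ and the shift between $\hat p$ and $\tilde p$ line up, and I do not anticipate any real obstacle there.
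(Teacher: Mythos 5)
Your proof is correct and follows essentially the same route as the paper, which treats the claim as an immediate consequence of uniqueness of solutions to the IVP where $f$ is Lipschitz. You fill in the two pieces the paper elides: the standard open-closed/connectedness argument turning local uniqueness into a global no-crossing statement on $(0,\bar t)$, and the case split on $\tilde t(a)\lessgtr\tilde t(a')$ combined with the shift identity $\tilde t(a)=\tilde c-c(a)+\hat t(a)$ and monotonicity of $\tilde p(\cdot\,|a')$ to get the comparison of terminal values. Your explicit handling of the singularity of $f$ at $y=0$ (restricting uniqueness to compact subintervals where solutions are bounded away from zero, then extending by continuity) is a careful treatment of the point the paper addresses only implicitly by restricting to $\bar t<\tilde c$.
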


Suppose, towards contradiction, that there was a game with a maximal equilibrium action distribution $p$ satisfying $p<\bar{p}$. Then, there must exist a finite path of actions in $A$,  $(a_1,...,a_n)$, for which (i) $a_1$ is the maximal element of $A$ and $p(a_n)= p$, (ii) $p(a_1) > ... > p(a_n)$, and (iii) $a_{k} \in \overline{BR}(a_{k-1})$ (so that $c(a_1) > ... > c(a_n)$)  for $k=2,...,n$. It suffices to consider the case in which $\bar{p}> 0$, so that for any $\bar{a}_0 \in \underset{a_0}{\arg \max}~~ \hat{p}(\hat{t}(a_0) | a_0)$, $\tilde{p}'(\cdot |\bar{a}_0)$ is defined on $[0,\tilde{c}]$. Otherwise, it could never be that $p<\bar{p}$.

Now, let $a_k$ be the first action in the path $(a_1,...,a_n)$ at which $c(a_k)< c(\bar{a}_0)$. Such an action must exist. If not, then $c(a_n) \geq c(\bar{a}_0)$. So, if $p=p(a_n) < \bar{p}<p(\bar{a}_0)$, then  $(a_n,a_n)$ could not be a Nash equilibrium; $\bar{a}_0$ would be a strict best-response to $a_n$.

Consider the case in which $k=1$, so that $c(a_1)<c(\bar{a}_0)$. Then, \[\tilde{p}(\bar{c}-c(a_1) | a_1)=p(a_1) \geq p(\bar{a}_0)=\tilde{p}(\bar{c}-c(\bar{a}_0) | \bar{a}_0)>\tilde{p}(\bar{c}-c(a_1) | \bar{a}_0),\]
where the first inequality follows because $a_1$ is maximal in $A$ and the second because $\tilde{p}(\cdot | \bar{a}_0)$ is strictly decreasing. But then, $\hat{p}(\hat{t}(a_1)|a_1) \geq \hat{p}(\hat{t}(\bar{a}_0)|\bar{a}_0)$ by Claim \ref{nocrossing}. Hence, by Claim \ref{BR_LB}, \[p=p(a_n) \geq \hat{p}(\hat{t}(a_1)|a_1) \geq \hat{p}(\hat{t}(\bar{a}_0)|\bar{a}_0)= \bar{p}.\]

Consider the case in which $k>1$. Then, there exist two actions $a_{k-1}$ and $a_k$ for which $c(a_{k-1}) \geq c(\bar{a}_0) > c(a_{k})$. Notice, $p(a_{k-1}) \geq p(\bar{a}_0)$; if not and $k=2$, then $a_{k-1}$ could not have been a maximal element and, if $k>2$, then $a_{k-1}$ could not have been a best response to $a_{k-2}$ because $\bar{a}_0$ would have yielded a strictly higher payoff. Notice also that it must be the case that \[p(a_k) < \tilde{p}(\bar{c}-c(a_k) | \bar{a}_0) \leq  \tilde{p}(\bar{c}-c(\bar{a}_0) | \bar{a}_0)=p(\bar{a}_0).\] If the first inequality did not hold, then $\tilde{p}(\bar{c}-c(a_{k}) | \bar{a}_0) \leq p(a_k)=\tilde{p}(\bar{c}-c(a_{k}) | a_{k})$, in which case Claim \ref{nocrossing} implies that $\hat{p}(\hat{t}(a_k)|a_k) \geq \hat{p}(\hat{t}(\bar{a}_0)|\bar{a}_0)$. Hence, by Claim \ref{BR_LB}, it must be that $p=p(a_n) \geq \hat{p}(\hat{t}(a_k)|a_k) \geq \hat{p}(\hat{t}(\bar{a}_0)|\bar{a}_0)= \bar{p}$. The second inequality follows because $\tilde{p}(\cdot | \bar{a}_0)$ is decreasing.

I show that $\bar{a}_0$ is a weakly better response to $a_{k-1}$ than $a_{k}$, contradicting the claim that $a_k \in \overline{BR}(a_{k-1})$ (since $\bar{a}_0>a_k$). This is equivalent to showing that,
\[ p(\bar{a}_0) \left[ p(a_{k-1}) w_{11}+(1-p(a_{k-1})) w_{10} \right] -c(\bar{a}_0) \geq p(a_{k}) \left[ p(a_{k-1}) w_{11}+(1-p(a_{k-1})) w_{10} \right]- c(a_{k}) \]
\[\iff - \left[ \frac{p(\bar{a}_0) -p(a_k)}{c(\bar{a}_0)-c(a_k)} \right] \leq -\left[ \frac{1}{p(a_{k-1}) w_{11}+(1-p(a_{k-1})) w_{10}} \right].\]
Notice that,
\[ - \left[ \frac{p(\bar{a}_0) -p(a_k)}{c(\bar{a}_0)-c(a_k)} \right] \leq \frac{\tilde{p}(\bar{c}-c(\bar{a}_0)| \bar{a}_0)-\tilde{p}(\bar{c}-c(a_{k})|\bar{a}_0)}{(\bar{c}-c(\bar{a}_0))-(\bar{c}-c(a_k))} \leq \tilde{p}'(\bar{c}-c(a_{k}) | \bar{a}_0), \]
where the first inequality follows because $p(a_k) < \tilde{p}(\bar{c}-c(a_k) | \bar{a}_0)$ and the second inequality follows because $\tilde{p}( \cdot | \bar{a}_0)$ is concave. Further, 
\begin{small}
\[ -\left[ \frac{1}{p(a_{k-1}) w_{11}+(1-p(a_{k-1})) w_{10}} \right] \geq  -\left[ \frac{1}{ p(\bar{a}_0) w_{11}+(1-p(\bar{a}_0)) w_{10}} \right] = \tilde{p}'(\bar{c}-c(\bar{a}_0) | \bar{a}_0), \]
\end{small}
where the first inequality follows from $p(a_{k-1}) \geq p(\bar{a}_0)$. But, since $c(\bar{a}_0) \geq c(a_{k})$,
\[ \tilde{p}'(\bar{c}-c(a_{k}) | \bar{a}_0) \leq \tilde{p}'(\bar{c}-c(\bar{a}_0) | \bar{a}_0),\]
again by concavity of $\tilde{p}( \cdot |  \bar{a}_0)$.

\subsubsection*{$\bar{p}$ is the greatest lower bound}

I need only exhibit a sequence of action spaces $(A_n)$ for which $A_n \supseteq A^0$, $\bar{a}_n$ is the maximal Nash equilibrium action of $\Gamma(w,A_n)$, and,
\[ p(\bar{a}_n) \rightarrow \bar{p}~~\text{ as $n \rightarrow \infty$.} \] Let $\tilde{c}$ be the maximal cost of any action in $A_0$ and $\tilde{p}$ be the maximal probability. Then, define $\tilde{p}(\cdot | a)$ as before. Finally, let $\bar{a}_0 \in \underset{a_0}{\arg \max}~~ \hat{p}(\hat{t}(a_0) | a_0)$ be chosen so that $\tilde{t}(\bar{a}_0) \geq \tilde{t}(a_0)$ for all $a_0 \in A^0$.\footnote{Intuitively, $\tilde{p}(\hat{t}(a_0) | a_0)$ may equal zero for many $a_0 \in A^0$. The selection of $\bar{a}_0$ ensures that $\tilde{p}(\cdot | \bar{a}_0)$ hits zero at the largest time and therefore, invoking Claim \ref{nocrossing}, is always above the differential equations associated with other known actions.}

Suppose first that $f(t,\tilde{p}(t|\bar{a}_0))$ exists for all $t \in [0,\tilde{c}]$ so that $\tilde{p}'(\cdot|a)$ and $\tilde{p}''(\cdot|a)$ are bounded:
\[|\tilde{p}'(t|a)| \leq |\frac{p'(t|a) (w_{11}-w_{10})}{(\hat{p}(\hat{t}|a) w_{11} +(1-\hat{p}(\hat{t}|a)) w_{10})^2}|:=\kappa_1>0,\]
and,
\[|\hat{p}''(t|a)| \leq |\kappa_1 \frac{(w_{11}-w_{10})}{(\hat{p}(\hat{t}|a) w_{11} +(1-\hat{p}(\hat{t}|a)) w_{10})^2}|:=\kappa_2>0.\] Now, consider a sequence of action spaces $(A_n)$, with $A_n:= \{a^n_1, a^n_2, ..., a^n_n\} \cup  A^0$. Set $a^n_1=\tilde{p}(\underline{t}|\bar{a}_0)$, where $\underline{t} \in [0,\tilde{c}]$ is such that $\tilde{p}(\underline{t}|\bar{a}_0)=1$, and $\bar{a}_n := a^n_n$ for each $n$. Set $c(a^n_{k-1})-c(a^n_{k})=\frac{\tilde{c}}{n}:=\epsilon(n)$ for $k=2,..,n$, $\rho(n):=\frac{1}{n^2} \frac{\tilde{c}}{w_{11}+1}$, and 
 \begin{equation}
 p(a^n_k) = p(a^n_{k-1})- \frac{\epsilon(n)}{p(a^n_{k-1}) w_{11}+(1-p(a^n_{k-1})) w_{10}}+\rho(n) \tag{E} \label{Euler} \end{equation} for $k=2,...,n$. Notice,
\[- \frac{1}{n} \frac{c(a)}{p(a^n_{k-1}) w_{11}+(1-p(a^n_{k-1})) w_{10}}+\frac{1}{n^2} \frac{c(a)}{w_{11}+1}<0,\]
for $k=2,...,n$ so that $a^n_{1}>a^n_{2}>...>a^n_{n}$. Equation \ref{Euler} approximates $\tilde{p}(t | \bar{a}_0)$ on $[\underline{t},\bar{c}] \times [0,\bar{p}]$ using Euler's method with rounding error term $\rho(n)$. By the rounding error analysis of \cite{Atkinson_1989} (see Theorem 6.3 and Equation 6.2.3), since $\tilde{p}'( \cdot |a)$ is bounded by $\kappa_1>0$, and $\tilde{p}''(\cdot |a)$ is bounded by $\kappa_2>0$, it must be the case that
 \[ |p(\bar{a}_n)-\tilde{p}(\bar{c}| \bar{a}_0)| \leq \left[ \frac{e^{c(a) \kappa_1}-1}{\kappa_1} \right]  \left[\frac{\epsilon(n)}{2} \kappa_2 +\frac{\rho(n)}{\epsilon(n)} \right]. \]
 Since $\epsilon(n) \rightarrow 0$ as $n \rightarrow \infty$ and $\frac{\rho(n)}{\epsilon(n)}=\frac{1}{n} \frac{1}{w_{11}+1} \rightarrow 0$ as $n \rightarrow \infty$, the right-hand side approaches zero. Hence, $p(\bar{a}_n)$ becomes arbitrarily close to $\tilde{p}(\tilde{c}| \bar{a}_0)=\bar{p}$ as $n \rightarrow \infty$.

I need only argue that $(a^n_n, a^n_n)$ is the maximal Nash equilibrium of $\Gamma(w, A_n)$. For any $a_0 \in A^0$, $\hat{p}(\hat{t}(\bar{a}_0)|\bar{a}_0) \geq \hat{p}(\hat{t}(a_0)|a_0)$. Claim \ref{nocrossing} thus ensures that $\tilde{p}(t |\bar{a}_0) \geq \tilde{p}(t |a_0)$ for any $t \in [\underline{t},\tilde{c}]$ for which both $\tilde{p}(t |\bar{a}_0)$ and $\tilde{p}(t |a_0)$ are defined. Hence, $a^n_1=\bar{a}_0$ is the maximal element of $A_n$; if there is another action in $A^0$ that succeeds with probability one, it must have a higher cost. Finally, as Euler's method approximates $\tilde{p}(\cdot | \bar{a}_0)$ from above and there does not exist an element $a_0 \in A^0$ for which $\tilde{p}(t|a_0)>\tilde{p}(t | \bar{a}_0)$ for any $t \in [\underline{t},\bar{c}]$, $a^n_k \in \overline{BR}(a^n_{k-1})$ for each $n$ and $k=2,...n$. This implies that $a^n_n$ is the maximal Nash equilibrium action of $\Gamma(w,A_n)$.

 In the case in which $f(t,\tilde{p}(t)| \bar{a}_0)$ does \textit{not} exist for all $t \in [0, \bar{c}]$, there exists some $\bar{t} \in [0,\bar{c}]$ at which $\hat{p}(\bar{t}| \bar{a}_0)=0$, where $\tilde{p}(\bar{t}| \bar{a}_0)$ is the solution to the differential equation on $[0,\bar{t}] \times [0,p(a)]$. For any interval $[0,\hat{t}]$ such that $\hat{t}< \bar{t}$, I can mirror the argument in the case in which $f(t,\tilde{p}(t)| \bar{a}_0)$ is well-defined for all $t \in [0, \bar{c}]$ by setting $c(a^n_{k-1})-c(a^n_{k})=\frac{\hat{t}}{n}:=\epsilon(n)$ for all $k=1,..,n$ and $\rho(n):=\frac{1}{n^2} \frac{\hat{t}}{w_{11}+1}$ to show that $p(a^n_n)$ approaches $\tilde{p}(\hat{t} | \bar{a}_0)$ as $n$ goes to infinity. But $\hat{t}$ can be chosen arbitrarily close to $\bar{t}$, in which case $\tilde{p}(\hat{t} | \bar{a}_0)$ becomes arbitrarily close to $\tilde{p}(\bar{t} | \bar{a}_0)=0$. Hence, for any $\epsilon>0$, there exists a sequence of games with a maximal equilibrium action distribution $p(a^n_n)$ converging to a point in $[0,\epsilon)$ as $n$ approaches infinity. This establishes that $\bar{p}=0$ is the greatest lower bound.
 
 \subsection{Proof of Lemma \ref{JPEvIPE}} \label{JPEIPE_proof}
  
Let \[(w^*, a^*_0)  \in \underset{w \in [0,1], a_0 \in A^0}{\arg \max} (1-w)(p(a_0)-\frac{c(a_0)}{w}),\] $p^*:= p(a^*_0)$, and $c^*:= c(a^*_0)$. By the assumption of non-triviality, $p^*>c^*$ since choosing any action in $A^0$ that does not satisfy this property results in at most zero profit. By the assumption that known actions are costly, $c^*>0$ and so $w^*=\sqrt{ \frac{c^*}{p^*}} \in (0,1)$. Moreover,
\[ V^*_{IPE}=2 (1-w^*) (p^*- \frac{c^*}{w^*})<2(1-w^*).\]
Now, consider the JPE setting $w_{10}=w^*-\epsilon$, for $\epsilon>0$ small, and 
\[p^* w_{11}+ (1-p^*) w_{10}=w^* \Rightarrow w_{11}-w_{10}= \frac{\epsilon}{p^*}.\] I show that the principal obtains a strictly higher profit than $V^*_{IPE}$. Since $V^*_{IPE}=2 (1-w^*) (p^*- \frac{c^*}{w^*})<2 (1-w^*)$, I need only show that the principal obtains a higher payoff in the worst-case shirking equilibrium. 

Elementary methods show that the solution to the differential equation in Lemma \ref{JPE_worst} associated with $a^*_0$ evaluated at $c^*$ is:
 \begin{equation}
 \begin{aligned}
 \bar{p}(\epsilon):&= \frac{\sqrt{(p^* w_{11}+(1-p^*) w_{10})^2-2 c^* (w_{11}-w_{10})}-w_{10} }{w_{11}-w_{10}}
 &= p^* w^* \left( \frac{\sqrt{1-2 \epsilon}-1}{\epsilon} \right)+p^*.
 \end{aligned}
 \notag
 \end{equation}
 Using L'Hôpital's rule,
 \[ \underset{\epsilon \rightarrow 0^+}{\lim}~~\frac{\sqrt{1-2 \epsilon}-1}{\epsilon}= \underset{\epsilon \rightarrow 0^+}{\lim}~~ - (1-2\epsilon)^{-1/2}=-1. \]
 So,
\[\underset{\epsilon \rightarrow 0^+}{\lim}~~ \bar{p}(\epsilon)= p^*(1-w^*)= p^*-\frac{c^*}{w^*}. \]
In addition, for $\epsilon>0$,
\[ \bar{p}'(\epsilon)= p^* w^* \left( \frac{- (1-2\epsilon)^{-1/2} \epsilon -\sqrt{1-2\epsilon}+1}{\epsilon^2} \right). \]
Repeatedly using L'Hôpital's rule,
\begin{equation}
\begin{aligned}
\underset{\epsilon \rightarrow 0^+}{\lim}~~\frac{- (1-2\epsilon)^{-1/2} \epsilon -\sqrt{1-2\epsilon}+1}{\epsilon^2} &= \underset{\epsilon \rightarrow 0^+}{\lim}~~\frac{ -3 (1-2\epsilon)^{-5/2} \epsilon -(1-2\epsilon)^{-3/2}}{2} %\underset{\epsilon \rightarrow 0^+}{\lim}~~\frac{ - (1-2\epsilon)^{-3/2} \epsilon-(1-2\epsilon)^{-1/2} +(1-2\epsilon)^{-1/2}}{2 \epsilon}\\
%&= \underset{\epsilon \rightarrow 0^+}{\lim}~~\frac{ -3 (1-2\epsilon)^{-5/2} \epsilon -(1-2\epsilon)^{-3/2}}{2}\\
= - \frac{1}{2}.
\end{aligned}\notag
\end{equation}
So,
\[ \underset{\epsilon \rightarrow 0^+}{\lim}~~\bar{p}'(\epsilon)=-\frac{1}{2} p^*  w^* .\]
Notice, if both agents choose an action that results in success with probability $p(\epsilon)$, the principal's payoff from each agent in the shirking equilibrium is 
\[ \pi(\epsilon):= \bar{p}(\epsilon) \left[ 1- (\bar{p}(\epsilon) w_{11}+(1-\bar{p}(\epsilon)) w_{10})\right]= \bar{p}(\epsilon) \left(1 -w^* - \bar{p}(\epsilon) \frac{\epsilon}{p^*}+\epsilon\right)\]
and
\[ \underset{\epsilon \rightarrow 0^+}{\lim}~ \pi(\epsilon)= (p^*-\frac{c^*}{w^*})(1-w^*),\]
the least upper bound payoff the principal obtains from each agent within the class of IPE. Since $\bar{p}$ (as defined in Lemma \ref{JPE_worst}) is weakly larger than $\bar{p}(\epsilon)$ for every $\epsilon>0$ and profits are strictly increasing in the probability with each worker succeeds when $\epsilon>0$ is small\footnote{Simply observe that, for $\epsilon>0$ small, \[ \frac{\partial}{\partial p} \left[ p (1-w^*) + p (1-p) \epsilon \right]= (1-w^*)+(1-2p) \epsilon>0, \]
since $w^*<1$.}, it suffices to show that
\[ \partial_+ \pi(0) > 0,\]
where $\partial_+$ is the right derivative of $\pi(\epsilon)$ at $0$. For $\epsilon>0$, the derivative of $\pi$ is well-defined and equals
\begin{equation}
\begin{aligned}
 \pi'(\epsilon)
 = \bar{p}'(\epsilon) (1 -w^*- \bar{p}(\epsilon) \frac{\epsilon}{p^*}+\epsilon)+ \bar{p}(\epsilon) \left( \frac{p^*-\bar{p}(\epsilon)}{p^*}-\bar{p}'(\epsilon) \frac{\epsilon}{p^*} \right).
 \end{aligned}\notag
\end{equation}
Hence, 
\begin{equation}
\begin{aligned}
\partial_+ \pi (0)= \underset{\epsilon \rightarrow 0^+}{\lim}~~ \pi'(\epsilon) &= ( \underset{\epsilon \rightarrow 0^+}{\lim}~ \bar{p}'(\epsilon)) (1-w^*)+ (\lim_{\epsilon \rightarrow 0^+}\bar{p}(\epsilon)) \left( \frac{p^*-\lim_{\epsilon \rightarrow 0^+}\bar{p}(\epsilon)}{p^*} \right)\\
&= (-\frac{1}{2} p^* w^*)(1-w^*)+(p^* w^*) (1-w^*)\\
&= \frac{1}{2} p^* w^* (1-w^*)>0.
\end{aligned}
\notag
\end{equation}

\subsection{Proofs for Section \ref{laststeps}}\label{details_laststeps}

\subsubsection*{Existence}

A worst-case optimal JPE with $w_{00}=w_{01}=0$ solves
\begin{equation}
\begin{aligned}
\underset{w_{11}, w_{10}}{\max} & \min\{1-w_{11}, \bar{p}(w_{11}, w_{10}) \left[ \bar{p}(w_{11}, w_{10}) (1-w_{11})+ (1-\bar{p}(w_{11},w_{10})) (1- w_{10}) \right] \}\\
&\text{subject to}\\
&\bar{p}(w_{11},w_{10}) = \underset{a_0 \in A^0}{\max}~~ \hat{p}(\hat{t}(a_0; w_{11}, w_{10}) | a_0 ; w_{11}, w_{10}) \\
& 1 \geq w_{11} \geq w_{10} \geq 0,
\end{aligned}
\notag
\end{equation}
where $\hat{p}(\hat{t}(a_0; w_{11}, w_{10}) | a_0; w_{11}, w_{10})$ is defined in the statement of Lemma \ref{JPE_worst} (I now make explicit the terms that depend on the wage scheme).\begin{footnote}{I may bound $w_{11}$ above by $1$ without altering the solution set because any larger wage cannot be eligible (it yields the principal a profit of at most zero by the first argument of the objective function). I may relax the strict inequality between $w_{11}$ and $w_{10}$ to be a weak relationship without altering the solution set since I have already shown that for any wage scheme setting $w_{11}=w_{10}$ there exist wages $w_{11}>w_{10}$ that yield the principal strictly higher profits.}\end{footnote} As $\mathcal{D}:=\{(w_{11},w_{10}): 0\leq w_{10} \leq w_{11} \leq 1\}$ is a closed and bounded subset of $\mathbb{R}^2$, it is compact. Moreover, the objective function is continuous.\footnote{This follows from continuity of $\hat{p}(\hat{t}(a_0; w_{11}, w_{10}) | a_0; w_{11}, w_{10})$ (see Theorem 4.1 of \cite{CoddingtonLevinson_1955}), which in turn implies that $\bar{p}$ is continuous (since the maximum of continuous functions is continuous), which in turn implies that $\bar{p} \left[ \bar{p} (1-w_{11})+ (1-\bar{p}) (1- w_{10}) \right]$ is continuous. As $1-w_{11}$ is continuous and the minimum of two continuous functions is continuous, the result follows.} Hence, the Weierstrass Theorem ensures the existence of a solution. 

\subsubsection*{Uniqueness}

The proof of Lemma \ref{zero} shows that any contract that is not a JPE and does not set $w_{11}>0$, $w_{00}>0$, and $w_{10}=w_{01}=0$ is weakly improved upon by an IPE or RPE. Lemma \ref{RPEvIPE} and Lemma \ref{JPEvIPE} then establish that such contracts are strictly suboptimal. So, all that is left to show is that any contract setting $w_{11}>0$ and $w_{00}>0$ (with $w_{10}=w_{01}=0$) is strictly suboptimal. For this, it suffices to observe that the characterization of the principal's worst-case payoff given a JPE identified in Lemma \ref{JPE_worst} holds when replacing the law of motion in Equation \ref{diffeq} with
\begin{equation}
\hat{p}'(t) = f(\hat{p}(t)):= -\left[\hat{p}(t) w_{11} -(1-\hat{p}(t)) w_{00}\right]^{-1} \notag
\end{equation}
and setting
\[V(w)= 2 \min\{1-w_{11}, \bar{p}^2 (1-w_{11}) +(1-\bar{p})^2 (- w_{00}) \}. \] The proof of Lemma \ref{zero} establishes that setting $w_{00}=0$ yields a weak improvement for the principal. It also establishes that this improvement is strict if, given this adjustment, the principal's payoff  (from each agent) in the shirking equilibrium is smaller than $1-w_{11}$. So, I need only consider the case in which $1-w_{11}$ is strictly smaller than the principal's payoff in the shirking equilibrium. In this case, the resulting contract is strictly suboptimal; the principal could reduce $w_{11}$ by a small amount and strictly increase her payoff (because $\bar{p}$ is continuous in $w_{11}$). Hence, the original contract with $w_{00}>0$ is strictly suboptimal as well.

\subsection{Proof of Theorem \ref{multiple_thm}}\label{multiple}

I first (slightly) modify the proof of Lemma \ref{sub_linear} to establish that no affine contract can yield a higher worst-case payoff than $V^*_{IPE}$.

\begin{lemma}
Suppose there are $i=1,2,...,n$ agents and output belongs to a compact set $Y$ with $\min(Y)=0< \bar{y}=\max(Y)$. For any affine contract $w$, $V(w) \leq V^*_{IPE}$.
\end{lemma}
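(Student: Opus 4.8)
The plan is to run the argument of Lemma~\ref{sub_linear} essentially verbatim, upgrading from two agents and binary output to $n$ agents and a compact output set. Fix an affine contract $w(y_i,y_{-i}) = \alpha_0 + \alpha_i y_i + \sum_{j \neq i} \alpha_j y_j$ with $\alpha_k \geq 0$ for all $k$, and let $w'$ be the IPE contract $w'(y_i,y_{-i}) = \alpha_i y_i$, obtained by zeroing out the intercept and every cross coefficient. Since $w'$ is an IPE, $V(w') \leq V^*_{IPE}$ by definition of $V^*_{IPE}$, so it suffices to prove $V(w') \geq V(w)$, i.e.\ $V(w',A) \geq V(w,A)$ for every finite $A \supseteq A^0$ followed by taking infima.

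The first step is to show the two contracts induce the same game up to a payoff-irrelevant additive shift. For any action profile $(a_i, a_{-i})$, agent $i$'s expected payoff under $w$ equals her payoff under $w'$ plus the term $\alpha_0 + \sum_{j \neq i} \alpha_j E_{F(a_j)}[y_j]$; the key point is that, by the absence of productive linkages, this added term depends only on $a_{-i}$ and not on $a_i$. Hence, holding fixed the (possibly mixed) profile of the other agents, $a_i \mapsto U_i(a_i, a_{-i}; w)$ and $a_i \mapsto U_i(a_i, a_{-i}; w')$ differ by a constant, so agent $i$'s best-response correspondence is unchanged; the same holds under mixed $\sigma_{-i}$ since the shift enters linearly in the mixing weights. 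Therefore $\mathcal{E}(w,A) = \mathcal{E}(w',A)$ for every $A \supseteq A^0$.

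The second step compares the principal's payoff equilibrium by equilibrium. Fix any $\sigma$ in the common set $\mathcal{E}(w,A) = \mathcal{E}(w',A)$. Under $\sigma$ the distribution of aggregate output $\sum_i y_i$ is the same whichever of the two contracts is in force, because the action distributions are unchanged, while the realized wage bill satisfies $\sum_i w(y_i,y_{-i}) = n\alpha_0 + \sum_i \alpha_i y_i + \sum_i \sum_{j\neq i}\alpha_j y_j \geq \sum_i \alpha_i y_i = \sum_i w'(y_i,y_{-i})$ pointwise, since $\alpha_0,\alpha_j \geq 0$ and every $y \in Y$ is nonnegative. Thus the principal's expected profit under $w'$ is weakly larger for this $\sigma$; taking the maximum over the common equilibrium set yields $V(w',A) \geq V(w,A)$, and taking the infimum over $A \supseteq A^0$ yields $V(w') \geq V(w)$, completing the proof.

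There is essentially no obstacle here; the only points worth stating carefully are (i) that technological independence is precisely what makes the dropped term independent of $a_i$, so that \emph{incentives}, and not merely the principal's objective, are preserved, and (ii) that the principal's selection operator $\max_{\sigma \in \mathcal{E}}$ is monotone once the feasible set $\mathcal{E}$ is shown to coincide for the two contracts. Neither step uses anything about the cardinality of $Y$ or of the set of agents beyond nonnegativity of outputs and additive separability of affine contracts, so the two-agent binary-output proof transfers without modification.
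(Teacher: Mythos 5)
Your proof is correct and follows essentially the same route as the paper's: zero out $\alpha_0$ and the cross-coefficients to get a dominated IPE $w'$, note that the resulting shift in each agent's payoff is constant in her own action (so the equilibrium set is unchanged), and then observe that the principal's wage bill weakly decreases pointwise while productivity is unaffected, so $V(w',A) \geq V(w,A)$ for every $A$. You merely spell out in a bit more detail the two points the paper states compactly — that technological independence is what makes the shift $a_i$-invariant, and that the selection $\max_{\sigma \in \mathcal{E}}$ respects equilibrium-by-equilibrium dominance once the equilibrium sets coincide.
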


\begin{proof}
Suppose $w$ is an affine contract with parameters $\alpha_0 \geq 0$ and $\alpha_k \geq 0$ for all $k=1,...,n$. Consider an IPE contract $w'$ with parameters $\alpha'_0=\alpha'_j=0$  for all $j \neq i$. I claim that this contract weakly increases the principal’s worst-case payoff. First, observe that, for any $A \supseteq A_0$, the incentives of the agents are unchanged; a constant shift in an agent’s payoff holding fixed the action of the other does not affect her optimal choice of action. Hence, $\sigma \in \mathcal{E}(w,A)$ if and only if $\sigma \in \mathcal{E}(w',A)$. Second, observe that, for any equilibrium $\sigma \in \mathcal{E}(w,A)=\mathcal{E}(w',A)$, the principal's expected payoff under $w'$ is weakly larger than under $w$; her expected wage payments decrease and each agent's productivity is unchanged. Hence, $V(w',A) \geq V(w,A)$ for any $A \supseteq A^0$. It follows that
\[ V(w)=  \underset{A \supseteq A^0}{\inf}~V(w,A) \leq \underset{A \supseteq A^0}{\inf}~V(w',A)= V(w') \leq V^*_{IPE}.\]\end{proof}

 I next establish that there is a nonaffine JPE contract that yields a strictly higher worst-case payoff than $V^*_{IPE}$. For this purpose, equip any action set $A$ with the total order $\succeq$: $a \succeq a'$ if either $E_{F(a)}[y_i] > E_{F(a')}[y_i]$, or $E_{F(a)}[y_i]=E_{F(a')}[y_i]$ and $c(a_i) \leq c(a_j)$. Then, $(A,\succeq)$ is a complete lattice and any game $\Gamma(A, w)$, where $A \supseteq A_0$ and $w$ is in the class of nonaffine JPE contracts stated in the Theorem, is supermodular. In addition, the following generalization of Lemma \ref{lemma_super} applies.
 
 \begin{lemma}[\cite{Vives_1990}, \cite{MilgromRoberts1990}]\label{general_super}
Suppose $\bar{a}$ ($\underline{a}$) is the limit found by iterating $\overline{BR}$ ($\underline{BR}$) starting from $a_{\max}$ ($a_{\min}$). If $\Gamma(w,A)$ is supermodular, then it has a maximal Nash equilibrium in which all agents play $\bar{a}$.
\end{lemma}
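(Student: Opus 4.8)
The plan is to obtain the statement from the standard Tarski--Topkis analysis of supermodular games, specialized to the symmetric $n$-player game $\Gamma(w,A)$. Two claims need proof: (i) the symmetric profile $(\bar a,\dots,\bar a)$ produced by iterating $\overline{BR}$ from $a_{\max}$ is a Nash equilibrium, and (ii) it is maximal, in the sense that $\bar a \succeq a_i$ for every action $a_i$ in the support of every (possibly mixed) equilibrium strategy $\sigma_i$ and every agent $i$; in particular every pure equilibrium lies weakly below $(\bar a,\dots,\bar a)$ in the product order.

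I would begin with two preliminaries. Since $A$ is a finite chain, $(A,\succeq)$ is a complete lattice, and for each opponent profile $a_{-i}$ the set $\argmax_{a_i\in A} U_i(a_i,a_{-i};w)$ is nonempty and finite, so its maximum $\overline{BR}(a_{-i})$ is well defined; by symmetry the same selection serves all agents. Next I would establish monotonicity of $\overline{BR}$ (Topkis's theorem, which is the content of \cite{Vives_1990} and \cite{MilgromRoberts1990}): if $a_{-i}'\succeq a_{-i}$, with $a_i\in\argmax(\cdot,a_{-i})$, $a_i'\in\argmax(\cdot,a_{-i}')$ and $a_i'\prec a_i$, then increasing differences (Definition \ref{supermodular}) give $U_i(a_i,a_{-i}';w)-U_i(a_i',a_{-i}';w)\ge U_i(a_i,a_{-i};w)-U_i(a_i',a_{-i};w)\ge 0$, so $a_i$ is also optimal against $a_{-i}'$ and hence $\overline{BR}(a_{-i}')\succeq a_i$; thus $\overline{BR}$ is nondecreasing in the product order on $A^{n-1}$.

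Next I would run the iteration. Let $\Phi:A\to A$ be $\Phi(a):=\overline{BR}(a,\dots,a)$, the composition of the monotone diagonal map with $\overline{BR}$, hence monotone. Put $a^0:=a_{\max}$ and $a^{k+1}:=\Phi(a^k)$. Since $a_{\max}$ is the top of $A$, $a^1=\Phi(a_{\max})\preceq a_{\max}=a^0$, and monotonicity of $\Phi$ then forces a weakly decreasing chain $a^0\succeq a^1\succeq a^2\succeq\cdots$ in the finite lattice $A$, which stabilizes at some $\bar a$ with $\bar a=\Phi(\bar a)=\overline{BR}(\bar a,\dots,\bar a)$. So $\bar a$ is a best response when all other agents play $\bar a$, proving (i), and $\bar a$ is exactly the limit named in the statement. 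For (ii), I would show by induction that $a^k\succeq a_i$ for every $a_i\in\supp(\sigma_i)$ of any equilibrium $\sigma$; the base case is immediate. For the step, each $a_i\in\supp(\sigma_i)$ is a best response to $\sigma_{-i}$, so $\max\supp(\sigma_i)\preceq\overline{BR}(\sigma_{-i})$, where $\overline{BR}$ now denotes the largest pure best response to a mixed profile. Increasing differences pass to the mixed extension in the first-order stochastic dominance order, so $\overline{BR}$ remains monotone there; by the induction hypothesis $\sigma_{-i}$ is dominated (FOSD, coordinatewise) by the point mass at $(a^k,\dots,a^k)$, whence $\max\supp(\sigma_i)\preceq\overline{BR}(\sigma_{-i})\preceq\overline{BR}(a^k,\dots,a^k)=a^{k+1}$. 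Letting $k\to\infty$ gives $\bar a\succeq a_i$, which is (ii).

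The only step with real content is the monotone comparative statics for $\overline{BR}$ and its persistence under mixing via stochastic dominance; the remainder is a routine Tarski fixed-point iteration in a finite lattice. Since this is precisely the cited theorems of \cite{Vives_1990} and \cite{MilgromRoberts1990}, one may instead simply quote them after checking the two hypotheses they require --- that $(A,\succeq)$ is a complete lattice and that $U_i$ exhibits increasing differences --- both of which are supplied by the order $\succeq$ and Definition \ref{supermodular}.
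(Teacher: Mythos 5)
Your proof is correct, but note that the paper does not actually prove this lemma: it is invoked as a known result of \cite{Vives_1990} and \cite{MilgromRoberts1990}, exactly as your final sentence anticipates. What you have written is a self-contained reconstruction of the standard argument --- Topkis-style monotonicity of the maximal best response from increasing differences, a decreasing monotone iteration from $a_{\max}$ in the finite chain $(A,\succeq)$ that stabilizes at a fixed point $\bar a$ of the symmetric map $\Phi$, and an inductive bound showing every support point of every (mixed) equilibrium lies weakly below the iterates. That last step is a genuine addition worth flagging: since $\mathcal{E}(w,A)$ contains mixed equilibria, maximality must dominate mixed play, which the cited theorems deliver via bounds on serially undominated strategies; your route instead passes increasing differences to the mixed extension through coordinatewise first-order stochastic dominance of the (independent) opponent strategies, which is valid because $U_i(a_i',a_{-i})-U_i(a_i,a_{-i})$ is monotone in $a_{-i}$. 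Two small points to make explicit if you keep the self-contained version: (a) Definition \ref{supermodular} in the paper is stated for two agents, so in the $n$-agent application you should state increasing differences in $(a_i,a_{-i})$ with the product order, which is what the contract class in Theorem \ref{multiple_thm} in fact delivers; (b) finiteness of $A$ (the principal ranges over finite supersets $A\supseteq A^0$) is what licenses your stabilization-in-finitely-many-steps argument, so say so rather than appealing to completeness of the lattice. With those caveats, either quoting the cited theorems after checking their hypotheses, as the paper does, or running your explicit Tarski--Topkis iteration is a legitimate proof.
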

 
 A slight modification of the two-agent calibration argument establishes the following result.

\begin{lemma}
Suppose there are $i=1,2,...,n$ agents and output belongs to a compact set $Y$ with $\min(Y)=0< \bar{y}=\max(Y)$. Then, there exist values of $w_0 \geq 0$ and $b > 0$ such that the nonaffine JPE contract
\[ w(y_i, y_{-i})= ( w_0 + \frac{b}{n-1} \sum^n_{j \neq i} y_j) y_i  \]
yields the principal strictly higher worst-case expected profits than $V^*_{IPE}$.
\end{lemma}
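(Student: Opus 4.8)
The argument mirrors the two-agent proof of Lemma \ref{JPEvIPE}, with ``probability of success'' reinterpreted throughout as ``expected output.'' Let $(\alpha^*, a^*_0) \in \argmax_{\alpha \in [0,1],\, a_0 \in A_0}(1-\alpha)\big(E_{F(a_0)}[y]-c(a_0)/\alpha\big)$ and write $\mu^* := E_{F(a^*_0)}[y]$ and $c^* := c(a^*_0)$; by non-triviality and costly known actions one has $0 < c^* < \mu^*$, hence $\alpha^* = \sqrt{c^*/\mu^*} \in (0,1)$ and $V^*_{IPE} = n(1-\alpha^*)(\mu^*-c^*/\alpha^*)$. For small $\epsilon>0$ put $w_0 := \alpha^*-\epsilon$ and $b := \epsilon/\mu^*$. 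Since outputs are conditionally independent, an agent who takes $a^*_0$ while all others take $a^*_0$ receives expected wage $\mu^*(w_0+b\mu^*) = \alpha^*\mu^*$, so the JPE $w(y_i,y_{-i}) = (w_0 + \tfrac{b}{n-1}\sum_{j\neq i}y_j)y_i$ is calibrated to the benchmark IPE. I claim it outperforms $V^*_{IPE}$ once $\epsilon$ is small.

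The one substantive step is to establish the generalization of Lemma \ref{JPE_worst} for this contract. Equip $A \supseteq A_0$ with the order $\succeq$ of the Theorem's proof sketch (expected output, then cost); for JPE contracts of this form every induced game is supermodular, so by Lemma \ref{general_super} the worst case is attained at a symmetric equilibrium. Because $w(y_i,y_{-i})$ is linear in $y_i$ and outputs are conditionally independent, an action $a_i$ enters an agent's payoff only through $(E_{F(a_i)}[y_i],\, c(a_i))$: against a symmetric opponent profile with per-agent expected output $p$, her payoff is $E_{F(a_i)}[y_i](w_0+bp)-c(a_i)$ --- formally identical to the binary two-agent payoff with $p(a_i)$ replaced by $E_{F(a_i)}[y_i]$ and $p(a_j)w_{11}+(1-p(a_j))w_{10}$ replaced by $w_0+bp$. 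The proof of Lemma \ref{JPE_worst} (the lower-bound-of-a-$\overline{BR}$-path claim, the no-crossing claim, and the Euler/rounding-error construction showing the bound is attained) therefore transcribes verbatim: for small $\epsilon$ the principal's payoff per agent in a symmetric profile with expected output $p$ is $p(1-w_0-bp)$, strictly increasing in $p$ on $[0,\bar y]$, so the worst case drives each agent's expected output down to $\bar p_{\max}(\epsilon) := \max_{a_0\in A_0}\hat p(\hat t(a_0)\mid a_0)$, where
\[ \hat p'(t) = -\big(w_0+b\hat p(t)\big)^{-1}, \qquad \hat p(0\mid a_0) = E_{F(a_0)}[y], \]
and the principal's worst-case payoff equals $n\,\bar p_{\max}(\epsilon)\big(1-w_0-b\bar p_{\max}(\epsilon)\big)$.

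It remains to repeat the computation of Lemma \ref{JPEvIPE}. Solving the separable equation from $a^*_0$ gives $\bar p(\epsilon) := \hat p(c^*\mid a^*_0) = \big(\sqrt{(w_0+b\mu^*)^2-2bc^*}-w_0\big)/b = \mu^*\big(\sqrt{(\alpha^*)^2-2\epsilon c^*/\mu^*}-\alpha^*+\epsilon\big)/\epsilon$, and $\bar p_{\max}(\epsilon)\ge\bar p(\epsilon)>0$ for small $\epsilon$. L'H\^opital's rule yields $\lim_{\epsilon\to0^+}\bar p(\epsilon) = \mu^*(1-\alpha^*) = \mu^*-c^*/\alpha^*$ and $\lim_{\epsilon\to0^+}\bar p'(\epsilon) = -\tfrac12\mu^*\alpha^*$. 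Writing $\pi(\epsilon) := \bar p(\epsilon)\big(1-w_0-b\bar p(\epsilon)\big) = \bar p(\epsilon)\big[(1-\alpha^*)+\epsilon(1-\bar p(\epsilon)/\mu^*)\big]$ for the per-agent shirking-equilibrium profit, one gets $\lim_{\epsilon\to0^+}\pi(\epsilon) = (1-\alpha^*)(\mu^*-c^*/\alpha^*) = V^*_{IPE}/n$ and
\[ \partial_+\pi(0) = \Big(\lim_{\epsilon\to0^+}\bar p'(\epsilon)\Big)(1-\alpha^*) + \Big(\lim_{\epsilon\to0^+}\bar p(\epsilon)\Big)\alpha^* = -\tfrac12\mu^*\alpha^*(1-\alpha^*) + \mu^*\alpha^*(1-\alpha^*) = \tfrac12\mu^*\alpha^*(1-\alpha^*) > 0. \]
Since $p\mapsto p(1-w_0-bp)$ is strictly increasing on $[0,\bar y]$ for small $\epsilon$, the worst-case payoff is $n\,\bar p_{\max}(\epsilon)(1-w_0-b\bar p_{\max}(\epsilon)) \ge n\,\pi(\epsilon)$, which exceeds $n\lim_{\epsilon'\to0^+}\pi(\epsilon') = V^*_{IPE}$ for all sufficiently small $\epsilon>0$, proving the lemma.

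The main obstacle is the transcription in the second paragraph: one must check carefully that, for JPE contracts of the stated form, every induced game is supermodular in the expected-output order, and --- the point at which conditional independence across agents and linearity of the contract in own output are essential --- that both the agents' best-response dynamics and the principal's objective depend on each action only through its expected output and its cost. Granting this, the scalar differential equation, its Euler approximation with vanishing rounding error, and the entire calculation above are literally the binary two-agent arguments with ``probability of success'' reread as ``expected value of output,'' and the base-wage/bonus calibration generalizes the two-agent calibration unchanged because the $n-1$ opponents' outputs enter agent $i$'s expected wage only through their average.
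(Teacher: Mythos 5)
Your proof is correct and follows essentially the same route as the paper's: calibrate $(w_0, b)$ to the Carroll-optimal IPE via $w_0 = \alpha^* - \epsilon$, $b = \epsilon/\mu^*$, reduce the $n$-agent compact-output problem to the two-agent binary problem by noting that conditional independence plus linearity of $w$ in $y_i$ make payoffs depend only on expected output and cost, and then run the $\partial_+\pi(0) = \tfrac12\mu^*\alpha^*(1-\alpha^*)>0$ perturbation argument. The only difference is presentational --- you spell out the transcription of Lemma \ref{JPE_worst} and recompute the right-derivative explicitly, whereas the paper simply invokes those lemmas after setting $b = w_{11}-w_{10}$.
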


\begin{proof}
Let \[(w^*, a^*_0)  \in \underset{w \in [0,1], a_0 \in A^0}{\arg \max} (1-w)(E_{F(a_0)}[y]-\frac{c(a_0)}{w}),\] $p^*:= E_{F(a^*_0)}[y]$, and $c^*:= c(a^*_0)$. By the assumption of non-triviality, $p^*>c^*$ since choosing any action in $A^0$ that does not satisfy this property results in at most zero profit. By the assumption that known actions are costly, $c^*>0$ and so $w^*=\sqrt{ \frac{c^*}{p^*}} \in (0,1)$. Moreover,
\[ V^*_{IPE}=n (1-w^*) (p^*- \frac{c^*}{w^*})<n (1-w^*).\]

Now, consider the nonaffine JPE in the statement of the Lemma. Set $w_0=w^*-\epsilon$, for $\epsilon>0$ small. Choose $b>0$ to satisfy the calibration equation
\[ p^* \left(w_0+\frac{b}{n-1} \sum^n_{j \neq i} p^* \right)=p^* \left(w_0+b p^* \right)=w^*. \] Since $V^*_{IPE}<n (1-w^*)$, it suffices to show that the principal obtains a higher payoff than under the optimal IPE in the worst-case shirking equilibrium. But the principal's payoff in this equilibrium is simply $n$ times the per-agent payoff in the two-agent case, which can be seen by setting $b=w_{11}-w_{10}$ in the relevant parts of the proof of Lemma \ref{JPE_worst} and applying Lemma \ref{general_super}. Hence, from the proof of Lemma \ref{JPEvIPE}, for $\epsilon>0$ sufficiently small, the so-constructed nonaffine JPE yields the principal a strictly higher worst-case payoff.
\end{proof}

Finally, since any affine contract is outperformed by the optimal IPE and any IPE is strictly outperformed by a nonaffine JPE, any worst-case optimal contract must be nonaffine.

\bibliography{references}

\newpage
\setcounter{page}{1}
\renewcommand{\thepage}{OA-\arabic{page}}

\section{Supplementary Online Appendices}\label{online}

\subsection{Optimal JPE: Numerical Optimization}\label{numericalopt_JPE}

\begin{figure}[htbp]
    \centering
\includegraphics[scale=0.4]{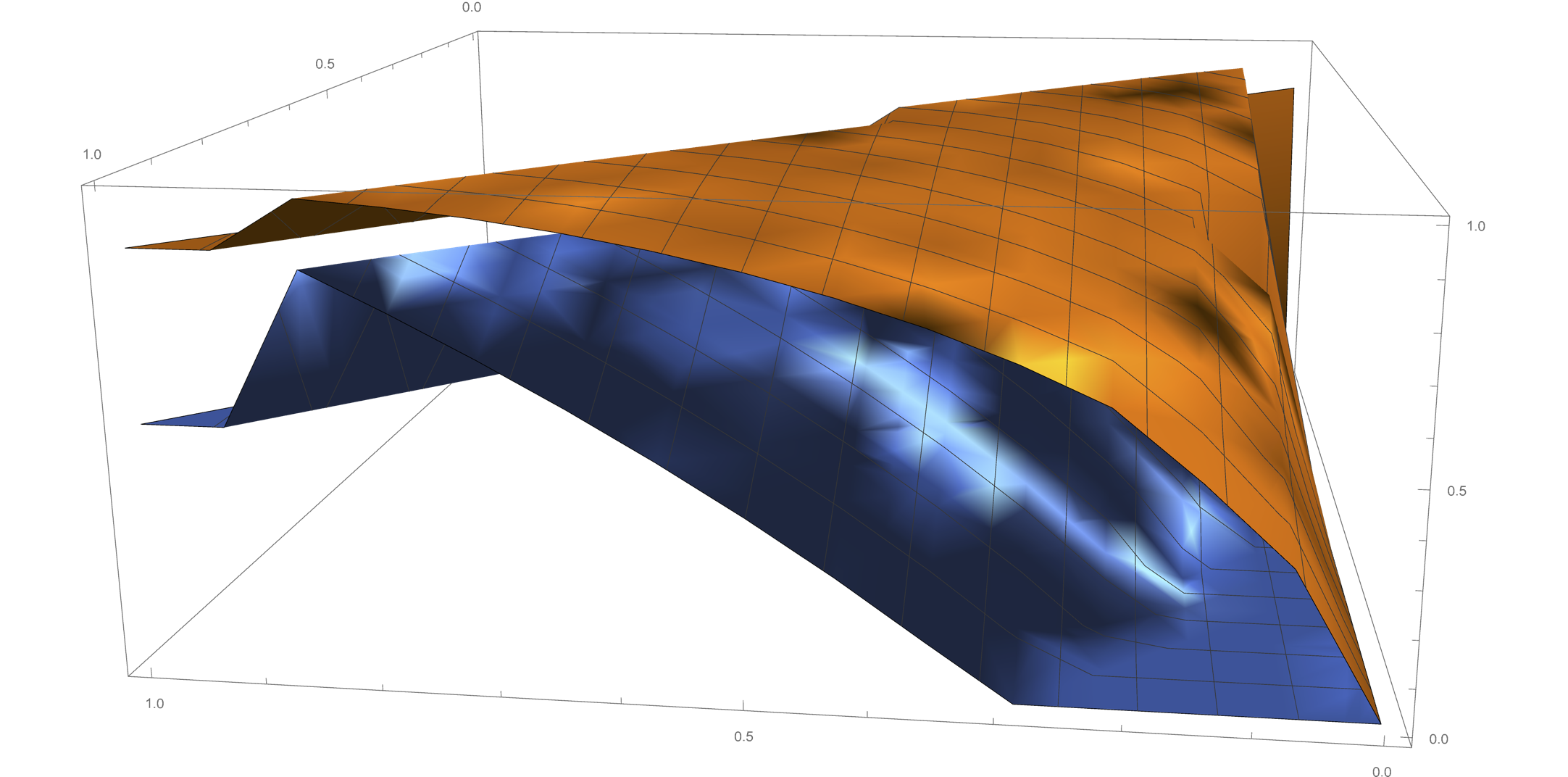}
    \caption{Optimal values of $w_{11}$ (orange surface) and $w_{10}$ (blue surface). $x$-axis: $c(a_0)$. $y$-axis: $p(a_0)$. $z$-axis: values.}
    \label{fig:optJPE}
\end{figure}

Figure \ref{fig:optJPE} depicts optimal wages $w_{11}>w_{10}\geq 0$ found by numerical optimization in Mathematica. In particular, I use the closed-form expression for the principal's payoff identified in Lemma \ref{JPE_worst} as the objective function and vary the parameters of the targeted action, $a_0$. The region in which the surplus generated by the targeted action, $p(a_0)-c(a_0)$, is large corresponds to the area surrounding the bottom-right vertex of the image box. In this region, the optimal JPE sets $w_{00}=0$ (corresponding to the blue surface) and $w_{11}>0$ (corresponding to the orange surface). Economically, monitoring individual output is of no value to the principal as she optimally bases compensation only on aggregate output. On the other hand, when $p(a_0)-c(a_0)$ is sufficiently small, both $w_{11}$ and $w_{10}$ become positive. Hence, monitoring individual output is of strictly positive value.

\subsection{Incomplete Contracts and Bayesian Uncertainty}\label{app_incompletebayesian}

In this section, I study a simple model in which the principal has Bayesian uncertainty over the set of actions available to the agents. In particular, the agents have two actions, $a_\emptyset$ and $a_0$, available with probability $\mu \in (0,1)$ and three actions, $a_\emptyset$, $a_0$, and $a^*$, available with probability $1-\mu$. $a_\emptyset$ results in success with zero probability at zero cost. $a_0$ results in success with probability $p_0>0$ at cost $c_0 \in (0,p_0)$. $a^*$ results in success with probability $p^* \in (0, p_0)$ at zero cost. The manager contemplates using one of two classes of contracts:
\begin{enumerate}
\item \textbf{Independent Performance Evaluation (IPE):} \\Pay each agent $w \geq 0$ for individual success. Pay each agent $0$ for failure.

\item \textbf{Joint Performance Evaluation (JPE):}\\ Pay each agent a wage $w_0 \geq 0$ for individual success and a team bonus $b>0$ for joint success. Pay each agent $0$ for failure.
\end{enumerate} 
I prove the following result.

 \begin{theorem}\label{incompletebayesian}
 \begin{enumerate}
     \item If $\mu$ is sufficiently small and $p^*$ is sufficiently close to $p_0$, then the optimal IPE yields the principal strictly higher expected profits than any JPE.
     \item If $\mu$ is sufficiently large, then there exists a JPE that yields the principal strictly higher expected profits than the optimal IPE.
 \end{enumerate}
 \end{theorem}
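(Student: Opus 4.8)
The plan is to prove both parts by first pinning down the \emph{optimal IPE} in each parameter regime and then comparing it against a single, explicitly constructed JPE, using the same calibration device as in Example~\ref{simple_ex} and Lemma~\ref{JPEvIPE}. The common first step is the observation that under an IPE wage $w$ the two agents' problems decouple: each agent simply chooses $a \in \arg\max_a\, p(a)w - c(a)$. Hence in the state with actions $\{a_\emptyset,a_0\}$ the agents play $a_0$ iff $w \geq c_0/p_0$ (and $a_\emptyset$ otherwise), while in the state $\{a_\emptyset,a_0,a^*\}$ they play $a_0$ iff $w \geq c_0/(p_0-p^*)$ and $a^*$ otherwise (ties broken toward higher output, which the principal selects). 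The principal's expected profit is therefore a continuous, piecewise-linear, piecewise-decreasing function of $w$ with upward kinks at $w=c_0/p_0$ and $w=c_0/(p_0-p^*)$, so the optimum lies in $\{0,\,c_0/p_0,\,c_0/(p_0-p^*)\}$ and
\[
V^*_{IPE}=2\max\Big\{(1-\mu)p^*,\ \big(1-\tfrac{c_0}{p_0}\big)\big(\mu p_0+(1-\mu)p^*\big),\ \big[\,p_0\big(1-\tfrac{c_0}{p_0-p^*}\big)\big]^{+}\Big\}.
\]

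For Part~1, take $\mu$ small and $p^*$ close to $p_0$, so that $p_0-p^*<c_0$: then the third candidate is $0$ (its wage exceeds $1$), and for $\mu$ small the first candidate beats the second, so the optimal IPE is $w=0$, giving $V^*_{IPE}=2(1-\mu)p^*$ --- the principal pays nothing and free-rides on $a^*$ in the state where it is available. Now fix any JPE $(w_0,b)$ with $b>0$ and split on whether the principal implements $(a_0,a_0)$ in the state $\{a_\emptyset,a_0\}$. If not, then $p_0(w_0+p_0b)<c_0$ makes $a_\emptyset$ dominant there (profit $0$) and, since then $w_0+p_0b<c_0/p_0<c_0/(p_0-p^*)$, makes $a^*$ dominate $a_0$ in the three-action state; the profit there is at most $2p^*(1-w_0-p^*b)<2p^*$ because $b>0$ makes the joint bonus payment strictly positive, so the total is $<(1-\mu)2p^*=V^*_{IPE}$. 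If the principal does implement $(a_0,a_0)$, then $w_0+p_0b\geq c_0/p_0$, which caps the two-action-state profit at $2(p_0-c_0)$ and, via $w_0+p^*b\geq\frac{p^*}{p_0}(w_0+p_0b)\geq\frac{p^*c_0}{p_0^{2}}$, caps the three-action-state profit uniformly below $2p^*$; since that state has weight $1-\mu$ while the bounded gain $2(p_0-c_0)$ has weight $\mu$, for $\mu$ small the total is again $<V^*_{IPE}$. The only bookkeeping is to rule out asymmetric and mixed equilibria improving on these bounds: the two-action game has no asymmetric pure equilibrium, and in the three-action supermodular game any equilibrium lies between the two symmetric extremal equilibria, so the principal's (affine-in-each-$p_i$) payoff is dominated by a symmetric profile --- and when $p^*>p_0-c_0$ the asymmetric profile $(a_0,a^*)$, when it is an equilibrium at all, is never the principal's favorite.

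For Part~2, take $\mu$ close to $1$: the second candidate value tends to $2(p_0-c_0)$ and strictly exceeds the other two (the first tends to $0$; the third is below $2(p_0-c_0)$ because $c_0/(p_0-p^*)>c_0/p_0$), so the optimal IPE is $w^*:=c_0/p_0$, implementing $a_0$ in the two-action state and $a^*$ in the three-action state. Calibrate a JPE to $w^*$ by setting $w_0=w^*-\epsilon$ and $b=\epsilon/p_0$ with $\epsilon\in(0,w^*)$, so that $w_0+p_0b=w^*$, i.e.\ an agent's expected wage conditional on her own success against an opponent playing $a_0$ equals $w^*$. In the two-action state $(a_0,a_0)$ is then an equilibrium (the agent is exactly indifferent with $a_\emptyset$, as under the IPE) and is the principal's preferred one, so her profit there equals the IPE profit $2(p_0-c_0)$. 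In the three-action state, $w_0>0$ makes $a^*$ strictly dominate $a_\emptyset$, and $w_0+p^*b=w^*-\epsilon(p_0-p^*)/p_0<c_0/(p_0-p^*)$ makes $a^*$ strictly beat $a_0$ against $a^*$, so $(a^*,a^*)$ is the unique equilibrium --- exactly as under the IPE --- but the expected wage conditional on success has dropped to $w_0+p^*b$, raising the profit there to $2p^*\big(1-w^*+\epsilon(p_0-p^*)/p_0\big)>2p^*(1-w^*)$. Hence the JPE matches the optimal IPE in the two-action state and strictly outperforms it in the three-action state, so it strictly outperforms it overall, proving Part~2.

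The main obstacle is the equilibrium accounting in Part~1: showing that for \emph{every} JPE the principal's profit --- maximized over the (possibly multiple, asymmetric, or mixed) equilibria of each state game --- is bounded below $V^*_{IPE}$ by a gap uniform in $(w_0,b)$, so that one threshold on $\mu$ suffices. This is handled by the supermodular lattice structure (asymmetric/mixed equilibria are sandwiched between the symmetric extremal ones) together with the elementary bounds $w_0+p^*b\geq p^*c_0/p_0^{2}$ and $2(p_0-c_0)<2p^*$ used above; Part~2 is comparatively routine once the optimal IPE is identified, since the calibrated JPE reproduces the IPE's equilibria exactly.
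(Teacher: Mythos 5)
Your proof is correct and follows essentially the same approach as the paper's: identify the three candidate optimal IPEs, observe which one dominates in each $\mu$-regime, and then compare to JPEs via the calibration/rent-extraction logic. Your Part 2 is identical to the paper's: calibrate $(w_0,b)$ so that $w_0+p_0b=w^*$, note that the calibrated JPE replicates the IPE's implementation in both states, and exploit $w_0+p^*b<w^*$ in the three-action state. Your Part 1, however, is noticeably more careful than the paper's. The paper asserts that for a given JPE the per-agent payoff is ``no larger than $(1-\mu)(p^*(1-(w_0+p^*b)))$, when $\mu$ is sufficiently small,'' which, read literally, ignores the nonnegative state-1 contribution and leaves open whether the $\mu$-threshold can be taken uniformly over $(w_0,b)$. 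You close this gap cleanly by splitting on whether $(a_0,a_0)$ is implemented in the two-action state: in the non-implementation case the state-1 profit is exactly zero and the conclusion is immediate for every $\mu$; in the implementation case the constraint $w_0+p_0b\geq c_0/p_0$ gives the $(w_0,b)$-free lower bound $w_0+p^*b\geq p^*c_0/p_0^2$ on the state-2 wage, yielding a gap $\delta>0$ depending only on $(p_0,c_0,p^*)$ and hence a uniform $\mu$-threshold. You also explicitly dispose of asymmetric and mixed equilibria, which the paper silently omits; note for your own records that for any genuine JPE ($b>0$) the asymmetric profile $(a_0,a^*)$ is never an equilibrium (the two best-response conditions force $b=0$), so the lattice-sandwiching observation, while helpful for intuition, is slightly stronger than what you actually need there. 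In short, same strategy, but your write-up makes explicit two points --- uniformity in $\mu$ and equilibrium multiplicity --- that the paper's proof treats informally.
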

 
 \begin{proof}

 I make some preliminary observations about the optimal IPE. Observe that the optimal IPE that implements $a_\emptyset$  when the action set is $\{a_\emptyset, a_0\}$ and $a^*$ when the action set is $\{a^*, a_\emptyset, a_0\}$ is the zero contract. The principal obtains an expected payoff per agent of
 \[ (1-\mu) p^*. \] The optimal IPE implementing $a_0$ when the action set is $\{a_\emptyset, a_0\}$ and $a^*$ when the action set is $\{a^*, a_\emptyset, a_0\}$ is
     \[ w^*= \frac{c_0}{p_0}.\]
     The principal obtains an expected payoff per agent of
 \[ (\mu p_0+(1-\mu) p^*)(1- \frac{c_0}{p_0}). \] 
     Finally, the optimal IPE always implementing $a_0$ is
     \[ \hat{w}= \frac{c_0}{p_0-p^*}.\] 
    The principal obtains an expected payoff per agent of
 \[  p_0 (1- \frac{c_0}{p_0-p^*}). \] 
Any other implementation is either infeasible or suboptimal. I now separately consider the cases in which $\mu$ is small and $\mu$ is large to establish the results.

 \begin{enumerate}
     \item If $\mu$ is sufficiently small and $p^*$ is sufficiently close to $p_0$, then
     \[ (1-\mu) p^* > \max\{  \left(\mu p_0 + (1-\mu) p^* \right) (1-\frac{c_0}{p_0}), p_0 (1-\frac{c_0}{p_0-p^*}) \}. \]
     Hence, the optimal IPE puts $w^*=0$, yielding the principal a per-agent payoff of
          \[ (1-\mu) p^*. \]
      On the other hand, for a given JPE, $(w_0, b)$, the principal obtains a per-agent payoff no larger than
      \[ (1-\mu) (p^*(1-(w_0+p^* b))),\]
      when $\mu$ is sufficiently small. Because $p^* b>0$, this means the principal can do no better than the optimal IPE.
     \item 
     
    If $\mu$ is sufficiently large, then
     \[ \left(\mu p_0 + (1-\mu) p^* \right) (1-\frac{c_0}{p_0})> \max\{ p_0 (1-\frac{c_0}{p_0-p^*}) , (1-\mu) p^* \}. \] Hence, in these cases, $w^*=\frac{c_0}{p_0}$ is the optimal IPE. Now, consider a calibrated JPE with $w_0 \in (0,w^*)$ and $b=\frac{w^*-w_0}{p_0}>0$. The principal's per-agent payoff from this contract is 
     \[ \mu \left( p_0 (1-w^*) \right)+ (1-\mu) \left( p^*(1- (w_0+p^* b)) \right) > \] \[\mu \left( p_0 (1-w^*) \right)+ (1-\mu) \left( p^*(1- (w_0+p_0 b)) \right)= \underbrace{(\mu p_0+(1-\mu) p^*)(1- \frac{c_0}{p_0})}_\text{Payoff $w^*$}.   \]
     Hence, the constructed JPE strictly outperforms the optimal IPE.

 \end{enumerate}
 \end{proof}

\subsection{Discriminatory Contracts}\label{discrimination}

An \textbf{asymmetric (discriminatory) contract} is a quadruple $w^i=(w^i_{11}, w^i_{10},w^i_{01},w^i_{00}) \in \mathbb{R}^4_+$ for each agent $i=1,2$, where the first index of each wage indicates agent $i$'s success or failure and the second indicates agent $j$'s success or failure.  It is an \textbf{independent performance evaluation (IPE)} if $w^i_{y1} = w^i_{y0}$ for each agent $i=1,2$ and success or failure $y \in \{0,1\}$.

 Recall that the analysis of the optimal symmetric contract yields
\begin{equation}
	V(w^*):=\underset{w_{11}>w_{10} \geq 0}{\max}~~ \min\{1-w_{11}, \bar{p} \left[ \bar{p} (1-w_{11})+ (1-\bar{p}) (1- w_{10}) \right] \},
	\notag \end{equation}
where $\bar{p}$ is the solution to the initial value problem in the statement of Lemma \ref{JPE_worst}. The worst-case payoff from a general IPE can be identified as the value of an appropriately defined max-min problem:

\begin{equation}
    \begin{aligned}
     V^{**}_{IPE} := \quad  \underset{w_1 \geq w_2 \geq 0}{\max} \quad &\underset{a^*_1, a^*_2}{\min} \quad \left[ p(a^*_1) (1-w_1)+p(a^*_2)(1-w_2) \right] \\
     &\text{subject to}\\
     & p(a^*_1) w_1-c(a^*_1) \geq \underset{a_0 \in A^0 \cup \{a^*_1, a^*_2\}}{\max} \left[ p(a_0) w_1-c(a_0) \right]\\
     & p(a^*_2) w_2-c(a^*_2) \geq \underset{a_0 \in A^0 \cup \{a^*_1, a^*_2\}}{\max} \left[ p(a_0) w_2-c(a_0) \right],\\
    \end{aligned} \notag
\end{equation}
where $w_1$ is the wage agent 1 receives conditional upon individual success and $w_2$ is the corresponding wage for agent 2 (it is optimal to pay each agent zero for individual failure). The constraints in the minimization problem ensure that each agent $i$ has an incentive to take a worst-case unknown action $a^*_i$. No other constraints are required since one agent's optimal action is unaffected by the chosen action of the other. 

In the running example in which there is a single known action, $a_0$, with $p(a_0)=1$ and $c(a_0)=\frac{1}{4}$, the optimal wages are $w^*_{11}=\frac{2}{3}$ and $w^*_{10}=w^*_{01}=w^*_{00}=0$ yielding the principal a worst-case payoff of $V(w^*)=\frac{1}{3}$. Figure \ref{asymmetric_JPE_1} shows that, in this case, $V^{**}_{IPE}$ lies below $\frac{1}{3}$. Figure \ref{asymmetric_JPE_2} shows, however, that if $c(a_0)$ is increased to $\frac{3}{4}$, then there exist wages $w_1 \neq w_2$ that yield the principal a strictly higher worst-case payoff than under the optimal JPE, i.e. $V^{**}_{IPE}>V(w^*)$. The optimality of discrimination thus depends on the cost of effort of each agent in the principal's target action profile.

\begin{figure}
    \centering
    \includegraphics[scale=0.4]{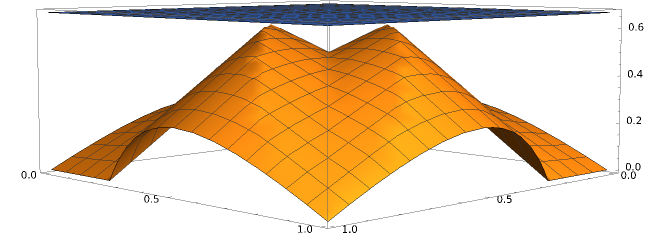}
    \caption{The orange surface represents the principal's worst-case payoff on the z-axis as discriminatory individual wages $w_1$ and $w_2$ vary. The blue surface plots the principal's worst-case payoff under the optimal nondiscriminatory JPE. Parameters: $p(a_0)=1$ and $c(a_0)=\frac{1}{4}$.}
    \label{asymmetric_JPE_1}
\end{figure}

\begin{figure}
    \centering
    \includegraphics[scale=0.4]{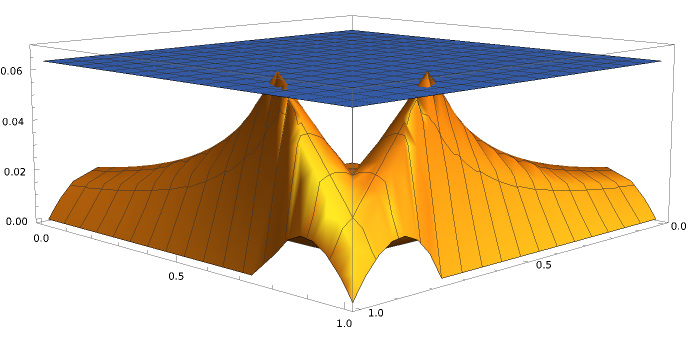}
    \caption{The orange surface represents the principal's worst-case payoff on the z-axis as discriminatory individual wages $w_1$ and $w_2$ vary. The blue surface plots the principal's worst-case payoff under the optimal nondiscriminatory JPE. Parameters: $p(a_0)=1$ and $c(a_0)=\frac{3}{4}$.}
    \label{asymmetric_JPE_2}
\end{figure}

\subsection{Asymmetric Unknown Actions}\label{unknown}

Let $\mathcal{E}(w, A_1, A_2)$ denote the set of Nash equilibria in the game induced by the contract $w$ and action sets $A_1$ and $A_2$. In addition, let
\[ V(w):= \underset{a^*_1, a^*_2 \in \mathbb{R}_+ \times [0,1]}{\min} \quad \underset{\sigma \in \mathcal{E}(w, \{a_0, a^*_1\}, \{a_0, a^*_2\})}{\max} E_{\sigma}[ y_1+y_2-w_{y_1 y_2}-w_{y_2 y_1} ]. \]
Then, the following result holds.
\begin{theorem}
Suppose there is a single known action $a_0$ with $p(a_0)>c(a_0)>0$ and each agent has at most one unknown action. Then, even if unknown actions can differ across agents, there exists a nonaffine JPE, $w$, for which $V(w)> V^*_{IPE}$.
\end{theorem}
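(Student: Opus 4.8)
The plan is to adapt the calibration-and-perturbation argument behind Lemma~\ref{JPEvIPE}. Write $p^*:=p(a_0)$, $c^*:=c(a_0)$, so the optimal IPE wage is $w^*=\sqrt{c^*/p^*}\in(0,1)$, the worst-case IPE success probability is $\bar p:=p^*-c^*/w^*>0$, and $V^*_{IPE}=2(1-w^*)\bar p$. For small $\epsilon>0$ consider the nonaffine JPE $w$ with $w_{00}=w_{01}=0$, $w_{10}=w^*-\epsilon$, $w_{11}=w^*-\epsilon+\epsilon/p^*$. This satisfies the calibration identity $p^*w_{11}+(1-p^*)w_{10}=w^*$, is a genuine JPE ($w_{11}>w_{10}$) for every $\epsilon\in(0,w^*)$, and I claim $V(w)>V^*_{IPE}$ once $\epsilon$ is small enough. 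Fix any unknown actions $(a^*_1,a^*_2)$ and write $p_i:=p(a^*_i)$, $c_i:=c(a^*_i)$, $q_i:=p_iw_{11}+(1-p_i)w_{10}$. It is without loss to take $p_i\le p^*$: a more productive unknown action is a higher element of $\succeq$, hence only raises the maximal equilibrium of the supermodular game $\Gamma(w,\{a_0,a^*_1\},\{a_0,a^*_2\})$ (Observation~\ref{obs_1}, Lemma~\ref{lemma_super}), and for $\epsilon$ small the principal's payoff $(p_i+p_j)-(p_iq_j+p_jq_i)$ is increasing in each $p_k$.

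The argument splits on whether $(a_0,a_0)$ is a Nash equilibrium, i.e.\ whether $(p^*-p_i)w^*\ge c^*-c_i$ holds for both agents. (i) If it does, then since $a_0$ is the maximal action the principal's best equilibrium payoff is at least $2p^*(1-w^*)>2\bar p(1-w^*)=V^*_{IPE}$. (ii) If it fails for some agent -- say agent~$1$ -- then $p_1>p^*-(c^*-c_1)/w^*\ge\bar p$, and by increasing differences $a^*_1$ is strictly dominant for agent~$1$, so agent~$1$ plays $a^*_1$ in every equilibrium. Agent~$2$'s pure best reply to $a^*_1$ is either $a_0$, giving the unique equilibrium $(a^*_1,a_0)$ with payoff $p_1(1-w^*)+p^*(1-q_1)>\bar p(1-w^*)+\bar p(1-w^*)=V^*_{IPE}$ (using $p_1>\bar p$, $p^*>\bar p$, $q_1<w^*$), or $a^*_2$, giving the unique equilibrium $(a^*_1,a^*_2)$; equilibria obtained by mixing at agent~$2$'s indifference point have payoffs between these two. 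Here I use that $V(w,\cdot)$ is the principal's \emph{maximum} over $\mathcal{E}$, so exhibiting a good pure equilibrium suffices, together with Lemmas~\ref{lemma_super}--\ref{lemma_sub} for uniqueness.

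The only genuinely binding case is $(a^*_1,a^*_2)$ in~(ii). Agent~$2$'s best-reply condition is $(p^*-p_2)q_1\ge c^*-c_2$, so $p_2\ge p^*-(c^*-c_2)/q_1$; because $q_1<w^*$ this floor can drop \emph{below} $\bar p$, which is exactly the extra "compounding" shirking that a single unknown action per agent permits (nothing like this arises in Example~\ref{simple_ex}, where the unknown action is common). The principal's payoff in $(a^*_1,a^*_2)$ is $P(\epsilon):=(p_1+p_2)-(p_1q_2+p_2q_1)$; at $\epsilon=0$ the contract degenerates to the IPE $w^*$, the constraints force $p_1\downarrow\bar p$ and $p_2\downarrow\bar p$, and $P(0)=2\bar p(1-w^*)=V^*_{IPE}$. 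So the claim reduces to showing that the worst-over-$(a^*_1,a^*_2)$ value of $P$ has a strictly positive right-derivative at $\epsilon=0$. Nature's best choice sets $c_1=c_2=0$, drives $p_1$ to its floor $\bar p$ (a larger $p_1$ helps the principal; a smaller one restores equilibrium $(a_0,a_0)$), and drives $p_2$ to its floor $p^*-c^*/q_1$ with $q_1=\bar p\,w_{11}+(1-\bar p)w_{10}=w^*-\tfrac{c^*}{p^*w^*}\epsilon$. A first-order Taylor expansion then gives $P(\epsilon)=V^*_{IPE}+p^*w^*(1-w^*)\,\epsilon+O(\epsilon^2)$, mirroring the "efficiency loss versus gain in conditional rents" computation following Lemma~\ref{JPEvIPE} but with one agent pinned at $\bar p$ and the other sliding below it; the net coefficient $p^*w^*(1-w^*)$ is strictly positive. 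Taking the minimum over $(a^*_1,a^*_2)$ of these (continuous) per-profile payoff bounds and choosing $\epsilon$ small delivers $V(w)>V^*_{IPE}$.

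The main obstacle is precisely this asymmetric case: unlike in the main text, where any JPE's worst case is symmetric across agents, here nature can leave one agent at the IPE shirking level $\bar p$ while dragging the other strictly below it, so one must check that the \emph{level} reduction in expected wages (first order in $\epsilon$) still dominates this \emph{enhanced} efficiency loss (also first order in $\epsilon$). I expect the clean route to be the two-variable Taylor computation sketched above, verifying the sign of the resulting coefficient; everything else -- reduction to pure equilibria, the dominance and uniqueness claims, and the boundary case $p_i=p^*$ -- is routine given Observation~\ref{obs_1} and Lemmas~\ref{lemma_super}--\ref{lemma_sub}.
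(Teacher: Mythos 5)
Your proposal is correct and takes essentially the same route as the paper's own proof: you use the identical calibrated JPE ($w_{10}=w^*-\epsilon$, $w_{11}$ chosen so $p^*w_{11}+(1-p^*)w_{10}=w^*$), identify the same asymmetric two-step worst case where iterated dominance drives $p_1\to\bar p$ and $p_2\to p^*-c^*/q_1$ with $q_1=w^*(1-\epsilon)$, and compute the same leading coefficient $p^*w^*(1-w^*)=p^*w^*-c^*>0$ in the expansion at $\epsilon=0$. The paper simply states the worst-case payoff directly as $\min\{2-2w_{11},\,P(\epsilon)\}$ and differentiates to get $\pi'(\epsilon)=p^*w^*-c^*/(1-\epsilon)^2$, which agrees with your Taylor coefficient; your added case analysis (reduction to $p_i\le p^*$, disposal of the equilibria $(a_0,a_0)$ and $(a^*_1,a_0)$) is a more explicit version of what the paper leaves implicit.
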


\begin{proof}
Suppose $w$ is a nonlinear JPE with $w_{00}=w_{01}=0$. Then, $V(w)$ is the minimum of \[2-2 w_{11},\] the principal's payoff when both agents succeed with probability one, and
\[ p_1 p_2(2-2 w_{11})+(p_1 (1-p_2)+p_2(1-p_1))(1-w_{10}),\]
where
\[ p_1:= p(a_0) - \frac{c(a_0)}{(p(a_0) w_{11} + (1 - p(a_0)) w_{10})} \]
and
\[ p_2:= p(a_0) - \frac{c(a_0)}{(p_1 w_{11} + (1 - p_1) w_{10})} .\] The second expression corresponds to the principal's payoff in the limit of a sequence of games in which iterated elimination of strictly dominated strategies first removes $a_0$ for worker 1 and, second, removes $a_0$ for worker 2, leading to a unique Nash equilibrium in which worker 2 is even less productive than in the symmetric worst-case limit.

I establish the existence of a calibrated JPE, $w$, for which $V(w) > V^*_{IPE}$. Let $w^*= \sqrt{c(a_0)/p(a_0)}$ be the optimal IPE. Put $w_{10}=w^*-\epsilon$, for $\epsilon>0$, and
\[ w_{11}= \frac{w^*-(1-p(a_0)) w_{10}}{p(a_0)}. \]
It suffices to show that the right-derivative of profits with respect to $\epsilon$ evaluated at zero is strictly positive to establish the existence of a nonlinear JPE that outperforms the best IPE. For $\epsilon>0$, the derivative of profits is well-defined and equals
\[  p(a_0) w^*- \frac{c(a_0)}{(1-\epsilon)^2}. \]
Hence, the right-derivative evaluated at zero is strictly positive whenever
\[p(a_0) w^*- c(a_0)>0 \iff p(a_0)>c(a_0),\] which always holds.
\end{proof}

\subsection{Pessimistic Equilibrium Selection}\label{selection}

Denote the set of (weakly) Pareto Efficient Nash equilibria by $\mathcal{E}_{P}(w,A)$. In contrast to the model analyzed in the main text, let the principal's expected payoff given a contract $w$ and action set $A \supseteq A^0$ be given by
\[ V(w,A) := \underset{ \sigma \in \mathcal{E}_P(w,A)}{\min} E_{\sigma}[ y_1+y_2-w_{y_1 y_2}-w_{y_2 y_1} ].\]
Notice that the principal assumes the agents will play her least-preferred equilibrium. As before, the principal's worst case payoff from a contract $w$ is
\[ V(w):= \underset{A \supseteq A^0}{\inf}~V(w,A). \]  

In analyzing the nature of the solution to the principal's problem, I will need one additional definition. If $\Gamma(w,A)$ is a supermodular game and $U_i(a_i,a_j; w)$ is strictly increasing in $p(a_j)$ when $p(a_i)>0$, then $\Gamma(w,A)$ is said to exhibit \textbf{strictly positive spillovers}. The following result has been previously established in the literature.

\begin{lemma}[\cite{Vives_1990}, \cite{MilgromRoberts1990}]\label{superresult2}
Suppose $\bar{a}$ ($\underline{a}$) is the limit found by iterating $\overline{BR}$ ($\underline{BR}$) starting from $a_{\max}$ ($a_{\min}$). If $\Gamma(w,A)$ is supermodular, then it has a maximal Nash equilibrium $(\bar{a}, \bar{a})$ and a minimal Nash equilibrium $(\underline{a},\underline{a})$; any other equilibrium $(a_i,a_j)$ must satisfy $\bar{a} \succeq a_i \succeq \underline{a}$ and $\bar{a} \succeq a_j \succeq \underline{a}$. If, in addition, $\Gamma(w,A)$ exhibits strictly positive spillovers, then $(\bar{a},\bar{a})$ is the unique Pareto Efficient Nash equilibrium.
\end{lemma}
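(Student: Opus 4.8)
The first two sentences are word-for-word Lemma~\ref{lemma_super}, so I would simply invoke that result; a self-contained derivation is the usual one: supermodularity of $\Gamma(w,A)$ is increasing differences of $U_i$ in $(a_i,a_j)$, so by Topkis's theorem $\overline{BR}$ and $\underline{BR}$ are well-defined and $\succeq$-nondecreasing on the finite lattice $(A,\succeq)$; iterating $\overline{BR}$ from $a_{\max}$ gives a $\succeq$-nonincreasing sequence that stabilizes at a fixed point $\bar a$, so by symmetry $(\bar a,\bar a)$ is an equilibrium (and $(\underline a,\underline a)$ likewise from $a_{\min}$), and a routine induction squeezes any equilibrium $(a_i,a_j)$ between the $\overline{BR}$-iterates from $a_{\max}$ and the $\underline{BR}$-iterates from $a_{\min}$. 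All of the new content is therefore in the last sentence.

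For that, I would first record that $U_i(a_i,a_j;w)$ depends on $a_j$ only through $p(a_j)$, that it is weakly nondecreasing in $p(a_j)$ for every fixed $a_i$ (and constant, equal to $-c(a_i)$, once $p(a_i)=0$ in the relevant class $w_{00}=w_{01}=0$), and that strict positive spillovers is precisely the hypothesis that this dependence is strictly increasing whenever $p(a_i)>0$. The plan is: (a) show $(\bar a,\bar a)$ weakly Pareto dominates every Nash equilibrium, so it is Pareto efficient; (b) show the domination is strict for at least one agent whenever the equilibrium differs from $(\bar a,\bar a)$, so no other equilibrium is Pareto efficient. For (a), take any equilibrium $(a_i,a_j)$: since $\bar a$ is a best reply to $\bar a$, and since $\bar a\succeq a_j$ (by the bounds of the first part) gives $p(\bar a)\ge p(a_j)$, monotonicity yields
\[
U_i(\bar a,\bar a;w)\ \ge\ U_i(a_i,\bar a;w)\ \ge\ U_i(a_i,a_j;w),
\]
and the symmetric chain holds for agent $j$ using $p(\bar a)\ge p(a_i)$.

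For (b), suppose $(a_i,a_j)\ne(\bar a,\bar a)$; without loss $a_i\prec\bar a$. If $p(a_i)<p(\bar a)$ and $p(a_j)>0$, then $U_j(\bar a,\bar a;w)\ge U_j(\bar a,a_j;w)>U_j(a_i,a_j;w)$, the last inequality being strict by strict positive spillovers. If $p(a_i)=p(\bar a)$, then $a_i\prec\bar a$ forces $c(a_i)>c(\bar a)$, and agent $i$ strictly gains by playing $\bar a$ instead of $a_i$ against $\bar a$ --- the success-probability part of the payoff is unchanged while the effort cost drops --- so $U_i(\bar a,\bar a;w)>U_i(a_i,\bar a;w)\ge U_i(a_i,a_j;w)$. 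The remaining configuration is $p(a_i)<p(\bar a)$ with $p(a_j)=0$: if $p(a_i)>0$, agent $i$'s chain is strict at its second inequality since $p(\bar a)>0=p(a_j)$; if $p(a_i)=0$ too, then every equilibrium action has zero success probability, so each agent $i$ gets exactly $-c(a_i)$ and $\bar a$ --- the least-cost zero-probability action by definition of $\succeq$ --- strictly improves on any other such action. The degenerate case $p(\bar a)=0$ is subsumed (it forces all equilibria into that last type), and mixed equilibria follow from the same comparisons applied to expected payoffs, using that every action in the support of an equilibrium strategy survives iterated elimination of dominated strategies and hence is $\preceq\bar a$.

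The step I expect to be the main obstacle is part (b): the classical Vives/Milgrom--Roberts argument delivers only \emph{weak} Pareto domination by the largest equilibrium, and upgrading this to strict domination --- hence to \emph{uniqueness} of the Pareto efficient equilibrium --- forces one to lean on both the strictness in the positive-spillovers hypothesis and the cost-based tie-breaking in the order $\succeq$, while separately handling the degenerate cases in which some equilibrium action (or $\bar a$ itself) has zero success probability and the case of mixed-strategy equilibria.
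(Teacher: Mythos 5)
The paper does not prove this lemma; it is cited to \cite{Vives_1990} and \cite{MilgromRoberts1990} and applied in Online Appendix G.5 to JPE contracts with $w_{00}=w_{01}=0$. Your treatment of the first two sentences is the standard Topkis--Tarski argument (and simply re-states Lemma~\ref{lemma_super}), and you correctly flag that the last sentence is a \emph{strengthening} of the classical result: Theorem 7 of Milgrom--Roberts gives that $(\bar a,\bar a)$ weakly Pareto dominates every equilibrium, not that it is the \emph{unique} Pareto efficient one. Your part (a) and the cases $p(a_i)<p(\bar a),\,p(a_j)>0$; $p(a_i)=p(\bar a),\,c(a_i)>c(\bar a)$; and $p(a_i)>0,\,p(a_j)=0$ in part (b) are correct.

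The gap is in the subcase $p(a_i)=p(a_j)=0$ with $p(\bar a)>0$. You assert that ``every equilibrium action has zero success probability, so $\bar a$ is the least-cost zero-probability action,'' but that only describes the degenerate case $p(\bar a)=0$; when $p(\bar a)>0$ the maximal equilibrium is productive while the equilibrium $(a_i,a_j)$ under inspection is not. In that configuration an agent choosing a zero-probability action earns $-c(a_i)$ no matter what the opponent does (given $w_{00}=w_{01}=0$), so the spillover step in your chain is an equality, $U_i(a_i,\bar a)=-c(a_i)=U_i(a_i,a_j)$, and the only possible source of strictness is the best-response inequality $U_i(\bar a,\bar a)\ge U_i(a_i,\bar a)$. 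Nothing in supermodularity or strictly positive spillovers forbids $a_i$ from being an alternative, non-maximal best response to $\bar a$: it is enough that $c(\bar a)-c(a_i)=p(\bar a)\bigl(p(\bar a)w_{11}+(1-p(\bar a))w_{10}\bigr)$ while $c(\bar a)-c(a_i)\ge p(\bar a)w_{10}$, a consistent knife-edge configuration. If that tie occurs for both agents, $(a_i,a_j)$ is not strictly dominated by $(\bar a,\bar a)$ and is therefore also Pareto efficient, contradicting uniqueness. To close this you need some additional input --- a genericity or no-indifference hypothesis, or an explicit argument ruling $a_i$ out of the best-response set $\overline{BR}(\bar a)$ (e.g., via the costly-known-actions restriction and the structure of $A\supseteq A^0$) --- none of which your proof currently supplies, and none of which is obviously implied by the lemma's stated hypotheses alone.
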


Notice that, if $w$ is a JPE for which $w_{00}=w_{01}=0$ and $A \supseteq A^0$, then $\Gamma(w,A)$ is a supermodular game exhibiting strictly positive spillovers. Hence, the principal assumes agents will play the maximal equilibrium in any game the agents play. I use this observation to establish the following result.

\begin{theorem}
Under pessimistic equilibrium selection, any worst-case optimal contract must be nonlinear and cannot be an IPE.
\end{theorem}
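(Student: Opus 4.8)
The plan is to exhibit a single calibrated JPE contract that strictly dominates every affine contract under pessimistic selection; since every IPE and every (Dai--Toikka) linear contract is affine, this establishes both assertions at once. Throughout, write $V$ for the pessimistic worst-case payoff defined just above the theorem and $\bar V$ for the baseline (optimistic) worst-case payoff of Section~\ref{principal_problem}. The first observation is that, for every finite $A\supseteq A^0$, $\mathcal E_P(w,A)$ is a nonempty closed subset of $\mathcal E(w,A)$ (the weakly Pareto frontier is nonempty and closed since $A$ is finite and payoffs are continuous), so $\min_{\sigma\in\mathcal E_P(w,A)}E_\sigma[\cdot]\le\max_{\sigma\in\mathcal E(w,A)}E_\sigma[\cdot]$; taking infima over $A$, I get $V(w)\le\bar V(w)$ for \emph{every} contract $w$. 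In particular Lemma~\ref{sub_linear} --- a statement about the baseline model, hence still valid --- gives $V(w)\le\bar V(w)\le V^*_{IPE}$ for every affine $w$.

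Next I would take $\hat w$ to be the calibrated JPE produced by Lemma~\ref{JPEvIPE}: it satisfies $\hat w_{00}=\hat w_{01}=0$ and $\hat w_{11}>\hat w_{10}\ge0$, and from its proof $\hat w_{10}=w^*-\epsilon$, $\hat w_{11}-\hat w_{10}=\epsilon/p^*$, so for $\epsilon>0$ small one has $\hat w_{11}\le\tfrac{1}{2}(1+\hat w_{10})$. For any finite $A\supseteq A^0$: by Observation~\ref{obs_1} the game $\Gamma(\hat w,A)$ is supermodular, and since $U_i(a_i,a_j;\hat w)=p(a_i)\big[\hat w_{10}+p(a_j)(\hat w_{11}-\hat w_{10})\big]-c(a_i)$ is strictly increasing in $p(a_j)$ whenever $p(a_i)>0$, it has strictly positive spillovers. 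Lemma~\ref{superresult2} then forces $\mathcal E_P(\hat w,A)=\{(\bar a,\bar a)\}$, the maximal equilibrium, so $V(\hat w,A)=\pi(p(\bar a),p(\bar a))$, where $\pi$ is the principal's payoff function from the proof of Lemma~\ref{JPE_worst}. Because $\hat w_{10}<1$ and $\hat w_{11}\le\tfrac{1}{2}(1+\hat w_{10})$ we are in case~(ii) of that proof, where $\pi(p_i,p_j)$ is weakly increasing in each coordinate on $[0,1]^2$; as $\pi$ is also bilinear in $(p_i,p_j)$ and Lemma~\ref{lemma_super} bounds every Nash equilibrium action above by $\bar a$, any $\sigma\in\mathcal E(\hat w,A)$ has $E_\sigma[\pi]=\pi(E_\sigma[p_i],E_\sigma[p_j])\le\pi(p(\bar a),p(\bar a))$. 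Hence $\bar V(\hat w,A)=\pi(p(\bar a),p(\bar a))=V(\hat w,A)$, and taking infima over $A$ gives $V(\hat w)=\bar V(\hat w)>V^*_{IPE}$ by Lemma~\ref{JPEvIPE}.

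Putting the two pieces together, $V(w)\le V^*_{IPE}<V(\hat w)$ for every affine $w$, so no affine contract attains $\sup_w V(w)$; in particular no IPE and no linear contract can be worst-case optimal, and any worst-case optimal contract must be nonaffine. I expect the delicate point to be the middle step: a priori pessimistic selection could degrade the guarantee of $\hat w$, and the reason it does not is that $\hat w$ is engineered as a small bonus above a benchmark IPE, which keeps every induced game supermodular with strictly positive spillovers and keeps the principal's payoff monotone in the agents' success probabilities --- forcing the maximal equilibrium, the unique Pareto-efficient equilibrium, and the principal's most preferred equilibrium to coincide. The only computation needed is to check that $\hat w$ lies in case~(ii) for small $\epsilon$, which is immediate from $w^*<1$.
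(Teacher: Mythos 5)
Your proof is correct, and it takes a genuinely different (and slicker) route on its first half than the paper's own argument. For the linear/affine part, the paper's proof attacks Dai--Toikka linear contracts directly under pessimistic selection, constructing a specific game $\Gamma(w', A^0\cup\{a^*_\epsilon\})$ and exploiting positive spillovers to identify the worst Pareto-efficient equilibrium; it does \emph{not} attempt to handle all affine contracts, and in fact the remark in Section~\ref{preferred} explicitly flags that the ``shifting'' proof of Lemma~\ref{sub_linear} does not carry over because constant shifts change the Pareto frontier. You bypass this entirely by observing the elementary domination $V_{\text{pess}}(w)\le V_{\text{opt}}(w)$ for every contract and then invoking the \emph{baseline} Lemma~\ref{sub_linear} as a bound, rather than trying to adapt its proof. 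This buys you a strictly stronger conclusion than the paper states: any worst-case optimal contract is \emph{nonaffine}, not merely non-(Dai--Toikka)-linear. For the second half, the verification that $V_{\text{pess}}(\hat w)=V_{\text{opt}}(\hat w)$ for the calibrated JPE matches the paper's logic in substance (Observation~\ref{obs_1} plus Lemma~\ref{superresult2} force the maximal equilibrium to be the unique Pareto-efficient one), but you make the needed monotonicity explicit --- checking that $\hat w$ falls in case~(ii) of Lemma~\ref{JPE_worst}, so that the principal's payoff $\pi$ is increasing on $[0,1]^2$, and using the fact that $\pi$ is affine in each argument to extend the comparison to mixed equilibria --- whereas the paper simply asserts that the proof of Lemma~\ref{JPE_worst} holds ``as written.'' Your more explicit treatment is safer, and in particular correctly restricts attention to case~(ii) so there is no need to worry about the principal preferring a low equilibrium. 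One small nit: the displayed statement of Lemma~\ref{JPEvIPE} in the paper writes $w_{00}=w_{10}=0$, but you correctly read the proof's construction ($w_{10}=w^*-\epsilon>0$, $w_{01}=w_{00}=0$); that is a typo in the paper's lemma statement, not a gap in your argument.
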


\begin{proof}
I first show that, for any linear contract $w$, $V(w) < V^*_{IPE}$. I first argue that any eligible linear contract must have $0<\alpha<1$. Towards contradiction, suppose $\alpha=0$. Then, the assumption of costly known actions ensures that $w$ cannot guarantee the principal more than zero in the game $\Gamma(w, A^0 \cup \{a_\emptyset\})$, where $p(a_\emptyset)=c(a_\emptyset)=0$. (In this game, each agent has a strict incentive to choose $a_\emptyset$ yielding the principal a payoff of zero.) If $\alpha \geq 1$, then the assumption of costly known actions ensures that $w$ cannot guarantee the principal more than zero in the game $\Gamma(w, A^0 \cup \{a_{\delta_1}\})$, where $p(a_{\delta_1})=1$ and $c(a_{\delta_1})=0$. (In this game, each agent has a strict incentive to choose $a_{\delta_1}$, yielding the principal a payoff of $2-2\alpha \leq 0$.)

Let $\alpha \in (0,1)$ parameterize the eligible linear contract $w$. Let $a_0 \in A^0$ be each agent's maximal equilibrium action when $A=A^0$ (since any linear contract is a JPE, such an action exists by Lemma \ref{superresult2}).  In the game $\Gamma(w', A^0 \cup \{a^*_\epsilon\})$, where $p(a^*_\epsilon)= p(a_0)- \frac{c(a_0)}{\alpha}+\epsilon$, $c(a^*_\epsilon)=0$, and $\epsilon>0$ is small, $(a^*,a^*)$ is the maximal Nash equilibrium. As $\epsilon$ approaches $0$, the principal's payoff in this equilibrium approaches
\begin{equation}
\begin{aligned}
2 \left[  \underbrace{\left(p(a_0)- \frac{c(a_0)}{\alpha} \right)}_\text{$>0$ by eligibility of $w$} \left( \left(p(a_0)- \frac{c(a_0)}{\alpha} \right) (1-\alpha)+ (1-\left(p(a_0)- \frac{c(a_0)}{\alpha} \right)) (-\alpha) \right)   \right] \\<  2 \left[  \left(p(a_0)- \frac{c(a_0)}{\alpha} \right) \left(1-\alpha \right)  \right] \\
\leq V^*_{IPE}.
\end{aligned} \notag
\end{equation}
Hence, \[V(w) \leq \quad \underset{\epsilon>0}{\inf} \quad   V(w, A^0 \cup \{a^*_\epsilon\})< V^*_{IPE}.
\]
Now, observe that the proof of Lemma \ref{JPE_worst} in Section \ref{JPEworst_proof} holds as written under worst-case Pareto Efficient Nash equilibrium selection.
Hence, there exists a JPE $w$ with $w_{00}=w_{01}=0$ for which $V(w)> V^*_{IPE}$. It follows that no worst-case optimal contract can be an IPE.
\end{proof}

 I remark that the result and proof extend to the case in which there are $n \geq 2$ agents and the set of output levels is a compact set, $Y$, with $\min(Y)=0<\max(Y)$.

\clearpage
\end{appendix}

\end{document}